\numberwithin{equation}{section}
\newcommand{\be}{\begin{equation}}
\newcommand{\ee}{\end{equation}}
\newcommand{\bea}{\begin{eqnarray}}
\newcommand{\eea}{\end{eqnarray}}
\newcommand{\non}{\nonumber}
\newcommand{\id}{\mathbb{I}}
\newcommand{\C}{\mathbb{C}}
\newcommand{\tr}{\mathop{\rm tr}\nolimits}
\newcommand{\diag}{\mathop{\rm diag}\nolimits}
\newcommand{\half}{\frac{1}{2}}
\newtheorem{lemma}{Lemma}
\newtheorem*{prop*}{Proposition}
\newtheorem{prop}{Proposition}
\newtheorem*{corollary*}{Corollary}
\newcommand\blfootnote[1]{%
  \begingroup
  \renewcommand\thefootnote{}\footnote{#1}%
  \addtocounter{footnote}{-1}%
  \endgroup
}
\begin{document}

\begin{titlepage}
\strut\hfill UMTG--294
\vspace{.5in}
\begin{center}

\LARGE Surveying the quantum group symmetries\\
of integrable open spin chains\\
%using gauge transformations\\
\vspace{1in}
\large 
Rafael I. Nepomechie \footnote{
Physics Department,
P.O. Box 248046, University of Miami, Coral Gables, FL 33124 USA}
and Ana L. Retore ${}^{1,}$\footnote{
Instituto de F\'{i}sica Te\'{o}rica-UNESP, Rua 
Dr. Bento Teobaldo Ferraz 271, Bloco II 01140-070, S\~{a}o Paulo, Brazil}\\[0.8in]
\end{center}

\vspace{.5in}

\begin{abstract}
Using anisotropic R-matrices associated with affine Lie algebras $\hat g$
(specifically, $A_{2n}^{(2)}$, $A_{2n-1}^{(2)}$, $B_{n}^{(1)}$,
$C_{n}^{(1)}$, $D_{n}^{(1)}$) and suitable 
corresponding K-matrices, we construct families of integrable
open quantum spin chains of finite length, whose transfer matrices are invariant under
the quantum group corresponding to removing one node from the Dynkin
diagram of $\hat g$.  We show that these transfer matrices also have a
duality symmetry (for the cases $C_{n}^{(1)}$ and $D_{n}^{(1)}$) and
additional $Z_{2}$ symmetries that map complex representations to
their conjugates (for the cases $A_{2n-1}^{(2)}$, $B_{n}^{(1)}$ and
$D_{n}^{(1)}$).  A key simplification is achieved by working in a
certain ``unitary'' gauge, in which only the unbroken symmetry
generators appear. The proofs of these symmetries rely on some new properties of the R-matrices.
We use these symmetries to explain the
degeneracies of the transfer matrices. 
\end{abstract}

\blfootnote{e-mail addresses: {\tt nepomechie@miami.edu, retore@ift.unesp.br}}

\end{titlepage}

\setcounter{footnote}{0}

\section{Introduction and summary}\label{sec:intro}

Quantum spin chains have numerous applications.  Being interacting
many-body systems, their spectra are generally difficult to determine
when the number of spins is large.  The simplest anisotropic spin
chains are arguably those that are integrable and have quantum group
(QG) symmetries.  Indeed, integrability can help to determine the spectrum,
and QG symmetry can help to explain the degeneracies and
multiplicities.  The first such example was the
$U_{q}(A_{1})$-invariant open spin-1/2 chain \cite{Pasquier:1989kd,
Kulish:1991np}, whose integrability follows from \cite{Alcaraz:1987uk,
Sklyanin:1988yz}.  Various higher-rank generalizations have been
investigated, see e.g. \cite{Mezincescu:1990ui, Mezincescu:1991rb,
Mezincescu:1991ag, Foerster:1993fp, deVega:1993ae, devega:1994hf, 
GonzalezRuiz:1994tw,
Artz:1994qy, Artz:1995bm, Yung:1994tm, Yung:1995ag, Batchelor:1995gx, 
Doikou:1998ek, Nepomechie:1999jz, Martins:2000xie, Li:2003ucc, 
Li:2004jsb, Kurak:2004ip, Doikou:2005pn,
Li:2005pp, Li:2006mv, Ahmed:2017mqq,
Nepomechie:2017hgw}.\footnote{All of these spin chains are 
open, since closed anisotropic integrable spin chains of finite
length with periodic boundary conditions generally do not have such QG
symmetry.}

We identify here new families of integrable QG-invariant 
spin chains, which include as special cases many of the previously-studied models.
Specifically, we construct integrable open spin chains of finite length
using anisotropic R-matrices associated with several families of affine Lie 
algebras $\hat g$
(namely, $A_{2n}^{(2)}$, $A_{2n-1}^{(2)}$, $B_{n}^{(1)}$, $C_{n}^{(1)}$, 
$D_{n}^{(1)}$) \cite{Jimbo:1985ua, Bazhanov:1986mu, Kuniba:1991yd, 
Costello:2017dso, Costello:2018gyb},
and with corresponding diagonal K-matrices depending on an 
integer $p \in [0, n]$ \cite{Batchelor:1996np, LimaSantos:2002ui, LimaSantos:2003hx,
Malara:2004bi}, whose transfer matrices $t(u,p)$ (\ref{transfer}) are
invariant under the QG corresponding to removing the $p^{th}$ node from the (extended) 
Dynkin diagram of $\hat g$, as summarized in the middle column of Table \ref{table:symmetries}.

\begin{table}[h]
\centering
\begin{tabular}{|c|c|c|}
\hline
$\hat g$ & QG symmetry & Representation at each site\\   
\hline
$A_{2n}^{(2)}$ & $U_{q}(B_{n-p}) \otimes U_{q}(C_{p}) $ & $(2(n-p)+1,1) \oplus (1,2p)$ \\
$A_{2n-1}^{(2)}$ & $U_{q}(C_{n-p}) \otimes U_{q}(D_{p})\quad (p \ne 1)$ 
& $(2(n-p),1) \oplus (1,2p)$ \\
$B_{n}^{(1)}$ & $U_{q}(B_{n-p}) \otimes U_{q}(D_{p}) \quad (n>1, p 
\ne 1)$ & $(2(n-p)+1,1) \oplus (1,2p)$ \\
$C_{n}^{(1)}$ & $U_{q}(C_{n-p}) \otimes U_{q}(C_{p}) $ & $(2(n-p),1) \oplus (1,2p)$ \\
$D_{n}^{(1)}$ & $U_{q}(D_{n-p}) \otimes U_{q}(D_{p}) \quad (n>1, p \ne 1, 
n-1)$ & $(2(n-p),1) \oplus (1,2p)$ \\
\hline
\end{tabular}
\caption{\small QG symmetries of the open-chain transfer matrix, where $p=0, 1, 
\ldots, n$.}\label{table:symmetries}
\end{table}

For $p=0$, the ``right'' (second) factors in Table \ref{table:symmetries} are absent; these cases
were studied long ago \cite{Mezincescu:1990ui, Mezincescu:1991rb, Mezincescu:1991ag,
Artz:1994qy, Artz:1995bm}. For $p=n$, the ``left'' (first) factors in Table \ref{table:symmetries} are absent;
these cases were noticed only recently \cite{Nepomechie:1999jz, Ahmed:2017mqq, 
Nepomechie:2017hgw}. For intermediate values $0 < p < n$, the QG
symmetries are generally given by a tensor product of two factors, as 
shown in Table \ref{table:symmetries}; these cases 
had not been considered until now. Moreover, we prove for all $p \in 
[0, n]$ that the transfer matrices have these QG 
symmetries.\footnote{Such proofs had 
been known, following \cite{Kulish:1991np}, only for the cases with 
$p=0$ \cite{Mezincescu:1991rb}. For the 
cases with $p=n$, the QG symmetry was conjectured for the transfer 
matrices, but was proved only for the corresponding Hamiltonians \cite{Ahmed:2017mqq, 
Nepomechie:2017hgw}.} 

A key role in our proof of the QG symmetry of the transfer matrix is
played by so-called gauge transformations of the R-matrix and
K-matrices.  By transforming to a certain ``unitary'' gauge, the asymptotic monodromy matrix
becomes expressed in terms of only the unbroken symmetry generators, which
then allows us to invoke the powerful machinery of the Quantum 
Inverse Scattering Method (QISM).
The relevance of this gauge transformation can already be seen from the following
observation: the R-matrices of Jimbo \cite{Jimbo:1985ua}, which are in
the so-called homogeneous ``picture'' or ``gradation'' (gauge), have
the symmetries in Table \ref{table:symmetries} with $p=0$ (more
precisely, $\check{R} = {\cal P} R$ commutes with the coproducts of
the generators of the QG); while the gauge-transformed
R-matrices, with the gauge transformation corresponding to $p=n$ (see
Eqs. (\ref{gaugeR}) and (\ref{Bgauge}) below), have instead the symmetries in Table
\ref{table:symmetries} with $p=n$.

For the cases $C_{n}^{(1)}$ and $D_{n}^{(1)}$, the two symmetry
factors in Table \ref{table:symmetries} evidently interchange under $p
\leftrightarrow n-p$.  In fact, we show that the corresponding
transfer matrices are related by ``duality'' transformations 
(\ref{duality}), implying
that their spectra are equal.  For the special case with $n$ even and 
$p=\frac{n}{2}$, the transfer matrix is self-dual (\ref{selfdual}), which gives rise to
degeneracies in the spectrum beyond those expected from QG symmetry.

For the cases that at least one of the symmetry factors in Table
\ref{table:symmetries} is of type $D$ (namely, $A_{2n-1}^{(2)}$,
$B_{n}^{(1)}$ and $D_{n}^{(1)}$), we show that the transfer matrices
have additional $Z_{2}$ symmetries that map complex representations to
their conjugates (\ref{Z2rightprop}), (\ref{Z2leftprop}). 

All of these symmetries are useful for understanding the degeneracies
in the spectrum of the transfer matrix.  In proving these symmetries,
we use various properties of the R-matrices (\ref{RmatpropCD}), (\ref{Z2rightR}),
(\ref{Z2leftR}), which to our knowledge are new, and which may be 
of independent interest.

The outline of this paper is as follows.  The transfer matrix is
introduced in Sec.  \ref{sec:basics}.  The QG symmetry of the transfer
matrix is proved in Sec.  \ref{sec:proof}.  The duality symmetry of
the transfer matrix (for the cases $C_{n}^{(1)}$ and $D_{n}^{(1)}$), 
and the action of duality on the QG generators,
are worked out in Sec.  \ref{sec:duality}. The additional $Z_{2}$ 
symmetries of the transfer matrix (for the cases $A_{2n-1}^{(2)}$,
$B_{n}^{(1)}$ and $D_{n}^{(1)}$), and the action of these symmetries 
on the QG generators, are worked out in Sec.  \ref{sec:Z2}. 
These symmetries are used in  Sec. \ref{sec:degen} to explain the 
degeneracies in the spectrum of 
the transfer matrix for generic values of the anisotropy parameter 
$\eta$. Some interesting remaining open problems are listed in Sec. 
\ref{sec:outlook}. The R-matrices are recalled in Appendix 
\ref{sec:Rmatices}, details about the QG generators are presented 
in Appendix \ref{sec:QG}, and the Hamiltonian is noted 
in Appendix \ref{sec:Hamiltonian}. Proofs of several lemmas are 
outlined in Appendix \ref{sec:proofs}. 

\section{Basics}\label{sec:basics}

We consider an integrable open quantum spin chain with a vector space ${\cal 
V} = \C^{d}$ at each of its $N$ sites, where
\begin{align}
d = \left\{ \begin{array}{ll}
2n+1 & \mbox{ for  } A_{2n}^{(2)}\,, B_{n}^{(1)} \\
2n & \mbox{ for  } A_{2n-1}^{(2)}\,, C_{n}^{(1)} \,, D_{n}^{(1)}  
\end{array} \right. \,, \qquad n = 1, 2, \ldots
\,.
\label{defd}
\end{align}
The Hilbert space (``quantum'' space) of the spin chain is therefore ${\cal V}^{\otimes 
N}$.

\subsection{R-matrix}

The bulk interactions of the spin chain are encoded in the
R-matrix $R(u)$, which maps ${\cal V} \otimes {\cal V}$ to itself,
and satisfies the Yang-Baxter equation (YBE) on ${\cal V} \otimes {\cal V}\otimes {\cal V}$
\be
R_{12}(u - v)\,  R_{13}(u)\, R_{23}(v) = R_{23}(v)\,  R_{13}(u)\, R_{12}(u - v)
\,.  \label{YBE}
\ee
We use the standard notations $R_{12} = R \otimes \id\,, R_{23} = \id \otimes R\,, R_{13} = 
{\cal P}_{23}  R_{12} {\cal P}_{23} = {\cal P}_{12}  R_{23} {\cal 
P}_{12} $, where $\id$ is the identity 
matrix on ${\cal V}$, and ${\cal P}$ is the permutation matrix on ${\cal V} \otimes {\cal V}$
\be
{\cal P}=\sum_{i, j = 1}^d e_{i j}\otimes e_{j i} \,,
\label{permutation}
\ee
where $e_{i j}$ are the $d \times d$ elementary 
matrices with elements $(e_{i j})_{\alpha \beta} = 
\delta_{i, \alpha} \delta_{j, \beta}$. 

We consider here the anisotropic R-matrices (with anisotropy parameter 
$\eta$) corresponding to the following affine Lie algebras 
\footnote{We do not consider here the case $A_{n}^{(1)}$, which does 
not have crossing symmetry; it has been studied in a 
similar context in \cite{deVega:1993ae, devega:1994hf, Doikou:1998ek}.}
\be
\hat g = \{ A_{2n}^{(2)}\,,  A_{2n-1}^{(2)}\,,  B_{n}^{(1)}\,, C_{n}^{(1)}\,, 
D_{n}^{(1)} \}\,.
\label{affinealgebras}
\ee
These R-matrices, which are given by Jimbo \cite{Jimbo:1985ua} 
(except for $A_{2n-1}^{(2)}$, in which case we consider instead  
Kuniba's R-matrix \cite{Kuniba:1991yd}), are in the 
homogeneous picture (gauge).\footnote{Bazhanov's R-matrices  
\cite{Bazhanov:1986mu} are equivalent, but are instead in the principal picture.}
These R-matrices, which can be found in Appendix \ref{sec:Rmatices},
all have the following additional properties: $PT$ symmetry
\be
R_{21}(u) \equiv {\cal P}_{12}\, R_{12}(u)\, {\cal P}_{12} 
= R_{12}^{t_1 t_2}(u) \,,
\label{PT}
\ee
unitarity
\be
R_{12}(u)\ R_{21}(-u) = \zeta(u)\, \id\otimes\id  \,,
\label{unitarity}
\ee
where $\zeta(u)$ is given by
\be
\zeta(u) =\xi(u)\, \xi(-u)\,, \qquad 
\xi(u)=
-2 \,\delta_1\,\sinh(\frac{1}{2}(u +4\eta)) \sinh(\frac{1}{2}(u  +\rho)) \,,
\ee
where $\delta_1$ is given by
\be
\delta_1=\begin{cases}
	i  &  \mbox{ for  } A_{2n}^{(2)}\,, A_{2n-1}^{(2)} \\
	1 &  \mbox{ for  } B_{n}^{(1)}\,, C_{n}^{(1)}\,, D_{n}^{(1)} 
	\,,
\end{cases}
\ee 
and crossing symmetry
\be
R_{12}(u)=V_1\, R_{12}^{t_2}(-u-\rho)\, V_1
= V_2^{t_2}\, R_{12}^{t_1}(-u-\rho)\, V_2^{t_2} \,,
\label{crossing}
\ee
where the crossing parameter $\rho$ is given by
\be
\rho= \left\{ \begin{array}{ll}
-2 \kappa \eta  - i \pi &  \mbox{ for  } A_{2n}^{(2)}\,, A_{2n-1}^{(2)} \\
-2 \kappa \eta & \mbox{ for  } B_{n}^{(1)}\,, C_{n}^{(1)}\,, D_{n}^{(1)}
\end{array} \right. \,,
\label{rho}
\ee
with $\kappa$ defined in (\ref{kappa}).
The crossing matrix $V$ is an antidiagonal matrix given by
\be
V= \delta_2 \sum_{\alpha=1}^{d} \epsilon_{\alpha} e^{(\bar \alpha - 
\bar{\alpha'}) \eta} 
e_{\alpha \alpha'} \,, \qquad V^2 = \id \,,
\ee
where $\delta_2$ is given by
\be
\delta_2 = \left\{ \begin{array}{ll}
1 & \mbox{ for } A_{2n}^{(2)}\,, B_{n}^{(1)}\,,  D_{n}^{(1)}\non \\
i & \mbox{ for } A_{2n-1}^{(2)}\,, C_{n}^{(1)}
\end{array}\right. \,,
\ee
and the other notations are defined in (\ref{epsilonalpha})-(\ref{prime}).
The corresponding matrix $M$ is defined by
\be
M = V^{t}\, V \,,
\label{Mdef}
\ee
and it is given by the diagonal matrix 
\be
M = \delta_2^2\sum_{\alpha=1}^{d} e^{4(\frac{d+1}{2}-\bar \alpha)\eta} e_{\alpha 
\alpha} \,.
\ee

\subsection{K-matrices}

The boundary interactions are encoded in the right and 
left  K-matrices, denoted here by $K^{R}(u)$ and $K^{L}(u)$, 
respectively, which map ${\cal V}$ to itself.\footnote{Following Sklyanin \cite{Sklyanin:1988yz}, 
the right and left K-matrices are usually denoted 
instead by $K^{-}(u)$ and $K^{+}(u)$, respectively. However, we 
adopt a different notation here in order to avoid 
confusion with the $\pm$ used in subsequent sections
to denote the limits $u \rightarrow \pm \infty$.} We choose $K^{R}(u)$ 
to be the diagonal $d \times d$ matrix
\be
K^{R}(u) = K^{R}(u,p) = \diag \big( \underbrace{e^{-u}\,, \ldots\,, 
e^{-u}}_{p}\,, \underbrace{\frac{\gamma e^{u} + 1}{\gamma  + 
e^{u}}\,, \ldots\,, \frac{\gamma e^{u} + 1}{\gamma  + 
e^{u}}}_{d-2p}\,, \underbrace{e^{u}\,, \ldots\,, 
e^{u}}_{p}\big) \,,
\label{KR}
\ee
where $p = 0, 1, \ldots, n$, and 
\begin{align}
\gamma = \left\{ \begin{array}{cl}
\gamma_{0} e^{(4p-2)\eta + \frac{1}{2}\rho}& \mbox{ for  } 
A_{2n-1}^{(2)}\,, B_{n}^{(1)}\,, D_{n}^{(1)}  \\ \\
\gamma_{0} e^{(4p+2)\eta + \frac{1}{2}\rho} & \mbox{ for  } A_{2n}^{(2)}\,, C_{n}^{(1)}  
\end{array} \right. \,, \qquad \gamma_{0} = \pm 1\,,
\label{gamma}
\end{align}
where $\rho$ is the crossing parameter (\ref{rho}). Unless otherwise 
noted, all the results in this paper hold for both values ($\pm 1$) of the  
parameter $\gamma_{0}$. As observed in \cite{Malara:2004bi} (see also \cite{Batchelor:1996np, LimaSantos:2002ui, LimaSantos:2003hx}), 
the matrices (\ref{KR}) are solutions of the boundary Yang-Baxter 
equation (BYBE) on ${\cal V} \otimes {\cal V}$
\cite{Sklyanin:1988yz, Cherednik:1985vs, Ghoshal:1993tm}
\be
R_{12}(u - v)\, K^{R}_1(u)\ R_{21} (u + v)\, K^{R}_2(v)
= K^{R}_2(v)\, R_{12}(u + v)\, K^{R}_1(u)\, R_{21}(u - v)  \,.
\label{BYBEm}
\ee
For $p=0$, we see that $K^{R}(u,p)$ in (\ref{KR}) is proportional to the identity 
matrix,
\be
K^{R}(u,0) \propto \id \,,
\label{KRpzero}
\ee
which is the solution noted in \cite{Mezincescu:1990ui}. We emphasize 
that the solution (\ref{KR}) depends on the bulk anisotropy parameter 
$\eta$ and the discrete boundary parameters $p$ and 
$\gamma_{0}$, but does not have any 
continuous boundary parameters.

For the left K-matrix, we take
\be
K^{L}(u) = K^{L}(u,p) = K^{R}(-u-\rho,p)\, M\,,
\label{KL}
\ee
where $M$ is given by (\ref{Mdef}), which is a solution of the 
corresponding BYBE \cite{Sklyanin:1988yz, Mezincescu:1990uf}
\begin{align}
\lefteqn{R_{12}(-u + v)\, K_1^{L\, t_1}(u)\, M^{-1}_1\, R_{21} (-u -v 
-2\rho)\, M_1\, K_2^{L\, t_2}(v)} \non \\
&  \quad = K^{L\, t_2}_2(v)\, M_1\, R_{12}(-u - v- 2\rho)\, M^{-1}_1\,
K^{L\, t_1}_1(u)\, R_{21}(-u +v)  \,.
\label{BYBEp}
\end{align} 

\subsection{Transfer matrix}

The open-chain transfer matrix, which maps the quantum space 
${\cal V}^{\otimes N}$ to itself, is given by\cite{Sklyanin:1988yz}
\be
t(u,p) = \tr_a K^{L}_{a}(u,p)\, T_a(u)\,  K^{R}_{a}(u,p)\, \widehat{T}_a(u) \,, 
\label{transfer}
\ee
where the single-row monodromy matrices are defined by
\begin{align} 
T_a(u) &= R_{aN}(u)\ R_{a N-1}(u)\ \cdots R_{a1}(u) \,,  \non \\
\widehat{T}_a(u) &= R_{1a}(u)\ \cdots R_{N-1 a}(u)\ R_{Na}(u) \,,  
\label{monodromy}
\end{align}
and the trace in (\ref{transfer}) is over the ``auxiliary'' space, which 
is denoted by $a$. The transfer matrix is engineered to satisfy the 
fundamental commutativity property
\be
\left[ t(u,p) \,, t(v,p) \right] = 0 \hbox{   for all   } u \,, v \,,
\label{commutativity}
\ee
which is the hallmark of integrability. The transfer matrix contains 
the Hamiltonian ($\sim t'(0,p)$, see Appendix \ref{sec:Hamiltonian}) and higher local conserved quantities.

\section{Quantum group symmetry}\label{sec:proof}

We now proceed to show that the transfer matrix (\ref{transfer}) has 
QG symmetry, in accordance with the second column in 
Table  \ref{table:symmetries}.

A key step of our argument is to use a gauge transformation
to bring the right K-matrix ``as close as possible'' to the identity 
matrix. By transforming to this ``unitary'' gauge, the asymptotic (single-row) monodromy matrix
becomes expressed in terms of only the unbroken symmetry generators, which 
then allows us to bring the powerful QISM machinery to bear on the problem.
To this end, we set (see e.g. \cite{Jimbo:1985ua})
\be
\tilde{R}_{12}(u,p) = B_{1}(u,p)\, R_{12}(u)\, B_{1}(-u,p) = 
B_{2}(-u,p)\, R_{12}(u)\, B_{2}(u,p)\,,
\label{gaugeR}
\ee
and \cite{Mezincescu:1990uf}
\begin{align}
\tilde{K}^{R}(u,p) &= B(u,p)\, K^{R}(u,p)\, B(u,p)\,, \non \\
\tilde{K}^{L}(u,p) &= B(-u,p)\, K^{L}(u,p)\, B(-u,p)\,,
\label{gaugeK}
\end{align}
where $B(u,p)$ is a diagonal matrix that maps ${\cal V}$ to itself, 
which we choose as follows
\be
B(u,p) =  \diag \big( \underbrace{e^{\frac{u}{2}}\,, \ldots\,, 
e^{\frac{u}{2}}}_{p}\,, \underbrace{1\,, \ldots\,, 1}_{d-2p}\,, 
\underbrace{e^{-\frac{u}{2}}\,, \ldots\,, 
e^{-\frac{u}{2}}}_{p}\big) \,.
\label{Bgauge}
\ee
Indeed, this gauge transformation brings $K^{R}(u,p)$ (\ref{KR}) to a form with mostly 1's 
on the diagonal
\be
\tilde{K}^{R}(u,p) = \diag \big( \underbrace{1\,, \ldots\,, 
1}_{p}\,, \underbrace{\frac{\gamma e^{u} + 1}{\gamma  + 
e^{u}}\,, \ldots\,, \frac{\gamma e^{u} + 1}{\gamma  + 
e^{u}}}_{d-2p}\,, \underbrace{1\,, \ldots\,, 
1}_{p}\big) \,.
\label{mostlyones}
\ee
For $p=n$, we see that $\tilde{K}^{R}(u,n)$ is exactly equal to
$\id$ if $d=2n$ (i.e., for $A_{2n-1}^{(2)}$, $C_{n}^{(1)}$ and
$D_{n}^{(1)}$); and $\tilde{K}^{R}(u,n)$  differs from $\id$ only in the middle matrix 
element if $d=2n+1$ (i.e., for $A_{2n}^{(2)}$ and $B_{n}^{(1)}$). 

The matrix $B(u,p)$ satisfies
\be
B(u,p) \, B(v,p) = B(u+v,p)\,, \qquad B(0,p) = \id\,,
\ee
as well as 
\be
\left[B_{1}(u,p)\, B_{2}(u,p)\,, R_{12}(v) \right] = 0\,.
\label{BBR}
\ee
With the help of these properties, it can be shown that the 
gauge-transformed R-matrix and K-matrices continue to satisfy their respective 
Yang-Baxter equations.
The crossing symmetry (\ref{crossing}) is also maintained, with \cite{Mezincescu:1990uf}
\be
\tilde{V}(p) = V\, B(\rho,p) = B(-\rho,p)\, V\,,
\label{Vtilde}
\ee
and 
\be
\tilde{M}(p) =\tilde{V}^{t}(p)\, \tilde{V}(p) = B(\rho,p)\, M\, 
B(\rho,p) \,.
\label{Mtilde}
\ee
The transfer matrix (\ref{transfer}) remains invariant under these 
transformations \cite{Mezincescu:1990uf}
\be
t(u,p) = \tr_a \tilde{K}^{L}_{a}(u,p)\, \tilde{T}_a(u,p)\,  
\tilde{K}^{R}_{a}(u,p)\, \widehat{\tilde{T}}_a(u,p) \,,
\label{transfergauge}
\ee
where
\begin{align} 
\tilde{T}_a(u,p) &= \tilde{R}_{aN}(u,p)\ \tilde{R}_{a N-1}(u,p)\ \cdots \tilde{R}_{a1}(u,p) \,,  \non \\
\widehat{\tilde{T}}_a(u,p) &= \tilde{R}_{1a}(u,p)\ \cdots 
\tilde{R}_{N-1 a}(u,p)\ \tilde{R}_{Na}(u,p) \,. 
\label{monodromygauge}
\end{align}

As already remarked in the Introduction, prior to any gauge 
transformation, the R-matrix has the property that
$\check{R}(u) = {\cal
P}R(u)$ commutes with the coproducts of generators of the ``left'' quantum group
$U_{q}(g^{(l)})$ in Table
\ref{table:symmetries} with $p=0$, i.e.\footnote{Further details 
about the generators, coproducts, etc. can be found in Appendix 
\ref{sec:QG}.} 
\be
p=0: \qquad \left[ \check{R}(u)\,, \Delta(H^{(l)}_{j}(0)) \right] = 0 = \left[ 
\check{R}(u)\,, \Delta(E^{\pm\, (l)}_{j}(0)) \right]\,, \qquad j = 1, \ldots, 
n \,.
\ee
In contrast, the gauge-transformed R-matrix given by (\ref{gaugeR}) 
and (\ref{Bgauge}) with $p=n$ has the property that
$\check{\tilde{R}}(u,n) = {\cal P}\tilde{R}(u,n)$ commutes with the
coproducts of generators of the ``right'' quantum group $U_{q}(g^{(r)})$ in Table \ref{table:symmetries} with
$p=n$, i.e. 
\be
p=n: \qquad \left[ \check{\tilde{R}}(u,n)\,, \Delta(H^{(r)}_{j}(n)) \right] = 0 = \left[ 
\check{\tilde{R}}(u,n)\,, \Delta(E^{\pm\, (r)}_{j}(n)) \right]\,, \qquad 
j = 1, \ldots, 
n \,.
\ee
We now use such gauge transformations to prove the QG invariance of 
the open-chain transfer matrix 
$t(u,p)$ for any integer $p \in [0, n]$.

Let us denote by $\tilde{R}^{\pm}(p)$ the asymptotic limits of the 
gauge-transformed R-matrix $\tilde{R}(u,p)$ (\ref{gaugeR})
\be
\tilde{R}^{\pm}(p) = \lim_{u\rightarrow \pm \infty} e^{\mp u} 
\tilde{R}(u,p) \,,
\label{Rtildeasym}
\ee
and we similarly denote by $\tilde{T}^{\pm}_{a}(p)$ the asymptotic limits of the 
gauge-transformed monodromy matrix $\tilde{T}_{a}(u,p)$ 
(\ref{monodromygauge})
\be
\tilde{T}^{\pm}_a(p) = \tilde{R}_{aN}^{\pm}(p)\ \tilde{R}_{a N-1}^{\pm}(p)\\ 
\cdots \tilde{R}_{a1}^{\pm}(p)\ \,.
\label{tildeTpm}
\ee
Let us further denote by $\tilde{T}^{\pm}_{i,j}(p)$ ($1\le i, j \le d$) the matrix elements of 
$\tilde{T}^{\pm}_{a}(p)$ in the auxiliary space, which are operators on the 
quantum space ${\cal V}^{\otimes N}$. 

We show in Appendix \ref{sec:QG} that the operators
$\tilde{T}^{\pm}_{i,j}(p)$ can be expressed in terms of (the quantum enveloping 
algebra of) the unbroken $\hat g$ generators, i.e. the
generators of the quantum groups in the second column of 
Table  \ref{table:symmetries}. Hence, in order to demonstrate the 
QG symmetry of the transfer matrix, it suffices to show 
that
\be
\left[  \tilde{T}^{\pm}_{i,j}(p)\,, t(u,p)\right] = 0 \qquad i, j = 1, 2, \ldots, d
\,.
\label{qgsymmetry}
\ee
To this end, following \cite{Nepomechie:2016ejv} (see also 
\cite{Kulish:1991np, Mezincescu:1991rb}), we first establish several 
lemmas.

%Lemma 1
\begin{lemma}
\be
\left[\tilde{R}_{12}^{\pm}(p)\,, \tilde{K}^{R}_{2}(u,p)
\right] = 0 \,.
\label{lemma1}
\ee
\label{lemma:1}
\end{lemma}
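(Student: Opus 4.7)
The plan is to reduce the commutation statement to a block-diagonality condition on $\tilde{R}^{\pm}(p)$, and then verify that condition by direct inspection of the gauge-transformed R-matrix. From (\ref{mostlyones}), $\tilde{K}^{R}(u,p)$ is diagonal with only two distinct eigenvalues: value $1$ on the ``outer'' subspace $V_{\mathrm{out}} = \mathrm{span}\{|\alpha\rangle : 1\le\alpha\le p \text{ or } d-p+1\le\alpha\le d\}$, and value $k(u)=(\gamma e^{u}+1)/(\gamma+e^{u})$ on the ``middle'' subspace $V_{\mathrm{mid}} = \mathrm{span}\{|\alpha\rangle : p+1\le\alpha\le d-p\}$. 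Hence (\ref{lemma1}) holds for all $u$ if and only if $\tilde{R}^{\pm}(p)$ is block-diagonal in space $2$ with respect to the decomposition $V = V_{\mathrm{out}} \oplus V_{\mathrm{mid}}$; equivalently, the matrix elements $\tilde{R}^{\pm, ab}_{cd}(p)$ must vanish whenever $b$ and $d$ lie in different blocks of this decomposition.

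Next I would compute $\tilde{R}^{\pm}(p)$ entrywise from (\ref{gaugeR}) and (\ref{Rtildeasym}). Because $B(u,p)$ is diagonal, the gauge transformation acts multiplicatively on entries: $\tilde{R}^{ab}_{cd}(u,p) = e^{u(\lambda_{a}-\lambda_{c})/2}\, R^{ab}_{cd}(u)$, where $\lambda_{\alpha}\in\{+1,0,-1\}$ records whether the index $\alpha$ lies in the first $p$, middle $d-2p$, or last $p$ slots. After dividing by $e^{\pm u}$ and letting $u\to\pm\infty$, only entries whose total exponential growth is exactly $e^{\pm u}$ contribute. Using the explicit forms recalled in Appendix \ref{sec:Rmatices}, the non-vanishing bulk entries of $R(u)$ are of just a few types --- diagonal entries $R^{\alpha\beta}_{\alpha\beta}(u)$ (of leading order $e^{u}$), permutation-type entries $R^{\beta\alpha}_{\alpha\beta}(u)$ with $\alpha\ne\beta$ (of bounded or $(e^{u}-e^{-u})$ leading order), and crossing-type entries $R^{\beta'\alpha'}_{\alpha\beta}(u)$ (of $u$-dependence varying by algebra) --- so the growth condition can be checked case by case.

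The key structural fact driving the argument is that the index conjugation $\alpha\mapsto\alpha'=d+1-\alpha$ preserves both $V_{\mathrm{out}}$ and $V_{\mathrm{mid}}$; equivalently $\lambda_{\alpha'}=-\lambda_{\alpha}$. Combined with the growth selection, this forces every surviving entry to have both space-$2$ indices in the same block. For example, a permutation entry $R^{\beta\alpha}_{\alpha\beta}\, e_{\beta\alpha}\otimes e_{\alpha\beta}$ with $\alpha\ne\beta$ and bounded leading coefficient contributes to $\tilde{R}^{+}(p)$ only if $\lambda_{\beta}-\lambda_{\alpha}=2$, forcing $\beta$ into the first $p$ and $\alpha$ into the last $p$, so $b=\alpha$ and $d=\beta$ both lie in $V_{\mathrm{out}}$. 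A crossing entry $R^{\beta'\alpha'}_{\alpha\beta}\, e_{\beta'\alpha}\otimes e_{\alpha'\beta}$ carries the gauge factor $e^{-u(\lambda_{\beta}+\lambda_{\alpha})/2}$, and the admissible values of $\lambda_{\alpha}+\lambda_{\beta}$ (depending on whether the coefficient is constant or grows like $e^{u}$) always place $b=\alpha'$ and $d=\beta$ in the same block --- either both in $V_{\mathrm{out}}$ (if $\alpha$ and $\beta$ lie in the first or last $p$) or both in $V_{\mathrm{mid}}$ (if both lie in the middle). The $u\to-\infty$ limit is handled by symmetric arguments.

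The main obstacle is the uniform case-by-case bookkeeping across the five families in (\ref{affinealgebras}). The crossing-type entries are the most delicate, because their $u$-dependence and $\eta$-dependent prefactors differ across the families, and for $A_{2n}^{(2)}$ and $B_{n}^{(1)}$ additional entries involving the self-conjugate middle index $\alpha=\alpha'$ must be checked individually. Nevertheless, the two ingredients --- the stability of $V_{\mathrm{out}}$ and $V_{\mathrm{mid}}$ under $\alpha\mapsto\alpha'$, and the restricted exponential growth rates of the entries of Jimbo's and Kuniba's R-matrices --- reduce each case to a short verification, yielding the block-diagonality of $\tilde{R}^{\pm}(p)$ in space $2$ and hence (\ref{lemma1}).
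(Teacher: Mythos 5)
Your proposal is correct and follows essentially the same route as the paper: both compute $\tilde{R}^{\pm}(p)$ entry by entry from the gauge factors $e^{u(\lambda_a-\lambda_c)/2}$ and the exponential growth rates of the R-matrix entries, and then verify commutation with the two-valued diagonal $\tilde{K}^{R}(u,p)$ term by term (your ``block-diagonality in space 2'' criterion is just a clean repackaging of the paper's check that each surviving term of (\ref{Rtildeplus})--(\ref{Rtildeplus3}) commutes with (\ref{Ktilde})). One small slip: the crossing term of (\ref{Rmat}) is $a_{\alpha\beta}(u)\,e_{\alpha\beta}\otimes e_{\alpha'\beta'}$ rather than the form you wrote, but with the correct indices your selection argument (using $\lambda_{\alpha'}=-\lambda_{\alpha}$ and the growth of $a^{(3)}$ vs.\ $a^{(4)}$) goes through and gives the same conclusion.
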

\noindent
A proof is outlined in Secs. \ref{sec:prooflemma1a} and 
\ref{sec:prooflemma1b}. 

%Lemma 2
\begin{lemma}
\be
\left[\tilde{R}_{12}^{\pm}(p)\,, \tilde{M}_{1}(p)\, 
\tilde{K}^{L}_{2}(u,p)
\right] = 0 \,.
\label{lemma2}
\ee
\end{lemma}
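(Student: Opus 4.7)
The plan is to reduce Lemma 2 to Lemma \ref{lemma:1} by re-expressing $\tilde{K}^L$ in terms of $\tilde{K}^R$ and $\tilde{M}$. My first step would be to verify the gauge-transformed analogue of the identity $K^L(u,p) = K^R(-u-\rho,p)\, M$, namely
\be
\tilde{K}^L(u,p) = \tilde{K}^R(-u-\rho,p)\, \tilde{M}(p)\,,
\label{KLtoKR}
\ee
which follows directly from (\ref{KL}), (\ref{gaugeK}), and (\ref{Mtilde}) upon using that $B(u,p)$, $M$, and $\tilde{K}^R(v,p)$ are all diagonal (so they commute pairwise), together with $B(-u,p)\, B(u+\rho,p) = B(\rho,p)$. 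Substituting (\ref{KLtoKR}), and noting that the diagonal $\tilde{M}_2(p)$ commutes with $\tilde{K}^R_2(-u-\rho,p)$ while $\tilde{M}_1(p)$ acts in a disjoint space, the expression to be commuted in (\ref{lemma2}) rewrites as
\be
\tilde{M}_1(p)\, \tilde{K}^L_2(u,p) = \tilde{M}_1(p)\, \tilde{M}_2(p)\, \tilde{K}^R_2(-u-\rho,p)\,.
\ee

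Applying Lemma \ref{lemma:1} with $-u-\rho$ in place of $u$ shows that $\tilde{R}^{\pm}_{12}(p)$ commutes with $\tilde{K}^R_2(-u-\rho,p)$, so the lemma will follow once I establish
\be
\left[\tilde{R}^{\pm}_{12}(p)\,, \tilde{M}_1(p)\, \tilde{M}_2(p)\right] = 0\,.
\label{RMMcomm}
\ee
For this I would invoke two ingredients: (i) the standard Cartan-type identity $[R_{12}(u)\,, M_1\, M_2] = 0$, derivable from PT symmetry (\ref{PT}) together with the two forms of crossing (\ref{crossing}), and in any event verifiable uniformly from the explicit R-matrices of Appendix \ref{sec:Rmatices}; and (ii) property (\ref{BBR}) at $u=\rho$, namely $[B_1(\rho,p)\, B_2(\rho,p)\,, R_{12}(v)] = 0$.

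Because $B(\pm u,p)$ and $M$ are all diagonal (hence pairwise commuting), the diagonal $B$-factors in $\tilde{R}_{12}(u,p) = B_1(u,p)\, R_{12}(u)\, B_1(-u,p)$ slide past $M_1\, M_2$, so the commutation $[R_{12}(u)\,, M_1\, M_2] = 0$ is inherited by $\tilde{R}_{12}(u,p)$; combining with (ii) and recognising $\tilde{M}_1(p)\, \tilde{M}_2(p) = B_1(\rho,p)\, B_2(\rho,p)\, M_1\, M_2\, B_1(\rho,p)\, B_2(\rho,p)$ via (\ref{Mtilde}) yields $[\tilde{R}_{12}(u,p)\,, \tilde{M}_1(p)\, \tilde{M}_2(p)] = 0$ for every $u$. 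Passing to the asymptotic limit $u \to \pm\infty$ as in (\ref{Rtildeasym}) then gives (\ref{RMMcomm}) and completes the proof. The main obstacle is input (i): although $[R_{12}(u)\,, M_1\, M_2] = 0$ is a standard Cartan-like property, it needs to be checked uniformly across all five families of R-matrices in (\ref{affinealgebras}).
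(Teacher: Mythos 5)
Your proposal is correct and follows essentially the same route as the paper: rewrite $\tilde{K}^{L}(u,p)$ as $\tilde{M}(p)\,\tilde{K}^{R}(-u-\rho,p)$, invoke Lemma \ref{lemma:1} at argument $-u-\rho$, and use $[R_{12}(u), M_{1}M_{2}]=0$ (cited in the paper from \cite{Mezincescu:1990uf}) to commute $\tilde{M}_{1}(p)\tilde{M}_{2}(p)$ past $\tilde{R}^{\pm}_{12}(p)$. The only difference is that you spell out explicitly, via (\ref{BBR}) and the diagonality of the $B$- and $M$-factors, why this Cartan-type property survives the gauge transformation and the asymptotic limit — a detail the paper leaves implicit.
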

\noindent

\begin{proof}
We observe that    
\be
\tilde{K}^{L}(u,p) = \tilde{K}^{R}(-u-\rho,p)\, \tilde{M}(p) =  
\tilde{M}(p)\, \tilde{K}^{R}(-u-\rho,p) \,,
\label{KLtilde}
\ee
as follows from (\ref{KL}), (\ref{Mtilde}) and 
(\ref{gaugeK}). Hence,
\begin{align}
\tilde{R}_{12}^{\pm}(p)\, \tilde{M}_{1}(p) \tilde{K}^{L}_{2}(u,p) 
& = \tilde{R}_{12}^{\pm}(p)\, \tilde{M}_{1}(p) \tilde{M}_{2}(p) 
\tilde{K}^{R}_{2}(-u-\rho,p) \non \\
& = \tilde{M}_{1}(p) \tilde{M}_{2}(p) \tilde{R}_{12}^{\pm}(p)\, 
\tilde{K}^{R}_{2}(-u-\rho,p) \non \\
& = \tilde{M}_{1}(p) \tilde{M}_{2}(p) \tilde{K}^{R}_{2}(-u-\rho,p)\, \tilde{R}_{12}^{\pm}(p)  \non \\
& = \tilde{M}_{1}(p) \tilde{K}^{L}_{2}(u,p)\, \tilde{R}_{12}^{\pm}(p) \,,
\end{align}    
where the first and last equalities follow from (\ref{KLtilde}); the second 
equality is a consequence of the fact \cite{Mezincescu:1990uf}
\be
\left[R_{12}(u)\,, M_{1} M_{2} \right] = 0 \,;
\ee
and the third equality follows from Lemma 1 (\ref{lemma1}).
\end{proof}

%Lemma 3
\begin{lemma}
\be
\left[\tilde{R}_{12}^{\pm}(p)\, \tilde{T}_{1}^{\pm}(p)\,, 
\tilde{T}_{2}(u,p)\, \tilde{K}^{R}_{2}(u,p)\, 
\widehat{\tilde{T}}_{2}(u,p) 
\right] = 0 \,.
\label{lemma3}
\ee
\end{lemma}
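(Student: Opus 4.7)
The plan is to push $\tilde{R}_{12}^{\pm}(p)\,\tilde{T}_1^{\pm}(p)$ across each of the three factors $\tilde{T}_2(u,p)$, $\tilde{K}^{R}_2(u,p)$, $\widehat{\tilde{T}}_2(u,p)$ in turn, using three intertwining relations:
\begin{align*}
\text{(a)}\ & \tilde{R}_{12}^{\pm}(p)\,\tilde{T}_1^{\pm}(p)\,\tilde{T}_2(u,p) = \tilde{T}_2(u,p)\,\tilde{T}_1^{\pm}(p)\,\tilde{R}_{12}^{\pm}(p),\\
\text{(b)}\ & \tilde{T}_1^{\pm}(p)\,\tilde{R}_{12}^{\pm}(p)\,\tilde{K}^{R}_2(u,p) = \tilde{K}^{R}_2(u,p)\,\tilde{T}_1^{\pm}(p)\,\tilde{R}_{12}^{\pm}(p),\\
\text{(c)}\ & \tilde{T}_1^{\pm}(p)\,\tilde{R}_{12}^{\pm}(p)\,\widehat{\tilde{T}}_2(u,p) = \widehat{\tilde{T}}_2(u,p)\,\tilde{R}_{12}^{\pm}(p)\,\tilde{T}_1^{\pm}(p).
\end{align*}
Chained in the order (a)$\to$(b)$\to$(c) these immediately give the lemma; note that the relative order of $\tilde{R}_{12}^{\pm}$ and $\tilde{T}_1^{\pm}$ flips once in (a) and once more in (c), so it is restored at the end. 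Relation (a) is the asymptotic RTT, obtained from the standard RTT $\tilde{R}_{12}(v-u,p)\,\tilde{T}_1(v,p)\,\tilde{T}_2(u,p) = \tilde{T}_2(u,p)\,\tilde{T}_1(v,p)\,\tilde{R}_{12}(v-u,p)$---valid because, as remarked after \eqref{BBR}, the gauged $\tilde{R}$ still satisfies the YBE---by sending $v\to\pm\infty$ and extracting the leading $e^{\pm(N+1)v}$ term via \eqref{Rtildeasym} and \eqref{tildeTpm}. Relation (b) is Lemma~\ref{lemma:1} combined with the triviality that $\tilde{T}_1^{\pm}(p)$ and $\tilde{K}^{R}_2(u,p)$ act on different auxiliary spaces and so commute.

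The main step is (c). The key is to choose the right form of YBE: writing \eqref{YBE} as
\begin{equation*}
\tilde{R}_{1j}(v-u,p)\,\tilde{R}_{12}(v,p)\,\tilde{R}_{j2}(u,p) = \tilde{R}_{j2}(u,p)\,\tilde{R}_{12}(v,p)\,\tilde{R}_{1j}(v-u,p),
\end{equation*}
with $j$ a quantum site and $1,2$ the two auxiliary spaces, sending $v\to\pm\infty$ and dividing by $e^{\pm 2v}$ gives the single-site version of (c),
\begin{equation*}
\tilde{R}_{1j}^{\pm}(p)\,\tilde{R}_{12}^{\pm}(p)\,\tilde{R}_{j2}(u,p) = \tilde{R}_{j2}(u,p)\,\tilde{R}_{12}^{\pm}(p)\,\tilde{R}_{1j}^{\pm}(p).
\end{equation*}
I then iterate this over $j=1,\ldots,N$: each step moves one $\tilde{R}_{1j}^{\pm}$, drawn from the right end of the $\tilde{T}_1^{\pm}$ string, past the corresponding $\tilde{R}_{j2}(u,p)$ at the left end of $\widehat{\tilde{T}}_2$, while the central $\tilde{R}_{12}^{\pm}$ stays sandwiched; factors acting on sites different from $j$ commute freely and get parked on the outside.

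The only real subtlety I expect is bookkeeping within this iteration: each single-site move reverses the internal order of $\tilde{R}_{1j}^{\pm}\,\tilde{R}_{12}^{\pm}$, so one must verify that after all $N$ iterations the $\tilde{R}_{1j}^{\pm}$ factors reassemble on the right in precisely the order required by $\tilde{T}_1^{\pm}(p) = \tilde{R}_{1N}^{\pm}(p)\cdots\tilde{R}_{11}^{\pm}(p)$, while the $\tilde{R}_{j2}(u,p)$ factors simultaneously reassemble on the left into $\widehat{\tilde{T}}_2(u,p) = \tilde{R}_{12}(u,p)\cdots\tilde{R}_{N2}(u,p)$. An $N=2$ check shows that the two reversals fit together correctly, and the general case follows by a routine induction on $N$.
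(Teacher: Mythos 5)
Your proof is correct, and its first two steps coincide with the paper's: relation (a) is exactly the asymptotic RTT relation (\ref{RTTasym}), and relation (b) is Lemma \ref{lemma:1} plus the trivial commutativity of operators in different spaces. Where you genuinely diverge is in handling $\widehat{\tilde{T}}_{2}$. The paper never manipulates $\widehat{\tilde{T}}_{2}$ directly: it first proves the commutator with $\tilde{T}_{2}(u,p)\,\tilde{K}^{R}_{2}(u,p)\,\tilde{T}_{2}^{-1}(-u,p)$, using for the final exchange only the left- and right-multiplied version (\ref{TRTasym}) of the same asymptotic RTT relation, and only at the very end invokes unitarity (\ref{unitarity}) to identify $\tilde{T}_{2}^{-1}(-u,p)\propto\widehat{\tilde{T}}_{2}(u,p)$. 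You instead establish your relation (c) from scratch, by taking the $v\to\pm\infty$ limit of the Yang--Baxter equation in the orientation where both the auxiliary--auxiliary R-matrix and $\tilde{R}_{1j}$ carry the divergent spectral parameter, and then iterating over the sites; I have checked that the single-site relation is a correct specialization of (\ref{YBE}) and that the reassembly of the $\tilde{R}^{\pm}_{1j}$ and $\tilde{R}_{j2}(u,p)$ factors into $\tilde{T}^{\pm}_{1}(p)$ and $\widehat{\tilde{T}}_{2}(u,p)$ works out as you claim. The trade-off is clear: the paper's route is shorter and needs no site-by-site bookkeeping, but it relies on unitarity and on inverses of monodromy matrices; your route stays entirely within the YBE and products of R-matrices (unitarity is never used), and it yields as a byproduct the mixed intertwining relation $\tilde{T}_{1}^{\pm}(p)\,\tilde{R}_{12}^{\pm}(p)\,\widehat{\tilde{T}}_{2}(u,p)=\widehat{\tilde{T}}_{2}(u,p)\,\tilde{R}_{12}^{\pm}(p)\,\tilde{T}_{1}^{\pm}(p)$, which is of some independent interest.
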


\begin{proof}
\quad We recall the gauge-transformed fundamental relation
\be
\tilde{R}_{12}(u_{1}-u_{2},p)\, \tilde{T}_{1}(u_{1},p)\, 
\tilde{T}_{2}(u_{2},p) = \tilde{T}_{2}(u_{2},p)\, 
\tilde{T}_{1}(u_{1},p)\, \tilde{R}_{12}(u_{1}-u_{2},p) \,.
\ee
Taking asymptotic limits of $u_{1}$ yields
\be
\tilde{R}_{12}^{\pm}(p)\, \tilde{T}_{1}^{\pm}(p)\, \tilde{T}_{2}(u,p) 
= \tilde{T}_{2}(u,p)\, 
\tilde{T}_{1}^{\pm}(p)\, \tilde{R}_{12}^{\pm}(p) \,,
\label{RTTasym}
\ee
which further implies
\be
\tilde{T}_{2}^{-1}(u,p)\, \tilde{R}_{12}^{\pm}(p)\,  \tilde{T}_{1}^{\pm}(p) = \tilde{T}_{1}^{\pm}(p)\, 
\tilde{R}_{12}^{\pm}(p)\, \tilde{T}_{2}^{-1}(u,p) \,.
\label{TRTasym}
\ee
Therefore,
\begin{align}
&\tilde{R}_{12}^{\pm}(p)\, \tilde{T}_{1}^{\pm}(p)\, \tilde{T}_{2}(u,p)\, 
\tilde{K}^{R}_{2}(u,p)\, \tilde{T}_{2}^{-1}(-u,p) \non\\
&\qquad = \tilde{T}_{2}(u,p)\, \tilde{T}_{1}^{\pm}(p)\, \tilde{R}_{12}^{\pm}(p)\, 
\tilde{K}^{R}_{2}(u,p)\, \tilde{T}_{2}^{-1}(-u,p) \non\\
&\qquad = \tilde{T}_{2}(u,p)\, \tilde{K}^{R}_{2}(u,p)\, \tilde{T}_{1}^{\pm}(p)\, \tilde{R}_{12}^{\pm}(p)\, 
\tilde{T}_{2}^{-1}(-u,p) \non\\
&\qquad = \tilde{T}_{2}(u,p)\, \tilde{K}^{R}_{2}(u,p)\, 
\tilde{T}_{2}^{-1}(-u,p)\, \tilde{R}_{12}^{\pm}(p)\,  \tilde{T}_{1}^{\pm}(p) \,,
\end{align}
where the first equality follows from (\ref{RTTasym}), the second 
equality follows from Lemma 1 (\ref{lemma1}), and the third 
equality follows from (\ref{TRTasym}). We have therefore demonstrated the 
commutativity property
\be
\left[\tilde{R}_{12}^{\pm}(p)\, \tilde{T}_{1}^{\pm}(p)\,, 
\tilde{T}_{2}(u,p)\, \tilde{K}^{R}_{2}(u,p)\,\tilde{T}_{2}^{-1}(-u,p) 
\right] = 0 \,.
\label{comm}
\ee
Finally, we see from (\ref{monodromygauge}) that
\begin{align}
\tilde{T}_{a}^{-1}(u,p) &= \tilde{R}_{a1}^{-1}(u,p) \cdots \tilde{R}_{aN}^{-1}(u,p) \non \\
&\propto \tilde{R}_{1a}(-u,p) \cdots \tilde{R}_{Na}(-u,p) = 
\widehat{\tilde{T}}_{a}(-u,p)\,,
\end{align}
where the second line follows from unitarity (\ref{unitarity}).
Substituting into (\ref{comm}) we obtain the desired result 
(\ref{lemma3}).
\end{proof}

%Lemma 4
\begin{lemma}
\be
\tilde{M}^{-1}_{1}(p)\, 
\left((\tilde{R}^{\pm}_{12}(p))^{-1}\right)^{t_{2}}\, \tilde{M}_{1}(p)\, 
\tilde{R}^{\pm\ 
t_{2}}_{12}(p) = \id^{\otimes 2} \,.
\label{lemma4}
\ee
\end{lemma}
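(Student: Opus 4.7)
The plan is to derive Lemma~4 from the $u\to\pm\infty$ asymptotics of the gauge-transformed crossing and unitarity relations, combined with a short algebraic identity for $\tilde V(p)$ and $\tilde M(p)$. First I would collect the asymptotic data. Transposing the gauge-transformed crossing $\tilde R_{12}(u,p)=\tilde V_1(p)\,\tilde R^{t_2}_{12}(-u-\rho,p)\,\tilde V_1(p)$ (derived from (\ref{gaugeR}) and (\ref{Vtilde})) in space 2 and using $\tilde V^{2}(p)=\id$ (which follows from $V^{2}=\id$ and $V\,B(u,p)\,V=B(-u,p)$) yields $\tilde R^{t_2}_{12}(u,p)=\tilde V_1(p)\,\tilde R_{12}(-u-\rho,p)\,\tilde V_1(p)$. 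Extracting the leading $e^{\pm u}$ behaviour via (\ref{Rtildeasym}) gives
\[
(\tilde R^\pm(p))^{t_2}=e^{\pm\rho}\,\tilde V_1(p)\,\tilde R^\mp(p)\,\tilde V_1(p),
\]
and analogously, from the other crossing form, $(\tilde R^\pm(p))^{t_1}=e^{\pm\rho}\,\tilde V_2^{t_2}(p)\,\tilde R^\mp(p)\,\tilde V_2^{t_2}(p)$. Likewise, the leading asymptotics of the gauge-preserved unitarity $\tilde R_{12}(u,p)\,\tilde R_{21}(-u,p)=\tilde\zeta(u)\,\id$ give
\[
\tilde R^\pm_{12}(p)\,\tilde R^\mp_{21}(p)=C\,\id=\tilde R^\pm_{21}(p)\,\tilde R^\mp_{12}(p)
\]
for a common nonzero constant $C$.

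Next I would record the compact identity
\[
\tilde V^{t}(p)\,\tilde M(p)\,\tilde V(p)=\id,
\]
which is immediate from $\tilde M(p)=\tilde V^{t}(p)\,\tilde V(p)$ in (\ref{Mtilde}) together with $\tilde V^{2}(p)=\id$ (so also $(\tilde V^{t}(p))^{2}=(\tilde V^{2}(p))^{t}=\id$): $\tilde V^{t}\,\tilde M\,\tilde V=(\tilde V^{t})^{2}\tilde V^{2}=\id$. With these ingredients in hand, I would unwind the LHS of (\ref{lemma4}) as follows. The asymptotic unitarity gives $(\tilde R^\pm(p))^{-1}=C^{-1}\tilde R^\mp_{21}(p)$, so $\bigl((\tilde R^\pm(p))^{-1}\bigr)^{t_2}=C^{-1}\bigl((\tilde R^\mp(p))^{t_1}\bigr)_{21}$ by the elementary index identity $(A_{21})^{t_2}=(A^{t_1})_{21}$. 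Applying the $t_1$ asymptotic crossing and commuting ${\cal P}_{12}$ past $\tilde V_2^{t_2}(p)$ via ${\cal P}_{12}\,\tilde V_2^{t_2}(p)=\tilde V_1^{t_1}(p)\,{\cal P}_{12}$ recasts this as $(e^{\mp\rho}/C)\,\tilde V_1^{t_1}(p)\,\tilde R^\pm_{21}(p)\,\tilde V_1^{t_1}(p)$.

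Substituting this together with $\tilde R^{\pm\,t_2}_{12}(p)=e^{\pm\rho}\,\tilde V_1(p)\,\tilde R^\mp_{12}(p)\,\tilde V_1(p)$ into the LHS of (\ref{lemma4}), the central factor $\tilde V_1^{t_1}(p)\,\tilde M_1(p)\,\tilde V_1(p)=\id$ collapses by the identity above, the adjacent product $\tilde R^\pm_{21}(p)\,\tilde R^\mp_{12}(p)=C\,\id$ by the asymptotic unitarity, and the surviving $\tilde M^{-1}_1(p)\,\tilde V_1^{t_1}(p)\,\tilde V_1(p)=\tilde M^{-1}_1(p)\,\tilde M_1(p)=\id$ finishes the proof. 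The only (mild) obstacle is the careful bookkeeping of partial transposes and of the permutation ${\cal P}_{12}$; no input beyond the gauge-preserved crossing/unitarity and the compact identity $\tilde V^{t}\,\tilde M\,\tilde V=\id$ is required.
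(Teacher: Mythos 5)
Your proposal is correct and follows essentially the same route as the paper: both derive the lemma from the asymptotic ($u\to\pm\infty$) limits of the gauge-transformed crossing and unitarity relations together with the identity $\tilde V^{t}(p)\,\tilde M(p)\,\tilde V(p)=\id$ (equivalently $\tilde V(p)\tilde V^{t}(p)=\tilde M^{-1}(p)$). The only difference is bookkeeping — you take the asymptotic limits of crossing and unitarity separately and then assemble them (working with $\tilde R^{\mp}_{21}$ in place of the PT-equivalent $\tilde R^{\mp\,t_1t_2}_{12}$), whereas the paper substitutes crossing into unitarity at finite $u$ before taking the limit.
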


\begin{proof}
\quad We write the gauge-transformed unitarity condition 
(\ref{unitarity}) as
\be
\tilde{R}_{12}(u,p)\, \tilde{R}_{12}^{t_{1} t_{2}}(-u,p) = \zeta(u)\, \id^{\otimes 
2}\,,
\label{unitarity2}
\ee
and then use crossing symmetry (\ref{crossing}) to obtain
\be
\tilde{V}_{1}(p)\, \tilde{R}_{12}^{t_{2}}(-u-\rho,p)\, \tilde{V}_{1}(p)\, \tilde{V}_{1}^{t_{1}}(p)\, 
\tilde{R}_{12}^{t_{1}}(u-\rho,p)\, \tilde{V}_{1}^{t_{1}}(p) = \zeta(u)\, \id^{\otimes 2} 
\,,
\label{lemmastep}
\ee
where $\tilde{V}(p)$ is given by (\ref{Vtilde}).
By taking asymptotic limits of (\ref{lemmastep}) and noting that 
$\tilde{V}(p)^{2}= \id$, we obtain
\be
\tilde{R}^{\pm\ t_{2}}_{12}(p)\, \tilde{M}^{-1}_{1}(p)\, \tilde{R}^{\mp\ t_{1}}_{12}(p)\, \tilde{M}_{1}(p) = 
\chi\, \id^{\otimes 2}  \,,
\label{intermed}
\ee
where $\chi$ is given by
\be
\chi = \lim_{u\rightarrow \pm \infty} e^{\mp 2u}\, \zeta(u) = \frac{1}{4}{\delta_1}^{2} \,.
\ee
Moreover, from (\ref{unitarity2}) we obtain 
\be
\tilde{R}^{\pm}_{12}(p)\, \tilde{R}_{12}^{\mp\ t_{1} t_{2}}(p) = 
\chi\, \id^{\otimes 2}\,, 
\ee
which implies that 
\be
\tilde{R}_{12}^{\mp\ t_{1} t_{2}}(p) = \chi\, (\tilde{R}^{\pm}_{12}(p))^{-1} \,, \quad \mbox{  or  } 
\quad
\tilde{R}_{12}^{\mp\ t_{1}}(p) = \chi \left((\tilde{R}^{\pm}_{12}(p))^{-1}\right)^{t_{2}}  \,.
\ee
Substituting into (\ref{intermed}), we obtain
\be
\tilde{R}^{\pm\ t_{2}}_{12}(p)\, \tilde{M}^{-1}_{1}(p)\, \left((\tilde{R}^{\pm}_{12}(p))^{-1}\right)^{t_{2}}\, \tilde{M}_{1}(p) = 
\id^{\otimes 2}  \,,
\ee
which can be rearranged to give the desired result (\ref{lemma4}).  
\end{proof}

We are finally ready to prove the main result (\ref{qgsymmetry}), which 
is equivalent to the following

%Prop 1
\begin{prop}
\be
\left[\tilde{T}_{1}^{\pm}(p)\,, t(u,p) \right] = 0 \,.
\label{prop1}
\ee
\end{prop}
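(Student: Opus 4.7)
The plan is to follow the standard Sklyanin-type scheme: move $\tilde{T}_1^{\pm}(p)$ rightward through the factors of $t(u,p)$ in the gauge-transformed trace expression (\ref{transfergauge}), and absorb the $\tilde{R}_{12}^{\pm}(p)$-conjugations that arise by exploiting the $\tilde{M}$-factor built into the left K-matrix together with a transposed (``space-2'') form of Lemma 4.

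First, since $\tilde{T}_1^{\pm}(p)$ acts trivially on auxiliary space $2$, it commutes with $\tilde{K}^L_2(u,p)$. Lemma 3 (\ref{lemma3}) then rewrites $\tilde{T}_1^{\pm}(p)\cdot[\tilde{T}_2\,\tilde{K}^R_2\,\widehat{\tilde{T}}_2] = (\tilde{R}_{12}^{\pm}(p))^{-1}\cdot[\tilde{T}_2\,\tilde{K}^R_2\,\widehat{\tilde{T}}_2]\cdot \tilde{R}_{12}^{\pm}(p)\,\tilde{T}_1^{\pm}(p)$, and pulling $\tilde{T}_1^{\pm}(p)$ outside $\tr_2$ reduces (\ref{prop1}) to showing
\begin{equation*}
\tr_2\!\bigl[\tilde{K}^L_2(u,p)\,(\tilde{R}_{12}^{\pm}(p))^{-1}\,Y_2\,\tilde{R}_{12}^{\pm}(p)\bigr] = \tr_2\!\bigl[\tilde{K}^L_2(u,p)\,Y_2\bigr]
\end{equation*}
for $Y_2 = \tilde{T}_2(u,p)\,\tilde{K}^R_2(u,p)\,\widehat{\tilde{T}}_2(u,p)$; in fact the proof I have in mind will establish this for any operator $Y_2$ acting solely on auxiliary space $2$.

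Next, using the factorization $\tilde{K}^L_2(u,p) = \tilde{M}_2(p)\,\tilde{K}^R_2(-u-\rho,p)$ from (\ref{KLtilde}) together with Lemma 1 (\ref{lemma1}) to slide $\tilde{K}^R_2(-u-\rho,p)$ through $(\tilde{R}_{12}^{\pm}(p))^{-1}$ and absorb it into $Y_2$ as $Z_2 = \tilde{K}^R_2(-u-\rho,p)\,Y_2$, the identity reduces to
\begin{equation*}
\tr_2\!\bigl[\tilde{M}_2(p)\,(\tilde{R}_{12}^{\pm}(p))^{-1}\,Z_2\,\tilde{R}_{12}^{\pm}(p)\bigr] = \tr_2\!\bigl[\tilde{M}_2(p)\,Z_2\bigr]
\end{equation*}
for arbitrary $Z_2$ on auxiliary space $2$. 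To prove this, I would invoke Lemma 4 with spaces $1$ and $2$ interchanged: combining (\ref{lemma4}) (applied with labels $1\leftrightarrow 2$) with the $PT$ symmetry (\ref{PT}), which implies $\tilde{R}_{21}^{\pm}(p) = (\tilde{R}_{12}^{\pm}(p))^{t_1 t_2}$, yields $\tilde{M}_2^{-1}(p)\,((\tilde{R}_{12}^{\pm}(p))^{-1})^{t_2}\,\tilde{M}_2(p)\,(\tilde{R}_{12}^{\pm}(p))^{t_2} = \id^{\otimes 2}$. Expanding this operator identity in components (remembering that $\tilde{M}_2(p)$ is diagonal) produces precisely the bilinear relation that equates the two sides of the desired trace identity for all $Z_2$.

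The hard part, I expect, is this last translation: although the manipulation is essentially bookkeeping, it cannot be executed purely at the operator level, since $(AB)^{t_2}\neq B^{t_2}A^{t_2}$ in general. Component-level care is therefore needed to turn the operator identity of the transposed Lemma 4 into the weighted partial-trace statement, and to verify that no $M$-factor gets mis-placed when swapping indices $1\leftrightarrow 2$.
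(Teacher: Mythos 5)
Your proof is correct, and at its core it is the same Sklyanin-type scheme as the paper's: insert $(\tilde{R}^{\pm}_{12}(p))^{-1}\tilde{R}^{\pm}_{12}(p)$, push $\tilde{T}_1^{\pm}(p)$ through the trace using Lemma 3 (\ref{lemma3}), and absorb the residual conjugation by $\tilde{R}^{\pm}_{12}(p)$ using (a form of) Lemma 4. You deviate in two places, and both check out. First, you never invoke Lemma 2 (\ref{lemma2}): its role is played by the trivial commutativity of $\tilde{T}_1^{\pm}(p)$ with the purely auxiliary-space matrix $\tilde{K}^{L}_{2}(u,p)$, together with Lemma 1 (\ref{lemma1}) applied to the factor $\tilde{K}^{R}_{2}(-u-\rho,p)$ in the decomposition (\ref{KLtilde}); this is a genuine small economy. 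Second, where the paper keeps the conjugated inverse on the far left as $\tilde{M}^{-1}_{1}(p)(\tilde{R}^{\pm}_{12}(p))^{-1}\tilde{M}_{1}(p)$ and applies Lemma 4 verbatim after transposing inside $\tr_2$, you need the ``space-2'' identity $\tilde{M}_{2}^{-1}(p)\bigl((\tilde{R}^{\pm}_{12}(p))^{-1}\bigr)^{t_{2}}\tilde{M}_{2}(p)\,\tilde{R}^{\pm\,t_{2}}_{12}(p)=\id^{\otimes 2}$. This does follow from Lemma 4 by relabeling $1\leftrightarrow 2$ and then using $\tilde{R}^{\pm}_{21}(p)=(\tilde{R}^{\pm}_{12}(p))^{t_{1}t_{2}}$; note, however, that the latter is the $PT$ symmetry of the \emph{gauge-transformed} R-matrix, not literally (\ref{PT}) — it survives the gauge transformation because $B(u,p)$ is diagonal (hence symmetric) and satisfies (\ref{BBR}), a point worth stating explicitly. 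Your component-level closing step then amounts to the relation $\sum_{i} m_{i}\,A_{i\beta}\,B_{\gamma i}=m_{\beta}\,\delta_{\beta\gamma}$ (identity on space 1), where $A=(\tilde{R}^{\pm}_{12}(p))^{-1}$, $B=\tilde{R}^{\pm}_{12}(p)$, the subscripts are space-2 indices, and $m_{i}$ are the diagonal entries of $\tilde{M}_{2}(p)$; this is exactly the componentwise content of the space-2 identity above, with the operator orderings matching because the entries of $Z_{2}$ act only on the quantum space. So the ``bookkeeping'' you flag as the hard part does go through; the paper sidesteps it by transposing $A_{12}$ and $Q_{2}$ separately under $\tr_{2}$ before invoking Lemma 4, which buys a purely operator-level finish at the cost of carrying the $\tilde{M}_{1}$ factors (and hence Lemma 2) throughout.
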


\begin{proof}
\quad Recalling that the transfer matrix remains invariant under 
gauge transformations (\ref{transfergauge}), we obtain
\begin{align}
&\tilde{T}_{1}^{\pm}(p)\, t(u,p) \non \\
&\qquad = \tr_{2}\left\{\tilde{T}_{1}^{\pm}(p)\, 
\tilde{K}^{L}_{2}(u,p)\, \tilde{T}_{2}(u,p)\, 
\tilde{K}^{R}_{2}(u,p)\, \widehat{\tilde{T}}_{2}(u,p) \right\} \non \\
&\qquad = \tr_{2}\left\{\tilde{M}^{-1}_{1}(p)\, \tilde{M}_{1}(p)\, 
\tilde{K}^{L}_{2}(u,p)\, (\tilde{R}^{\pm}_{12}(p))^{-1}\, 
\tilde{R}^{\pm}_{12}(p)\, \tilde{T}_{1}^{\pm}(p)\,  \tilde{T}_{2}(u,p)\, 
\tilde{K}^{R}_{2}(u,p)\, \widehat{\tilde{T}}_{2}(u,p) \right\} \non \\
&\qquad = \tr_{2}\left\{\tilde{M}^{-1}_{1}(p)\, (\tilde{R}^{\pm}_{12}(p))^{-1}\, 
\tilde{M}_{1}(p)\, \tilde{K}^{L}_{2}(u,p)\, 
\tilde{T}_{2}(u,p)\, \tilde{K}^{R}_{2}(u,p)\, 
\widehat{\tilde{T}}_{2}(u,p)\, \tilde{R}^{\pm}_{12}(p)\, 
\tilde{T}_{1}^{\pm}(p)   \right\} \non \\
&\qquad = \ldots 
\end{align}
In passing to the third equality, we have used Lemma 2 (\ref{lemma2}) and 
Lemma 3 (\ref{lemma3}). Then
\begin{align}
\ldots &= \tr_{2}\left\{\tilde{M}^{-1}_{1}(p)\, (\tilde{R}^{\pm}_{12}(p))^{-1}\, 
\tilde{M}_{1}(p)\, \tilde{K}^{L}_{2}(u,p)\,  
\tilde{T}_{2}(u,p)\, \tilde{K}^{R}_{2}(u,p)\, 
\widehat{\tilde{T}}_{2}(u,p)\, \tilde{R}^{\pm}_{12}(p) \right\} \tilde{T}_{1}^{\pm}(p)
\non \\
&= \tr_{2}\left\{ A_{12}\, Q_{2}\, \tilde{R}^{\pm}_{12}(p)\, \right\} \tilde{T}_{1}^{\pm}(p)  \non \\
&= \tr_{2}\left\{ A_{12}^{t_{2}}\, \tilde{R}^{\pm\ t_{2}}_{12}(p)\, Q_{2}^{t_{2}}  \right\} \tilde{T}_{1}^{\pm}(p) = \ldots  
\end{align} 
In passing to the second line we have made the identifications
$A_{12} = \tilde{M}^{-1}_{1}(p)\, (\tilde{R}^{\pm}_{12}(p))^{-1}\, \tilde{M}_{1}(p)$
and $Q_{2} = \tilde{K}^{L}_{2}(u,p)\, \tilde{T}_{2}(u,p)\, 
\tilde{K}^{R}_{2}(u,p)\, \widehat{\tilde{T}}_{2}(u,p)$. Finally, we obtain
\begin{align}
\ldots &= \tr_{2}\left\{ \tilde{M}^{-1}_{1}(p)\, 
\left((\tilde{R}^{\pm}_{12}(p))^{-1}\right)^{t_{2}}\, \tilde{M}_{1}(p)\, \tilde{R}^{\pm\ 
t_{2}}_{12}(p)\,  Q_{2}^{t_{2}} \right\} \tilde{T}_{1}^{\pm}(p) \non \\
&=  \tr_{2}\left\{ Q_{2}^{t_{2}} \right\} \tilde{T}_{1}^{\pm}(p) \non \\
&= t(u,p)\,  \tilde{T}_{1}^{\pm}(p) \,.
\end{align}
In passing to the second line we have used Lemma 4
(\ref{lemma4}); and we have used (\ref{transfergauge}) again to pass 
to the third line.
\end{proof}

\section{Duality symmetry}\label{sec:duality}

We now show that the transfer matrix $t(u,p)$ 
(\ref{transfer}) for the cases $C_{n}^{(1)}$ and $D_{n}^{(1)}$
has a $p \leftrightarrow n-p$ ``duality'' symmetry. 
In order to prove the general result (\ref{duality}), we need the 
following lemma:

%new Lemma 5
\begin{lemma}
The R-matrices for both $C_{n}^{(1)}$ and $D_{n}^{(1)}$ obey
\begin{align}
U_{1}\, R_{12}(u)\, U_{1} &= W_{2}^{t}(u)\, R_{12}(u)\, 
W_{2}^{t}(u) \,, \non \\
U_{2}\, R_{12}(u)\, U_{2} &= W_{1}(u)\, R_{12}(u)\, 
W_{1}(u) \,,
\label{RmatpropCD}
\end{align}
where $U$ and $W(u)$ are the following $(2n) \times (2n)$ matrices
\begin{align}
U &= \left(\begin{array}{c|c}
0  &  \id_{n \times n} \\
\hline
\id_{n \times n}  & 0 
\end{array}\right)_{2n \times 2n} \,, & U^{2} &= \id \,, \non \\
W(u) &= \left(\begin{array}{c|c}
0  &  e^{-\frac{u}{2}}\id_{n \times n} \\
\hline
e^{\frac{u}{2}}\id_{n \times n}  & 0 
\end{array}\right)_{2n \times 2n} \,, & W(u)^{2} &= \id \,.
\label{UWmats}
\end{align}    
Furthermore, the K-matrices (\ref{KR}) and (\ref{KL}) obey
\begin{align}
W(u)\, K^{R}(u,p)\, W^{t}(u) &= f^{R}(u,p)\,  K^{R}(u,n-p) \non \\
W^{t}(u)\, K^{L}(u,p)\, W(u) &= f^{L}(u,p)\,  K^{L}(u,n-p) \,,
\label{dualityK}
\end{align}
where $f^{R}(u,p)$ and $f^{L}(u,p)$ are scalar functions given by 
\begin{align}
f^{R}(u,p) &= \frac{\gamma_{0}\, e^{u} + e^{(2n-4p)\eta}}{\gamma_{0} + e^{u+(2n-4p)\eta}} \,,  \non \\
f^{L}(u,p) &= 
\frac{\gamma_{0}\, e^{u} + e^{(4p+2n \pm 4)\eta}}{\gamma_{0}\, e^{(4n \pm 4)\eta} + 
e^{u+(4p-2n)\eta}}  \mbox{ with } \left\{ \begin{array}{ll}
+ & \mbox{ for } C_{n}^{(1)}  \\
- & \mbox{ for } D_{n}^{(1)}  
\end{array} \right.
\,,
\end{align}
where $\gamma_{0}=\pm 1$ is a parameter appearing in the K-matrix, see (\ref{gamma}).
\label{lemma:5}
\end{lemma}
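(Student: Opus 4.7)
My plan is to separate the R-matrix identities (\ref{RmatpropCD}), which encode a genuinely new $Z_2$ symmetry of the $C_n^{(1)}$ and $D_n^{(1)}$ R-matrices, from the K-matrix identities (\ref{dualityK}), which reduce to a direct block computation.

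For the R-matrix part, I would first observe that the two equations in (\ref{RmatpropCD}) are equivalent via the $PT$ symmetry (\ref{PT}): transposing the first identity in both auxiliary spaces, using $R_{12}^{t_1 t_2}(u) = {\cal P}_{12}\, R_{12}(u)\, {\cal P}_{12}$ together with $U^t = U$, $(W^t)^t = W$, and the intertwining relations ${\cal P}_{12}\, U_1 = U_2\, {\cal P}_{12}$ and ${\cal P}_{12}\, W_2(u) = W_1(u)\, {\cal P}_{12}$, reproduces the second. So it suffices to prove the first. Comparing (\ref{UWmats}) with (\ref{Bgauge}) at $d = 2n$, $p = n$ one reads off the factorization $W^t(u) = B(u,n)\, U$; using this together with (\ref{BBR}) and the defining relation (\ref{gaugeR}), the first identity can be recast in the clean form $\tilde{R}_{12}(u,n) = (U_1\, U_2)\, R_{12}(u)\, (U_1\, U_2)$. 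Thus the claim is that the $p = n$ gauge-transformed R-matrix is obtained from the original by conjugation with the ``block swap'' $U \otimes U$, which I would verify by direct inspection of the explicit Jimbo R-matrices in Appendix \ref{sec:Rmatices}: the block swap sends index $\alpha \in \{1, \ldots, n\}$ to $\alpha + n$, and the gauge factors supplied by $B(u,n)$ precisely cancel the $e^{\pm u/2}$ that the off-diagonal entries would otherwise acquire under the relabelling.

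For the K-matrix part, a direct computation in $n \times n$ block form suffices. Since $W(u)$ is block anti-diagonal, the conjugation $W(u)\, K^R(u,p)\, W^t(u)$ interchanges the upper-left and lower-right $n \times n$ blocks of $K^R(u,p)$ and multiplies them by $e^u$ and $e^{-u}$ respectively. Reading off the resulting diagonal, the block of $p$ entries $e^{\mp u}$ becomes $p$ ones, and the block of $n - p$ entries $g(u) \equiv (\gamma e^u + 1)/(\gamma + e^u)$ becomes $n - p$ entries $e^{\mp u}\, g(u)$; comparison with $K^R(u, n - p)$ forces $f^R(u,p) = g(u)$, and substituting $\rho$ from (\ref{rho}) and $\gamma$ from (\ref{gamma}) for $C_n^{(1)}$, $D_n^{(1)}$ simplifies $g(u)$ to the stated $(\gamma_0 e^u + e^{(2n - 4p)\eta})/(\gamma_0 + e^{u + (2n - 4p)\eta})$. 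Consistency of the middle entries further requires $g(u)\, g'(u) = 1$, where $g'$ is the middle entry of $K^R(u, n - p)$; this is automatic given the $\gamma_0^2 = 1$ symmetry of $g$. The $K^L$ identity then follows by applying the $K^R$ identity at $u \to -u - \rho$ and tracking how $M$ intertwines with $W(u)$ and $W^t(u)$, which produces the explicit $f^L(u,p)$.

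The main obstacle is the case-by-case verification of the R-matrix identity using Jimbo's explicit formulas: the R-matrices for $C_n^{(1)}$ and $D_n^{(1)}$ have several distinct types of non-vanishing entries (diagonal, off-diagonal ``permutation'' terms, and antidiagonal/crossing terms involving the signs $\epsilon_\alpha$ and the exponents $\bar\alpha$), and for each type one must check that the relabelling $\alpha \to \alpha + n$, combined with the gauge factors from $B(u,n)$, leaves the entry invariant. It is precisely the crossing contributions --- which are what distinguishes $C_n^{(1)}$ and $D_n^{(1)}$ from the other affine algebras in (\ref{affinealgebras}) for which the analogous identity fails --- that require the most careful bookkeeping.
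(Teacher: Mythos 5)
Your proposal is correct, and for the difficult part --- the R-matrix identities (\ref{RmatpropCD}) --- it ultimately rests on the same computation as the paper's proof in Sec.~\ref{sec:proofdualitylemma}: a term-by-term verification against the explicit entries (\ref{Rmat})--(\ref{Ra}) of the $C_{n}^{(1)}$ and $D_{n}^{(1)}$ R-matrices. The paper establishes the single relation $U_{2}R_{12}(u)U_{2}=W_{1}(u)R_{12}(u)W_{1}(u)$ by splitting $R$ into the five pieces of (\ref{Rterms}), restricting all sums to the ranges $1\le\alpha\le n$ and $n+1\le\alpha\le 2n$, and observing that $R^{(1)}$, $R^{(2)}$, $R^{(5)}$ are separately invariant while only the combination $R^{(3)}+e^{u}R^{(4)}$ is. Your two structural moves are genuine (and worthwhile) differences of organization: the $PT$-symmetry argument shows cleanly that the two relations in (\ref{RmatpropCD}) are equivalent, and the factorization $W^{t}(u)=B(u,n)\,U$ recasts the identity as $\tilde{R}_{12}(u,n)=(U_{1}U_{2})\,R_{12}(u)\,(U_{1}U_{2})$, which ties the lemma directly to the gauge transformation (\ref{gaugeR})--(\ref{Bgauge}) and explains why only the combination $R^{(3)}+e^{u}R^{(4)}$ is preserved: the block swap reverses the ordering $\alpha<\beta$ precisely when the gauge factor supplies the relative $e^{u}$ between $e(u)$ and $\bar{e}(u)$. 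Your block computation for (\ref{dualityK}) is correct, including the consistency condition that the middle entry of $K^{R}(u,n-p)$ is the reciprocal of that of $K^{R}(u,p)$ (which follows from $\gamma\mapsto 1/\gamma$ under $p\mapsto n-p$, using $\gamma_{0}^{2}=1$); the paper does not detail this part of the lemma at all. The only caveat --- which applies equally to the paper's own outline --- is that the actual bookkeeping for the $a_{\alpha\beta}$ entries, where the signs $\epsilon_{\alpha}$ and exponents $\bar{\alpha}$ must be checked to respect the relabelling $\alpha\mapsto\alpha\pm n$, is asserted rather than carried out.
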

A proof of (\ref{RmatpropCD}) is outlined in Sec. 
\ref{sec:proofdualitylemma}. 

The main duality result is given by the following proposition:
%Prop 2
\begin{prop}
For the cases $C_{n}^{(1)}$ and $D_{n}^{(1)}$, the transfer matrix 
has the duality symmetry
\be
{\cal U}\, t(u,p)\, {\cal U} = f(u,p)\, t(u,n-p)  \,, 
\label{duality}
\ee 
where ${\cal U}$ is the quantum-space operator 
\be
{\cal U} = U_{1}\ldots U_{N} \,, \qquad {\cal U}^{2} = \id^{\otimes 
N}\,,
\label{Ucal}
\ee
and the scalar factor $f(u,p)$ is given by 
\be
f(u,p) = f^{L}(u,p)\, f^{R}(u,p) \,.
\ee 
\end{prop}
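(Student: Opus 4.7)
The plan is to conjugate $t(u,p)$ directly by ${\cal U}$, push the two copies of ${\cal U}$ inward through the single-row monodromy matrices using the two versions of Lemma \ref{lemma:5}, and finally use cyclicity of the auxiliary trace together with the K-matrix identities (\ref{dualityK}) to read off $t(u,n-p)$ up to the expected scalar factor.

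First I would work out how ${\cal U}$ conjugates $T_a(u)$ and $\widehat{T}_a(u)$. Since each $U_k$ acts only on quantum site $k$, it commutes with $R_{aj}(u)$ (and with $R_{ja}(u)$) for $j \ne k$, and since ${\cal U}^2 = \id^{\otimes N}$, one may insert ${\cal U}^2$ between successive R-factors to get
\begin{align}
{\cal U}\, T_a(u)\, {\cal U} &= \bigl[U_N R_{aN}(u) U_N\bigr]\, \bigl[U_{N-1} R_{a,N-1}(u) U_{N-1}\bigr]\, \cdots\, \bigl[U_1 R_{a1}(u) U_1\bigr] , \\
{\cal U}\, \widehat{T}_a(u)\, {\cal U} &= \bigl[U_1 R_{1a}(u) U_1\bigr]\, \cdots\, \bigl[U_N R_{Na}(u) U_N\bigr] .
\end{align}
Applying the two identities in (\ref{RmatpropCD}) converts each $U_k R_{ak}(u) U_k$ into $W_a(u)\, R_{ak}(u)\, W_a(u)$ and each $U_k R_{ka}(u) U_k$ into $W_a^t(u)\, R_{ka}(u)\, W_a^t(u)$. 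The internal factors of $W_a(u)$ (respectively $W_a^t(u)$) then telescope through $W(u)^2 = \id$, so that
\begin{align}
{\cal U}\, T_a(u)\, {\cal U} &= W_a(u)\, T_a(u)\, W_a(u), \\
{\cal U}\, \widehat{T}_a(u)\, {\cal U} &= W_a^t(u)\, \widehat{T}_a(u)\, W_a^t(u).
\end{align}

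Next, since $K_a^L(u,p)$ and $K_a^R(u,p)$ live only in the auxiliary space, they commute with ${\cal U}$. Inserting factors of ${\cal U}^2 = \id^{\otimes N}$ into (\ref{transfer}) therefore gives
\begin{equation}
{\cal U}\, t(u,p)\, {\cal U} = \tr_a\bigl[ K_a^L(u,p)\, W_a(u)\, T_a(u)\, W_a(u)\, K_a^R(u,p)\, W_a^t(u)\, \widehat{T}_a(u)\, W_a^t(u) \bigr].
\end{equation}
A single cyclic rotation of the trace then brings the trailing $W_a^t(u)$ to the front, producing the two sandwiches $W_a^t(u)\, K_a^L(u,p)\, W_a(u)$ and $W_a(u)\, K_a^R(u,p)\, W_a^t(u)$ that appear in (\ref{dualityK}), which equal $f^L(u,p)\, K_a^L(u,n-p)$ and $f^R(u,p)\, K_a^R(u,n-p)$ respectively. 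Pulling the scalars out of the trace yields $f^L(u,p) f^R(u,p)\, t(u,n-p) = f(u,p)\, t(u,n-p)$.

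I expect the only delicate point to be the bookkeeping of $W(u)$ versus $W^t(u)$: the two lines of (\ref{RmatpropCD}) produce $W_a(u)$ on the $T_a$ side and $W_a^t(u)$ on the $\widehat{T}_a$ side, and this asymmetry is precisely what matches the two lines of (\ref{dualityK}) after a single cyclic rotation. Once the sides are tracked carefully the argument is mechanical; the genuinely substantive work — the R-matrix identity (\ref{RmatpropCD}) itself — has already been isolated into Lemma \ref{lemma:5}.
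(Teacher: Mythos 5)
Your proposal is correct and follows essentially the same route as the paper: establish ${\cal U}\, T_{a}(u)\, {\cal U} = W_{a}(u)\, T_{a}(u)\, W_{a}(u)$ and ${\cal U}\, \widehat{T}_{a}(u)\, {\cal U} = W^{t}_{a}(u)\, \widehat{T}_{a}(u)\, W^{t}_{a}(u)$ from (\ref{RmatpropCD}), then substitute into (\ref{transfer}) and invoke (\ref{dualityK}). You merely spell out the telescoping via $W(u)^2=\id$ and the cyclicity of the auxiliary-space trace, which the paper leaves implicit.
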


\begin{proof}
\quad 
We see from (\ref{RmatpropCD}) that the monodromy matrices (\ref{monodromy}) transform as 
follows
\begin{align}
{\cal U}\, T_{a}(u)\, {\cal U} &= W_{a}(u)\, T_{a}(u)\, W_{a}(u) \,, \non \\
{\cal U}\, \widehat{T}_{a}(u)\, {\cal U} &= W^{t}_{a}(u)\, 
\widehat{T}_{a}(u)\, W^{t}_{a}(u) \,.
\label{dualitymonod}
\end{align}
Evaluating ${\cal U}\, t(u,p)\, {\cal U}$ using the definition (\ref{transfer}) of the 
transfer matrix together with (\ref{dualitymonod}) and 
(\ref{dualityK}), we arrive at the desired result (\ref{duality}).
\end{proof}
A similar duality symmetry was noted for the case $A_{n-1}^{(1)}$ in 
\cite{Doikou:1998ek}.

As a consequence of the duality symmetry (\ref{duality}), for each eigenvalue 
$\Lambda(u,p)$ of $t(u,p)$, there is a corresponding 
eigenvalue $\Lambda(u,n-p)$ of $t(u,n-p)$ such that 
\be
\Lambda(u,p)  = f(u,p)\, \Lambda(u,n-p) \,.
\label{duality2}
\ee

\subsection{Action of duality on the QG generators}

For the case $C_{n}^{(1)}$, the transfer matrix $t(u,p)$ has the symmetry 
$U_{q}(C_{n-p}) \otimes U_{q}(C_{p})$ (see again Table 
\ref{table:symmetries}), while $t(u,n-p)$ (its image under the 
duality transformation (\ref{duality})) has the symmetry 
$U_{q}(C_{p}) \otimes U_{q}(C_{n-p})$. Under a duality 
transformation, the generators of the ``left'' 
symmetry factor of $t(u,p)$ (namely, $U_{q}(C_{n-p})$)  are mapped to 
the generators of the ``right''  symmetry factor of $t(u,n-p)$ 
(which is also $U_{q}(C_{n-p})$). Similarly, the generators of the ``right'' 
symmetry factor of $t(u,p)$ (namely, $U_{q}(C_{p})$)  are mapped to 
the generators of the ``left''  symmetry factor of $t(u,n-p)$ 
(which is also $U_{q}(C_{p})$). The case $D_{n}^{(1)}$ is identical, except with 
$D$'s replacing the $C$'s. In other words,
\begin{align}
U\, H_{i}^{(l)}(p)\, U &=   H_{i}^{(r)}(n-p)\,, &
U\, E_{i}^{\pm\, (l)}(p)\, U &=   E_{i}^{\pm\, (r)}(n-p)\,,
\qquad i = 1, 2, \ldots, n-p \,, \non \\
U\, H_{i}^{(r)}(p)\, U &=   H_{i}^{(l)}(n-p)\,, &
U\, E_{i}^{\pm\, (r)}(p)\, U &=   E_{i}^{\pm\, (l)}(n-p)\,,
\qquad i = 1, 2, \ldots, p \,.
\label{dualitygenerators}
\end{align}
and similarly for the coproducts. In order to obtain the general 
result (\ref{prop3}), we need a few lemmas.

%Lemma 5->6
\begin{lemma}
\be
W^{t}(u) = B(u,n-p)\, U\, B(-u,p) \,,
\label{lemma5}
\ee
where $U$ and $W(u)$ are given by (\ref{UWmats}).
\end{lemma}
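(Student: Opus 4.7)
The plan is to verify the claim by a direct diagonal-block computation. Since $d=2n$ for the cases $C_{n}^{(1)}$ and $D_{n}^{(1)}$, I will write every matrix in the $n+n$ block decomposition that is manifest in the definitions of $U$ and $W(u)$ in (\ref{UWmats}).

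First, I would expand
\be
U = \sum_{i=1}^{n}\big(e_{i,i+n} + e_{i+n,i}\big)\,,
\ee
so that conjugation by the diagonal matrices $B(u,n-p)$ on the left and $B(-u,p)$ on the right preserves the anti-block-diagonal structure of $U$:
\be
B(u,n-p)\, U\, B(-u,p) = \sum_{i=1}^{n}\Big( b_{i}(u)\, c_{i+n}(u)\, e_{i,i+n} + b_{i+n}(u)\, c_{i}(u)\, e_{i+n,i}\Big)\,,
\ee
where $b_{\alpha}(u)$ and $c_{\alpha}(u)$ denote the diagonal entries of $B(u,n-p)$ and $B(-u,p)$, respectively.

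Next, I would read off from (\ref{Bgauge}) the piecewise values of $b_{\alpha}(u)$ in the three index regions $[1,n-p]$, $[n-p+1,n+p]$, $[n+p+1,2n]$ (with values $e^{u/2}, 1, e^{-u/2}$), and similarly the values of $c_{\alpha}(u)$ in $[1,p]$, $[p+1,2n-p]$, $[2n-p+1,2n]$ (with values $e^{-u/2}, 1, e^{u/2}$). The key combinatorial observation is that for every $i=1,\ldots,n$, the index $i$ lies in a region where $b_{i}$ is either $e^{u/2}$ or $1$, while the index $i+n$ lies in a region where $c_{i+n}$ is either $1$ or $e^{u/2}$, and in each of the two resulting subcases (determined by whether $i\le n-p$ or $i>n-p$) exactly one factor is nontrivial; the product is therefore uniformly $b_{i}c_{i+n}=e^{u/2}$ for $i=1,\ldots,n$. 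The analogous bookkeeping for $b_{i+n}c_{i}$ yields $e^{-u/2}$ uniformly.

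Substituting these uniform values gives
\be
B(u,n-p)\, U\, B(-u,p) = e^{u/2}\sum_{i=1}^{n} e_{i,i+n} + e^{-u/2}\sum_{i=1}^{n} e_{i+n,i}\,,
\ee
which is precisely $W^{t}(u)$ from (\ref{UWmats}). The only obstacle is the casework, which is purely mechanical once the three regions of $B(u,n-p)$ and $B(-u,p)$ are aligned against the permutation prescribed by $U$.
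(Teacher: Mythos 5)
Your proposal is correct and follows essentially the same route as the paper: a direct computation of $B(u,n-p)\,U\,B(-u,p)$, which the paper performs by multiplying explicit block matrices and you perform by tracking the diagonal entries $b_{\alpha},c_{\alpha}$ against the permutation encoded in $U=\sum_{i}(e_{i,i+n}+e_{i+n,i})$. The index bookkeeping you describe checks out in every case and reproduces $W^{t}(u)$ exactly as in the paper.
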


\begin{proof}
We evaluate the RHS by writing all three matrices in terms of $n \times n$ blocks:
\begin{align}
&RHS  \non\\
&=  \left(\begin{array}{cc|cc}
e^{\frac{u}{2}}\id_{(n-p) \times (n-p)} & & & \\
& \id_{p \times p} & & \\
\hline
& & \id_{p \times p} &  \\
& & & e^{-\frac{u}{2}}\id_{(n-p) \times (n-p)}
\end{array}\right) 
\left(\begin{array}{c|c}
  &  \id_{n \times n} \\
\hline
\id_{n \times n}  &  
\end{array}\right)\, B(-u,p) \non\\
&= \left(\begin{array}{cc|cc}
& & e^{\frac{u}{2}}\id_{(n-p) \times (n-p)} & \\
& & & \id_{p \times p} \\
\hline
\id_{p \times p} &  & & \\
& e^{-\frac{u}{2}}\id_{(n-p) \times (n-p)} 
\end{array}\right)
\left(\begin{array}{cc|cc}
e^{-\frac{u}{2}}\id_{p \times p} & & & \\
& \id_{(n-p) \times (n-p)} & & \\
\hline
& & \id_{(n-p) \times (n-p)} &  \\
& & & e^{\frac{u}{2}}\id_{p \times p}
\end{array}\right) \non\\
&= \left(\begin{array}{c|c}
  &  e^{\frac{u}{2}}\id_{n \times n} \\
\hline
e^{-\frac{u}{2}}\id_{n \times n}  &  
\end{array}\right) = LHS \,.
\end{align}
\end{proof}

%Lemma 6->7
\begin{lemma}
The gauge-transformed R-matrices for $C_{n}^{(1)}$ and $D_{n}^{(1)}$ obey
\be
U_{1}\, \tilde{R}_{12}(u,p)\, U_{1} = U_{2}\, \tilde{R}_{12}(u,n-p)\, 
U_{2} \,.
\label{lemma6}
\ee
\end{lemma}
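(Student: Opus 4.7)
The plan is to unfold the gauge transformation (\ref{gaugeR}) in the form $\tilde{R}_{12}(u,p) = B_{1}(u,p)\, R_{12}(u)\, B_{1}(-u,p)$, then commute the outer $U_{1}$'s past the $B_{1}$'s using two intertwining identities derived from Lemma \ref{lemma:5}. These identities will simultaneously swap $p \leftrightarrow n-p$ and produce a pair of $W_{1}(u)$'s sandwiching $R_{12}(u)$. The second relation in (\ref{RmatpropCD}) then converts this sandwich into $U_{2}\, R_{12}(u)\, U_{2}$. Since $B_{1}$ and $U_{2}$ act on different auxiliary spaces they commute, so the $U_{2}$'s can be pulled outside, and refolding the gauge transformation on space $1$ with parameter $n-p$ produces the desired right-hand side.

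The first step is to derive from Lemma \ref{lemma:5} the two commutation identities
\begin{align}
U\, B(u,p) &= B(u,n-p)\, W(u)\,,\\
B(-u,p)\, U &= W(u)\, B(-u,n-p)\,.
\end{align}
The first follows from rewriting $W^{t}(u) = B(u,n-p)\, U\, B(-u,p)$ as $U\, B(-u,p) = B(-u,n-p)\, W^{t}(u)$ and invoking the elementary observation $W^{t}(u) = W(-u)$ to send $u \to -u$. The second is the transpose of the first (using $B^{t}=B$, $U^{t}=U$, and $W^{t}(u)=W(-u)$) followed by the same sign flip.

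Next I would substitute both identities into $U_{1}\, \tilde{R}_{12}(u,p)\, U_{1} = U_{1}\, B_{1}(u,p)\, R_{12}(u)\, B_{1}(-u,p)\, U_{1}$, obtaining $B_{1}(u,n-p)\, W_{1}(u)\, R_{12}(u)\, W_{1}(u)\, B_{1}(-u,n-p)$. The second relation of (\ref{RmatpropCD}) gives $W_{1}(u)\, R_{12}(u)\, W_{1}(u) = U_{2}\, R_{12}(u)\, U_{2}$; because $B_{1}$ and $U_{2}$ act on disjoint spaces, the $U_{2}$'s factor outside to produce $U_{2}\, B_{1}(u,n-p)\, R_{12}(u)\, B_{1}(-u,n-p)\, U_{2}$, which is $U_{2}\, \tilde{R}_{12}(u,n-p)\, U_{2}$ by (\ref{gaugeR}) applied with parameter $n-p$.

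The main obstacle is the careful bookkeeping of signs and indices in the intertwining identities for $U$, $W(u)$, and $B(u,p)$: Lemma \ref{lemma:5} is stated for one specific combination of $\pm u$ and $p$ vs.\ $n-p$, and the two forms needed here must be generated by transposition and sign reversal, with $W^{t}(u) = W(-u)$ playing a quiet but essential role. Once these identities are in hand, no further input is needed beyond the two parts of Lemma \ref{lemma:5} and the fact that $B_{1}$ commutes with $U_{2}$.
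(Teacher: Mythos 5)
Your proof is correct and follows essentially the same route as the paper's: the paper simply uses the other form of the gauge transformation in (\ref{gaugeR}), namely $\tilde{R}_{12}(u,p)=B_{2}(-u,p)R_{12}(u)B_{2}(u,p)$, so that $U_{1}$ commutes trivially past $B_{2}$ and the first relation of (\ref{RmatpropCD}) produces $W_{2}^{t}$'s, which are then absorbed via the same consequences of (\ref{lemma5}) that you derive. Your mirror-image version (dressing on space $1$, using the second relation of (\ref{RmatpropCD}), with the intertwining identities $U\,B(u,p)=B(u,n-p)\,W(u)$ and $B(-u,p)\,U=W(u)\,B(-u,n-p)$, both of which check out) is equally valid.
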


\begin{proof}   
Recalling the definition of the gauge-transformed R-matrix (\ref{gaugeR}), we see that 
\begin{align}
U_{1}\, \tilde{R}_{12}(u,p)\, U_{1} &= U_{1}\, B_{2}(-u,p)\, 
R_{12}(u)\, B_{2}(u,p)\, U_{1} \non \\
&= B_{2}(-u,p)\, U_{1}\, 
R_{12}(u)\, U_{1}\, B_{2}(u,p) \non \\
&= B_{2}(-u,p)\,  W^{t}_{2}(u)\,
R_{12}(u)\,  W^{t}_{2}(u)\, B_{2}(u,p) \non \\
&= U_{2}\, B_{2}(-u,n-p)\,
R_{12}(u)\, B_{2}(u,n-p)\, U_{2} \non \\
&= U_{2}\, \tilde{R}_{12}(u,n-p)\, U_{2} \,.
\label{lemma6proof}
\end{align}
In passing to the third line, we have used the result 
(\ref{RmatpropCD}); in passing to the fourth line, we use
\be
B(-u,p)\, W^{t}(u)  = U\, B(-u,n-p)\,, \qquad  W^{t}(u)\, B(u,p)  = B(u,n-p)\, U\,,
\ee
which follow from (\ref{lemma5}); and we pass to the last line 
using again the definition of the gauge-transformed R-matrix.
\end{proof}

%Lemma 7->8
\begin{lemma}
The gauge-transformed monodromy matrices for $C_{n}^{(1)}$ and $D_{n}^{(1)}$
transform under duality as 
\be
{\cal U}\, \tilde{T}_{a}(u,p)\, {\cal U} = U_{a}\, \tilde{T}_{a}(u,n-p)\, 
U_{a} \,,
\label{lemma7}
\ee
where ${\cal U}$ is given by (\ref{Ucal}).
\end{lemma}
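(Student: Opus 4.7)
The plan is to reduce the statement to $N$ independent applications of Lemma 6, using the factorized form of $\tilde{T}_a(u,p)$ together with the fact that $\mathcal{U}$ is a product of mutually commuting involutions, one on each quantum site. The only nontrivial input will be Lemma 6; everything else is bookkeeping using $U_i^{2}=\id$ and the fact that $U_i$ and $\tilde{R}_{aj}$ act on disjoint spaces when $i\ne j$.

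First, I would insert $\id^{\otimes N}=\mathcal{U}\mathcal{U}$ between each pair of consecutive R-factors in $\tilde{T}_a(u,p)=\tilde{R}_{aN}(u,p)\cdots\tilde{R}_{a1}(u,p)$, so that
\begin{equation}
\mathcal{U}\,\tilde{T}_a(u,p)\,\mathcal{U}
=\bigl[\mathcal{U}\,\tilde{R}_{aN}(u,p)\,\mathcal{U}\bigr]\bigl[\mathcal{U}\,\tilde{R}_{a,N-1}(u,p)\,\mathcal{U}\bigr]\cdots\bigl[\mathcal{U}\,\tilde{R}_{a1}(u,p)\,\mathcal{U}\bigr].
\end{equation}
This reduces the lemma to computing the single-site conjugation $\mathcal{U}\,\tilde{R}_{aj}(u,p)\,\mathcal{U}$ for each $j=1,\ldots,N$.

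Next, for fixed $j$, I would exploit that $U_i$ with $i\ne j$ acts trivially on all spaces where $\tilde{R}_{aj}(u,p)$ acts, so $[U_i,\tilde{R}_{aj}(u,p)]=0$. Moving all such $U_i$'s in $\mathcal{U}=U_1\cdots U_N$ through $\tilde{R}_{aj}(u,p)$ from both sides, they meet their partners from the mirror copy of $\mathcal{U}$ and cancel in pairs because $U_i^{2}=\id$. Hence
\begin{equation}
\mathcal{U}\,\tilde{R}_{aj}(u,p)\,\mathcal{U}
=U_j\,\tilde{R}_{aj}(u,p)\,U_j
=U_a\,\tilde{R}_{aj}(u,n-p)\,U_a,
\end{equation}
where the last equality is precisely Lemma 6 applied on the auxiliary space $a$ and the $j$-th quantum site.

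Finally, I would reassemble the product and use $U_a^{2}=\id$ to telescope the interior $U_a$ factors:
\begin{equation}
\mathcal{U}\,\tilde{T}_a(u,p)\,\mathcal{U}
=U_a\,\tilde{R}_{aN}(u,n-p)\,\underbrace{U_a U_a}_{=\id}\,\tilde{R}_{a,N-1}(u,n-p)\,\underbrace{U_a U_a}_{=\id}\cdots\tilde{R}_{a1}(u,n-p)\,U_a,
\end{equation}
which collapses to $U_a\,\tilde{T}_a(u,n-p)\,U_a$, as desired.

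There is no real obstacle here beyond careful bookkeeping; the one subtlety worth emphasizing is that Lemma 6 is exactly the statement that conjugation by $U_j$ on a quantum site is traded for conjugation by $U_a$ on the auxiliary site while simultaneously swapping $p\leftrightarrow n-p$. It is this simultaneous swap of the acting space and of the discrete parameter that both converts the quantum-space product $\mathcal{U}$ into the single auxiliary factor $U_a$ flanking the final expression and produces the transfer-matrix duality $p\leftrightarrow n-p$.
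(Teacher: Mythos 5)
Your proof is correct and is exactly the argument the paper has in mind: the paper simply states that the result ``follows immediately'' from the definition (\ref{monodromygauge}) and Lemma 6, and your write-up supplies precisely that bookkeeping (site-by-site conjugation, commutation of $U_i$ with $\tilde{R}_{aj}$ for $i\ne j$, Lemma 6 with $p\leftrightarrow n-p$, and telescoping of the interior $U_a$ factors).
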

\begin{proof} 
This result follows immediately from the definition of 
$\tilde{T}_{a}(u,p)$ (\ref{monodromygauge}) and the result 
(\ref{lemma6}).
\end{proof}     

Finally, taking asymptotic limits of the result (\ref{lemma7}), we 
obtain the sought-after result:

%Prop 3
\begin{prop}
For the cases $C_{n}^{(1)}$ and $D_{n}^{(1)}$, the asymptotic 
gauge-transformed monodromy matrices
$\tilde{T}^{\pm}_{a}(p)$ transform under duality as 
\be
{\cal U}\, \tilde{T}^{\pm}_{a}(p)\, {\cal U} = U_{a}\, \tilde{T}^{\pm}_{a}(n-p)\, 
U_{a} \,.
\label{prop3}
\ee
\end{prop}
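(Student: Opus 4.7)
The proof will be a short limiting argument, and the strategy is essentially forced by the hint already given in the excerpt: take the appropriate asymptotic limit of the identity in Lemma 7. First, I would start from
\be
\mathcal{U}\, \tilde{T}_{a}(u,p)\, \mathcal{U} = U_{a}\, \tilde{T}_{a}(u,n-p)\, U_{a}
\ee
and multiply both sides by the scalar $e^{\mp N u}$, where $N$ is the length of the chain. Because $\mathcal{U} = U_{1}\cdots U_{N}$ and $U_{a}$ are independent of the spectral parameter $u$, they commute with this scalar factor and with the operation of taking $u \to \pm\infty$, so both limits may be pulled under $\mathcal{U}$ and $U_{a}$ on their respective sides.

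The key observation is then that the asymptotic monodromy matrix defined in (\ref{tildeTpm}) agrees with the rescaled limit of the full monodromy matrix: since (\ref{Rtildeasym}) gives a finite limit for each factor $\tilde{R}^{\pm}_{aj}(p) = \lim_{u\to\pm\infty} e^{\mp u}\,\tilde{R}_{aj}(u,p)$, the limit distributes over the $N$-fold product in (\ref{monodromygauge}), so
\be
\tilde{T}^{\pm}_{a}(p) \;=\; \lim_{u\to\pm\infty} e^{\mp N u}\, \tilde{T}_{a}(u,p)\,,
\ee
and the same identity holds with $p$ replaced by $n-p$. Substituting these two equalities into the rescaled Lemma 7 identity immediately yields (\ref{prop3}).

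There is no serious obstacle here: the only thing to be slightly careful about is that the rescaling by $e^{\mp N u}$ is compatible with the product-of-limits definition of $\tilde{T}^{\pm}_{a}$, which reduces to the elementary fact that finite matrix-valued limits commute with matrix multiplication. All the real content has already been packed into Lemmas 5--7, and in particular into the duality property (\ref{RmatpropCD}) of the R-matrices and the algebraic identity (\ref{lemma5}) relating $W^{t}(u)$ to gauge factors at $p$ and $n-p$; once those are in hand, the passage to the asymptotic statement is essentially automatic.
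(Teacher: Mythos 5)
Your proof is correct and follows the same route as the paper, which simply states that the result is obtained by ``taking asymptotic limits of the result (\ref{lemma7})''; you have merely made explicit the rescaling by $e^{\mp Nu}$ and the fact that the limit distributes over the $N$-fold product, details the paper leaves implicit.
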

\noindent
From the result (\ref{prop3}), we can read off the 
transformation properties of the coproducts of 
the QG generators under duality, thereby generalizing (\ref{dualitygenerators}).

\subsection{Self-duality}

For $p=\frac{n}{2}$ with $n$ even, we see that the duality relation 
(\ref{duality}) implies that the transfer matrix is self-dual
\be
\left[ {\cal U}\,,  t(u, \tfrac{n}{2}) \right] = 0 \,,
\label{selfdual}
\ee
since $f(u,\frac{n}{2}) = 1$. 
This self-duality symmetry maps the ``left'' and ``right'' generators into each other
\begin{align}
U\, H_{i}^{(l)}(\tfrac{n}{2})\, U &=   H_{i}^{(r)}(\tfrac{n}{2})\,, &
U\, E_{i}^{\pm\, (l)}(\tfrac{n}{2})\, U &=   E_{i}^{\pm\, (r)}(\tfrac{n}{2})\,,
\qquad i = 1, 2, \ldots, \tfrac{n}{2} \,, 
\end{align}
as follows from (\ref{dualitygenerators}). Hence, this symmetry 
maps the representations $(1, {\bf R})$ and $({\bf R}, 1)$ (i.e., 
with ``left'' and ``right'' singlets, respectively) into each other;
and therefore these states are degenerate (i.e., have the same 
transfer-matrix eigenvalue). This degeneracy is discussed further
in Section \ref{sec:degen}.

\subsubsection{Bonus symmetry for $\gamma_{0}=-1$}\label{sec:bonus} 

For the self-dual cases (namely, $C_{n}^{(1)}$ and $D_{n}^{(1)}$ with $p=\frac{n}{2}$ 
and $n$ even) with $\gamma_{0}=-1$, there is an additional (``bonus'') 
symmetry, which leads to even higher degeneracies for the 
transfer-matrix eigenvalues.

In order to exhibit this symmetry, it is convenient to introduce the 
matrix $\bar{U}$, which is similar to the duality matrix $U$ (\ref{UWmats}),
\be
\bar{U} = \left(\begin{array}{cc|cc}
  &  & i\id_{\frac{n}{2} \times \frac{n}{2}} & \\
& & &  -i\id_{\frac{n}{2} \times \frac{n}{2}} \\
\hline
-i\id_{\frac{n}{2} \times \frac{n}{2}} &  &  \\
& i\id_{\frac{n}{2} \times \frac{n}{2}} &  &  
\end{array}\right)_{2n \times 2n} \,, \qquad  \bar{U}^{2} = \id \,, 
\label{barUmat}
\ee
and which satisfies
\be
\bar{U} U = - U \bar{U} = i D\,,
\label{prodbarUU}
\ee
where $D$ is the diagonal matrix
\be
D = \diag \big( \underbrace{1\,, \ldots\,, 1}_{\frac{n}{2}}\,, \underbrace{-1\,, \ldots\,, -1}_{n}\,,
\underbrace{1\,, \ldots\,, 1}_{\frac{n}{2}}\big) \,.
\label{Dmat}
\ee
Similarly to (\ref{lemma6}), we find that the gauge-transformed 
R-matrix obeys
\be
\bar{U}_{1}\, \tilde{R}_{12}(u,\tfrac{n}{2})\, \bar{U}_{1} = 
\bar{U}_{2}\, \tilde{R}_{12}(u,\tfrac{n}{2})\, 
\bar{U}_{2} \,,
\label{UbartildeR}
\ee
as well as 
\be
D_{1}\, \tilde{R}_{12}(u,\tfrac{n}{2})\, D_{1} = 
D_{2}\, \tilde{R}_{12}(u,\tfrac{n}{2})\, 
D_{2} \,.
\label{DtildeR}
\ee
Moreover, the gauge-transformed right K-matrix (\ref{mostlyones}) is 
equal to $D$ \footnote{We emphasize that the result (\ref{tildeKspecial}) holds only for 
$\gamma_{0}=-1$, and we assume that $\gamma_{0}=-1$ in the remainder 
of this subsection.}
\be
\tilde{K}^{R}(u,\tfrac{n}{2}) = D \,.
\label{tildeKspecial}
\ee
It follows from the BYBE (\ref{BYBEm}) that
\be
\tilde{R}_{12}(u - v,\tfrac{n}{2})\, D_1\ \tilde{R}_{21} (u + v,\tfrac{n}{2})\, D_2
= D_2\, \tilde{R}_{12}(u + v,\tfrac{n}{2})\, D_1\, \tilde{R}_{21}(u - 
v,\tfrac{n}{2})  \,.
\label{BYBEresult}
\ee

The key result is given by the following proposition
 
%New Prop 4
\begin{prop}
For the cases $C_{n}^{(1)}$ and $D_{n}^{(1)}$ with $p=\frac{n}{2}$ 
($n$ even) and $\gamma_{0}=-1$, the transfer matrix has the bonus symmetry    
\be
\left[{\cal \bar{U}}\,, t(u,\tfrac{n}{2}) \right] = 0 \,,
\label{propbonus}
\ee
where ${\cal \bar{U}}$ is the quantum-space operator given by 
\footnote{Note that ${\cal \bar{U}}$ contains only one factor of
$\bar{U}$; all the other factors are $U$.}
\be
{\cal \bar{U}} = \bar{U}_{1} U_{2} \cdots U_{N} \,, \qquad {\cal 
\bar{U}}^{\, 2} = \id^{\otimes N}\,.
\label{calbarU}
\ee
\end{prop}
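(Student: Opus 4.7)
The plan is to reduce the bonus commutativity to the single-site statement $[D_1, t(u,n/2)]=0$, by first trading $\bar{\mathcal{U}}$ for $\mathcal{U}$ via a local relation and then invoking the already-established self-duality (\ref{selfdual}).

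First I would factor $\bar{\mathcal{U}}$ through $\mathcal{U}$. Since $\bar{U}U = iD$ by (\ref{prodbarUU}), we have $\bar{U} = iDU$, and therefore
\[
\bar{\mathcal{U}} \;=\; \bar{U}_1 U_2\cdots U_N \;=\; iD_1\,U_1U_2\cdots U_N \;=\; iD_1\,\mathcal{U}.
\]
Using $UD = -DU$ at site $1$ together with $[U_j, D_1]=0$ for $j\geq 2$, one checks immediately that $\mathcal{U}\,D_1 = -D_1\,\mathcal{U}$. Combining this anticommutation with the self-duality identity (\ref{selfdual}), i.e.\ $\mathcal{U}\,t(u,\tfrac{n}{2}) = t(u,\tfrac{n}{2})\,\mathcal{U}$, a short manipulation gives
\[
\bar{\mathcal{U}}\,t(u,\tfrac{n}{2})\,\bar{\mathcal{U}}
\;=\; -D_1\,\mathcal{U}\,t(u,\tfrac{n}{2})\,D_1\,\mathcal{U}
\;=\; -D_1\,t(u,\tfrac{n}{2})\,(\mathcal{U}\, D_1)\,\mathcal{U}
\;=\; D_1\,t(u,\tfrac{n}{2})\,D_1 .
\]
Thus the bonus symmetry (\ref{propbonus}) is equivalent to $[D_1, t(u,n/2)]=0$.

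For the remaining step, I would write the gauge-transformed transfer matrix as $t(u,n/2) = \tr_a \tilde{K}^L_a\, Y_a(u)$ with
\[
Y_a(u) \;=\; \tilde{T}_a(u,\tfrac{n}{2})\,D_a\,\widehat{\tilde{T}}_a(u,\tfrac{n}{2}) \;=\; \tilde{R}_{aN}\cdots\tilde{R}_{a2}\,\bigl[\tilde{R}_{a1}(u,\tfrac{n}{2})\,D_a\,\tilde{R}_{1a}(u,\tfrac{n}{2})\bigr]\,\tilde{R}_{2a}\cdots\tilde{R}_{Na}.
\]
Since $D_1$ acts trivially on the auxiliary space and on all sites $j\neq 1$, it commutes with $\tilde{K}^L_a$ and with every $\tilde{R}_{aj},\tilde{R}_{ja}$ for $j\geq 2$. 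Hence $[D_1, Y_a(u)] = \bigl[D_1,\ \tilde{R}_{a1}(u,n/2)\,D_a\,\tilde{R}_{1a}(u,n/2)\bigr]$, and the proof reduces to showing that this inner commutator vanishes. This is precisely the content of the BYBE (\ref{BYBEresult}) specialized to $v=0$, after relabelling $(1,2)\to(a,1)$:
\[
\tilde{R}_{a1}(u,\tfrac{n}{2})\,D_a\,\tilde{R}_{1a}(u,\tfrac{n}{2})\,D_1 \;=\; D_1\,\tilde{R}_{a1}(u,\tfrac{n}{2})\,D_a\,\tilde{R}_{1a}(u,\tfrac{n}{2}).
\]
Taking the trace over the auxiliary space then gives $[D_1, t(u,n/2)]=0$, which by the reduction above is equivalent to (\ref{propbonus}).

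The main obstacle, modest in this argument, is recognizing that site~$1$ is singled out because—in the convention $\tilde{T}_a = \tilde{R}_{aN}\cdots\tilde{R}_{a1}$ and $\widehat{\tilde{T}}_a = \tilde{R}_{1a}\cdots\tilde{R}_{Na}$—its R-matrices are the unique ones adjacent to the central $\tilde{K}^R_a = D_a$ factor in $Y_a$, so that the $v=0$ BYBE applies without any intervening R-matrices to conjugate through. The hypothesis $\gamma_0 = -1$ enters only through (\ref{tildeKspecial}), which is precisely what turns $\tilde{K}^R$ into the constant matrix $D$ and lets (\ref{BYBEresult}) supply the required local commutator.
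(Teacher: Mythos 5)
Your proof is correct, but it takes a genuinely different route from the paper's. The paper proves (\ref{propbonus}) head-on: it uses the R-matrix identity (\ref{UbartildeR}) to push the factors $\bar{U}_{1}, U_{2}, \ldots, U_{N}$ from the quantum spaces onto the auxiliary space (producing insertions of $D_{a}$ via (\ref{prodbarUU})), then invokes cyclicity of the trace, $U_{a}\tilde{K}^{L}_{a}U_{a} = -\tilde{K}^{L}_{a}$, and (\ref{BYBEresult}) to reassemble $t(u,\tfrac{n}{2})$; only afterwards does it obtain $[{\cal D}, t(u,\tfrac{n}{2})]=0$ as a corollary via the Jacobi identity. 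You invert this logic: writing $\bar{U} = iDU$ and using $UD=-DU$ together with the already-established self-duality (\ref{selfdual}), you show that (\ref{propbonus}) is \emph{equivalent} to $[D_{1}, t(u,\tfrac{n}{2})]=0$, and you then prove the latter directly from the locality of $D_{1}$ and the $v=0$ specialization of (\ref{BYBEresult}). Your route buys two things: it bypasses the auxiliary identity (\ref{UbartildeR}) entirely (which the paper must verify separately, ``similarly to (\ref{lemma6})''), and it exposes the ${\cal D}$-symmetry as the primitive local statement rather than a derived corollary. The cost is that your argument leans on self-duality (\ref{selfdual}) as an input, whereas the paper's direct computation does not. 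One cosmetic point: your line ``$[D_{1}, Y_{a}(u)] = [D_{1}, \tilde{R}_{a1}D_{a}\tilde{R}_{1a}]$'' is not literally an equality of commutators (the outer R-matrices still conjugate the inner one); what you mean, and what suffices, is that $D_{1}Y_{a}(u)D_{1}=Y_{a}(u)$ follows once the inner commutator vanishes.
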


\begin{proof}
We see from (\ref{UbartildeR}) that the gauge-transformed monodromy matrices (\ref{monodromygauge}) transform as 
follows
\begin{align}
{\cal \bar{U}}\, \tilde{T}_{a}(u,\tfrac{n}{2})\, {\cal \bar{U}} &= 
-i\, U_{a} \tilde{R}_{a N}(u,\tfrac{n}{2})\, \tilde{R}_{a, 
N-1}(u,\tfrac{n}{2})\, \cdots \tilde{R}_{a 2}(u,\tfrac{n}{2})\, 
D_{a}\, \tilde{R}_{a 1}(u,\tfrac{n}{2})\, {\bar U}_{a}
 \,, \non \\
{\cal \bar{U}}\,\widehat{\tilde{T}}_{a}(u,\tfrac{n}{2})\, {\cal 
\bar{U}} &= i\, {\bar U}_{a}\, \tilde{R}_{1 a}(u,\tfrac{n}{2})\, 
D_{a}\, \tilde{R}_{2 a}(u,\tfrac{n}{2})\, \tilde{R}_{3 
a}(u,\tfrac{n}{2})\,  \cdots \tilde{R}_{N 
a}(u,\tfrac{n}{2})\, U_{a}\,,
%\label{}
\end{align}   
where we have also used (\ref{prodbarUU}). Starting from the gauge-transformed 
expression for the transfer matrix (\ref{transfergauge}), and also making 
use of (\ref{tildeKspecial}), we obtain
\begin{eqnarray}
\lefteqn{{\cal \bar{U}}\, t(u,\tfrac{n}{2})\, {\cal \bar{U}} }\non \\
&&= \tr_{a} 
\tilde{K}^{L}_{a}(u, \tfrac{n}{2})\, 
\tilde{R}_{a N}(u,\tfrac{n}{2})\, \cdots \tilde{R}_{a 2}(u,\tfrac{n}{2})\, 
D_{a}\, \tilde{R}_{a 1}(u,\tfrac{n}{2})\, D_{a} \, 
\tilde{R}_{1 a}(u,\tfrac{n}{2})\, 
D_{a}\, \tilde{R}_{2 a}(u,\tfrac{n}{2})\,  \cdots \tilde{R}_{N 
a}(u,\tfrac{n}{2}) \non \\
&&= \tr_{a} \tilde{K}^{L}_{a}(u, \tfrac{n}{2})\, 
\tilde{R}_{a N}(u,\tfrac{n}{2})\, \cdots \tilde{R}_{a 1}(u,\tfrac{n}{2})\, 
D_{a}\, 
\tilde{R}_{1 a}(u,\tfrac{n}{2})\,   \cdots \tilde{R}_{N 
a}(u,\tfrac{n}{2}) \non \\
&&= \tr_{a} \tilde{K}^{L}_{a}(u, \tfrac{n}{2})\,  \tilde{T}_{a}(u,\tfrac{n}{2})\, 
D_{a}\, \widehat {\tilde{T}}_{a}(u,\tfrac{n}{2})  \non \\
&&= t(u,\tfrac{n}{2})\,,
\end{eqnarray}
which implies the desired result (\ref{propbonus}). In passing to the 
first equality, we have used the cyclic property of the trace, and 
the fact $U_{a}\, \tilde{K}^{L}_{a}(u, \tfrac{n}{2})\, U_{a} = - 
\tilde{K}^{L}_{a}(u, \tfrac{n}{2})$; and to pass to
the second equality, we have used the result
\be
D_{a}\, \tilde{R}_{a 1}(u,\tfrac{n}{2})\, D_{a} \, 
\tilde{R}_{1 a}(u,\tfrac{n}{2})\, D_{a} = \tilde{R}_{a 1}(u,\tfrac{n}{2})\, D_{a} \, 
\tilde{R}_{1 a}(u,\tfrac{n}{2}) \,,
\ee
which follows from (\ref{BYBEresult}).  
\end{proof}

Recalling the definitions of ${\cal U}$ (\ref{Ucal}) and ${\cal 
\bar{U}}$ (\ref{calbarU}) as well as the property 
(\ref{prodbarUU}), it is easy to see that
\be
\left[ {\cal U}\,, {\cal \bar{U}} \right] = -2i\, {\cal D} \,,
\label{commcalUcalbarU}
\ee
where ${\cal D}$ is the quantum-space operator defined by
\be
{\cal D} = D_{1} = D \otimes \id^{\otimes (N-1)} \,, \qquad  {\cal 
D}^{2} =  \id^{\otimes N}\,.
\label{Dop}
\ee 
The fact that ${\cal D}$ commutes with the transfer matrix is now a 
simple corollary of (\ref{propbonus}):

\begin{corollary*}
    For the cases $C_{n}^{(1)}$ and $D_{n}^{(1)}$ with $p=\frac{n}{2}$ 
($n$ even) and $\gamma_{0}=-1$, the transfer matrix commutes with the 
operator ${\cal D}$ (\ref{Dop})
\be
    \left[ {\cal D}\,,  t(u, \tfrac{n}{2}) \right] = 0 \,.
\label{calDsymmetry}
\ee
\end{corollary*}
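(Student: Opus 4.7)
The plan is to observe that the corollary follows from the three ingredients already assembled: the self-duality $[{\cal U}, t(u,\tfrac{n}{2})] = 0$ established in (\ref{selfdual}), the bonus symmetry $[{\cal \bar U}, t(u,\tfrac{n}{2})] = 0$ just proved in (\ref{propbonus}), and the algebraic identity $[{\cal U}, {\cal \bar U}] = -2i\, {\cal D}$ in (\ref{commcalUcalbarU}). No new properties of the R-matrix or K-matrix are needed.

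Explicitly, the first step is to compute $[{\cal U}\, {\cal \bar U}\,, t(u,\tfrac{n}{2})]$ in two ways. Because $t(u,\tfrac{n}{2})$ commutes with each of ${\cal U}$ and ${\cal \bar U}$ separately, it commutes with their product, and equally with their product in the reversed order ${\cal \bar U}\, {\cal U}$. Subtracting gives
\begin{equation}
\bigl[\,[{\cal U}, {\cal \bar U}]\,,\, t(u,\tfrac{n}{2})\bigr] = 0 \,.
\end{equation}
Substituting the identity (\ref{commcalUcalbarU}) yields $[-2i\, {\cal D}\,, t(u,\tfrac{n}{2})] = 0$, and dividing by the nonzero scalar $-2i$ delivers the stated result (\ref{calDsymmetry}).

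There is no real obstacle: the argument is just the derivation rule for commutators applied to a known commutator identity, so the only thing to be careful about is citing the correct prior results in the correct order. If desired, one can alternatively verify the corollary directly from the definition (\ref{Dop}) using ${\cal D} = -\tfrac{1}{2i}({\cal U}\, {\cal \bar U} - {\cal \bar U}\, {\cal U})$, which makes manifest that ${\cal D}$ lies in the centralizer of $t(u,\tfrac{n}{2})$ generated by ${\cal U}$ and ${\cal \bar U}$.
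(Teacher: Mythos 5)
Your proposal is correct and follows essentially the same route as the paper: the paper applies the Jacobi identity to $\left[\left[{\cal U},{\cal \bar U}\right], t(u,\tfrac{n}{2})\right]$ and invokes (\ref{selfdual}) and (\ref{propbonus}), which is just a repackaging of your observation that $t(u,\tfrac{n}{2})$ commutes with both orderings of the product ${\cal U}{\cal \bar U}$. The identification of the three needed ingredients, including (\ref{commcalUcalbarU}), matches the paper exactly.
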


\begin{proof}
    Using (\ref{commcalUcalbarU}) and the Jacobi identity, we see that
\begin{align}
\left[ {\cal D}\,,  t(u, \tfrac{n}{2}) \right] &= \frac{i}{2} \left[  
\left[ {\cal U}\,, {\cal \bar{U}} \right] \,, t(u, \tfrac{n}{2}) 
\right] \non \\
&= -\frac{i}{2} \left[  
\left[ {\cal \bar{U}}\,, t(u, \tfrac{n}{2}) \right] \,, {\cal U}
\right] 
-\frac{i}{2} \left[  
\left[ t(u, \tfrac{n}{2})  \,, {\cal U} \right] \,, {\cal \bar{U}}
\right] \non \\
&=0 \,,
\end{align}
where the final equality follows from the symmetries (\ref{selfdual}) and 
(\ref{propbonus}).
\end{proof}

The symmetry (\ref{calDsymmetry}) gives rise to additional degeneracies of the 
transfer-matrix eigenvalues. Indeed, let $|\Lambda\rangle$ be a 
simultaneous eigenket of the transfer matrix and of the self-duality 
operator ${\cal U}$,
\begin{align}
t(u, \tfrac{n}{2})\, |\Lambda\rangle &= \Lambda(u, \tfrac{n}{2})\, 
|\Lambda\rangle \,, \non \\
{\cal U}\, |\Lambda\rangle &= \mu \, 
|\Lambda\rangle \,, \qquad \mu = \pm 1 \,.
\end{align}
Since ${\cal U}$ and ${\cal D}$ do not commute\footnote{Indeed,
$\left[ {\cal U}\,, {\cal D} \right] =  \left[ U\,, D \right] \otimes 
U \otimes \cdots \otimes U = 2 i\, {\cal \bar{U}}$, see (\ref{prodbarUU}).}
, $|\Lambda\rangle$ is 
not necessarily an eigenket of ${\cal D}$, in which case ${\cal D}\, 
|\Lambda\rangle$ is a linearly independent eigenket with the same 
transfer-matrix eigenvalue $\Lambda(u, \tfrac{n}{2})$ as $|\Lambda\rangle$.
Note that $|\Lambda\rangle$ necessarily belongs to a QG representation of the 
form $({\bf R}, {\bf R})$ or $({\bf R}_{1}, {\bf R}_{2}) 
\oplus ({\bf R}_{2}, {\bf R}_{1})$; hence, the bonus symmetry implies 
the existence of a second set of states of the form $({\bf R}, {\bf R})$ or $({\bf R}_{1}, {\bf R}_{2}) 
\oplus ({\bf R}_{2}, {\bf R}_{1})$. In particular, the degeneracy of the corresponding 
transfer-matrix eigenvalue becomes {\em doubled} as a consequence of 
the bonus symmetry.

\section{$Z_{2}$ symmetries}\label{sec:Z2}

We now show that the transfer matrix $t(u,p)$ (\ref{transfer}) has
a discrete ``right'' $Z_{2}$ symmetry that maps complex representations of
$U_{q}(D_{p})$ to their conjugates for the cases $A_{2n-1}^{(2)}$,
$B_{n}^{(1)}$ and $D_{n}^{(1)}$; and, for the latter case, there is an
additional ``left'' $Z_{2}$ symmetry that maps complex representations of
$U_{q}(D_{n-p})$ to their conjugates. We shall see in Section 
\ref{sec:degen} that these discrete
symmetries give rise to degeneracies in the spectrum beyond those 
expected from QG symmetry.\footnote{The $Z_{2}$ symmetry 
for the case $A_{2n-1}^{(2)}$ with $p=n$ was conjectured in \cite{Nepomechie:2017hgw}.}

\subsection{The ``right'' $Z_{2}$}

In order to prove the main result 
(\ref{Z2rightprop}), we need the following lemma:

%new Lemma 9
\begin{lemma}
The R-matrices for $A_{2n-1}^{(2)}$, $B_{n}^{(1)}$ and $D_{n}^{(1)}$ obey
\begin{align}
Z^{(r)}_{1}\, R_{12}(u)\, Z^{(r)}_{1} &= Y^{t}_{2}(u)\, 
R_{12}(u)\, Y^{t}_{2}(u)\,, \non \\
Z^{(r)}_{2}\, R_{12}(u)\, Z^{(r)}_{2} &= Y_{1}(u)\, 
R_{12}(u)\, Y_{1}(u)\,,
\label{Z2rightR}
\end{align}
where $Z^{(r)}$ and $Y(u)$ are the following $d \times d$ matrices
\begin{align}
Z^{(r)} &= \left(\begin{array}{ccc}
0 & 0 & 1 \\
0 & \id_{(d-2) \times (d-2)} & 0 \\
1 & 0 & 0
\end{array}\right)_{d \times d} \,, & Z^{(r)\, 2} &= \id \,, \non \\
Y(u) &= \left(\begin{array}{ccc}
0 & 0 & e^{-u} \\
0 & \id_{(d-2) \times (d-2)} & 0 \\
e^{u} & 0 & 0
\end{array}\right)_{d \times d} \,, & Y(u)^{2} &= \id \,. 
\label{Z2rightmat}
\end{align}
Furthermore, for $p>0$, the K-matrices (\ref{KR}) and (\ref{KL}) obey
\begin{align}
Y(u)\, K^{R}(u,p)\, Y^{t}(u) &=  K^{R}(u,p) \,, \non \\
Y^{t}(u)\, K^{L}(u,p)\, Y(u) &=  K^{L}(u,p) \,.
\label{Z2rightK}
\end{align}
\label{lemma:9}
\end{lemma}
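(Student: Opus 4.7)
The plan is to prove the lemma in two pieces, following the template of Lemma \ref{lemma:5}. The K-matrix identities (\ref{Z2rightK}) are a direct entrywise calculation, while the R-matrix identities (\ref{Z2rightR}) are the structural claim and require case-by-case verification from the explicit R-matrix data in Appendix \ref{sec:Rmatices}.

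For the K-matrix part I would first observe that $Y(u)$ is nontrivial only on indices $1$ and $d$: it sends position $d$ to position $1$ with weight $e^{-u}$, position $1$ to position $d$ with weight $e^{u}$, and leaves the middle $d-2$ indices alone. Since $K^{R}(u,p)$ is diagonal with entries $e^{-u}$ at position $1$ and $e^{u}$ at position $d$ whenever $p \ge 1$, the conjugation $Y(u)\, K^{R}(u,p)\, Y^{t}(u)$ reduces to a one-line check: the $e^{\pm u}$ weights of $Y$ and $Y^{t}$ cancel exactly against the $e^{\mp u}$ entries of $K^{R}$, and the middle block is inert. The $K^{L}$ identity then follows from $K^{L}(u,p) = K^{R}(-u-\rho,p)\, M$ together with the fact that $m_{d}/m_{1} = e^{2\rho}$ by the definition of $\rho$, which lets one push $M$ through $Y(u)$ (shifting the spectral argument by $-2\rho$) and invoke the right-K identity at argument $-u-\rho$.

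For the R-matrix part, the key simplification is to recognize that $Y(u) = B(-u,1)\, Z^{(r)}\, B(u,1)$ and $Y^{t}(u) = B(u,1)\, Z^{(r)}\, B(-u,1)$ in terms of the $p=1$ gauge matrix (\ref{Bgauge}). Substituting these and using $R_{12}(u) = B_{2}(u,1)\, \tilde{R}_{12}(u,1)\, B_{2}(-u,1)$ (together with the fact that $Z^{(r)}_{1}$ commutes with any matrix acting only on space $2$), the two relations in (\ref{Z2rightR}) both collapse to the spectral-parameter-independent statement
\be
\left[ Z^{(r)}_{1}\, Z^{(r)}_{2}\,, \tilde{R}_{12}(u,1) \right] = 0 \,,
\ee
i.e.\ the gauge-transformed R-matrix at $p=1$ is invariant under the simultaneous swap of indices $1 \leftrightarrow d$ in both auxiliary spaces. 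Conceptually this is the manifestation of the $Z_{2}$ diagram automorphism of the ``right'' $D_{p}$-type factor in Table \ref{table:symmetries}.

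The main obstacle is this last commutation. I would verify it entry by entry from the explicit R-matrices in Appendix \ref{sec:Rmatices}: only the matrix elements involving indices $1$ or $d$ are affected by the swap, and one must check that the exponents $\bar\alpha$ and the signs $\epsilon_{\alpha}$ appearing in those entries respect $\alpha \leftrightarrow \alpha'$ at the endpoints for $A_{2n-1}^{(2)}$, $B_n^{(1)}$, and $D_n^{(1)}$. The identity should fail (as expected) for $A_{2n}^{(2)}$ and $C_n^{(1)}$, where Table \ref{table:symmetries} has no $Z_{2}$ factor. The verification is tedious but routine, and is expected to parallel the strategy promised for (\ref{RmatpropCD}) in Sec.\ \ref{sec:proofdualitylemma}.
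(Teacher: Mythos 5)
Your proposal is correct in all the reductions it actually carries out, and it reaches the same computational core as the paper by a slightly different route. The paper proves (\ref{Z2rightR}) by brute force on the ungauged $R(u)$: it writes $Z^{(r)}=\id-e_{1,1}-e_{d,d}+e_{1,d}+e_{d,1}$ and $Y(u)=\id-e_{1,1}-e_{d,d}+e^{-u}e_{1,d}+e^{u}e_{d,1}$, splits $R$ into the five pieces $R^{(1)},\ldots,R^{(5)}$ of (\ref{Rterms}), and checks each contribution, the two nontrivial inputs being that only the combination $R^{(3)}+e^{u}R^{(4)}$ (not $R^{(3)}$ and $R^{(4)}$ separately) satisfies the relation, and the identities $a_{1,1}=a_{d,d}$, $a_{d,1}=e^{2u}a_{1,d}$, $a_{d,\beta}=e^{u}a_{1,\beta}$, $a_{\beta,d}=e^{-u}a_{\beta,1}$. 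You instead conjugate by the gauge matrix $B(\pm u,1)$ first, using $Y(u)=B(-u,1)\,Z^{(r)}\,B(u,1)$ (which is the paper's own (\ref{lemma8}) specialized to $p=1$, there used in the opposite direction to \emph{derive} (\ref{Z2righttildeR}) \emph{from} (\ref{Z2rightR})), so that both relations collapse to $[Z^{(r)}_{1}Z^{(r)}_{2},\tilde{R}_{12}(u,1)]=0$. This reorganization is sound and has a genuine advantage: the explicit $e^{u}$ mismatch between $R^{(3)}$ and $R^{(4)}$, and the $e^{\pm u}$ factors in the $a_{\alpha\beta}$ identities, get absorbed into the gauge, so the residual check is a clean spectral-parameter-independent symmetry of $\tilde{R}(u,1)$ under the simultaneous swap $1\leftrightarrow d$. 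Your treatment of (\ref{Z2rightK}) is also fine (the ratio $M_{dd}/M_{11}=e^{2\rho}$ does hold for all three algebras), and the paper does not even spell that part out.

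The one caveat is that the actual content of the lemma --- the entry-by-entry verification against the explicit R-matrix data, with all the bookkeeping of the restricted sums $\alpha<\beta$, $\alpha\neq\beta'$ and the values of $\bar\alpha$, $\epsilon_\alpha$ at $\alpha\in\{1,d\}$ --- is deferred in your write-up. The paper defers it too (to a Mathematica computation outlined in Sec.~\ref{sec:proofz2rlemma}), so this is not a gap relative to the published argument, but be aware that this final step is where the restriction to $A_{2n-1}^{(2)}$, $B_{n}^{(1)}$, $D_{n}^{(1)}$ (and the failure for $A_{2n}^{(2)}$, $C_{n}^{(1)}$) actually enters, and it is not shorter than the paper's version; your gauge-frame reformulation changes where the $e^{u}$ factors live but not the amount of case checking.
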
   
A proof of (\ref{Z2rightR}) is outlined in Sec. 
\ref{sec:proofz2rlemma}. 

The main result concerning the ``right'' $Z_{2}$ symmetry
is contained in the following proposition:
    
%Prop 4
\begin{prop}
For the cases $A_{2n-1}^{(2)}$, $B_{n}^{(1)}$ and $D_{n}^{(1)}$ with 
$p>0$,  the transfer matrix has the ``right'' $Z_{2}$ symmetry  
\be
\left[{\cal Z}^{(r)}\,, t(u,p) \right] = 0 \,, 
\label{Z2rightprop}
\ee 
where ${\cal Z}^{(r)}$ is the quantum-space operator 
\be
{\cal Z}^{(r)} = Z_{1}^{(r)}\ldots Z_{N}^{(r)} \,, \qquad {\cal 
Z}^{(r)\, 2} = \id^{\otimes N}\,.
\label{Zcalright}
\ee
\end{prop}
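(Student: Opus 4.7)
The plan is to mirror, almost verbatim, the proof of the duality proposition (\ref{duality}), substituting Lemma \ref{lemma:9} for Lemma \ref{lemma:5}. The only structural difference is that here the two ``K-conjugation'' identities (\ref{Z2rightK}) leave $K^{R}$ and $K^{L}$ invariant (up to moving $Y$ vs.\ $Y^{t}$), rather than mapping $p \mapsto n-p$, so the symmetry becomes an honest commutation rather than a duality.

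First I would establish the transformation rules of the (ungauged) monodromy matrices under conjugation by ${\cal Z}^{(r)}$,
\begin{align}
{\cal Z}^{(r)}\, T_{a}(u)\, {\cal Z}^{(r)} &= Y_{a}(u)\, T_{a}(u)\, Y_{a}(u) \,, \non \\
{\cal Z}^{(r)}\, \widehat{T}_{a}(u)\, {\cal Z}^{(r)} &= Y^{t}_{a}(u)\, \widehat{T}_{a}(u)\, Y^{t}_{a}(u) \,.
\label{ZTtrans}
\end{align}
These follow site by site from (\ref{Z2rightR}): writing ${\cal Z}^{(r)}=Z^{(r)}_{1}\cdots Z^{(r)}_{N}$, each $Z^{(r)}_{j}$ commutes with every $R_{ak}$ for $k\neq j$; the local conjugation of $R_{aj}(u)$ produces an adjacent pair $Y_{a}(u)\, R_{aj}(u)\, Y_{a}(u)$ (respectively $Y^{t}_{a}(u)\, R_{aj}(u)\, Y^{t}_{a}(u)$ for $\widehat{T}_{a}$), and since $Y_{a}(u)^{2}=\id=Y^{t}_{a}(u)^{2}$ the interior factors telescope, leaving a single $Y_{a}$ (or $Y^{t}_{a}$) at each end.

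Next I would conjugate the transfer matrix (\ref{transfer}) by ${\cal Z}^{(r)}$. Since ${\cal Z}^{(r)}$ acts only on the quantum space, it commutes with $K^{L}_{a}(u,p)$ and $K^{R}_{a}(u,p)$, and inserting ${\cal Z}^{(r)\, 2}=\id^{\otimes N}$ between factors gives
\[
{\cal Z}^{(r)}\, t(u,p)\, {\cal Z}^{(r)} = \tr_{a}\bigl\{ K^{L}_{a}(u,p)\, Y_{a}(u)\, T_{a}(u)\, Y_{a}(u)\, K^{R}_{a}(u,p)\, Y^{t}_{a}(u)\, \widehat{T}_{a}(u)\, Y^{t}_{a}(u)\bigr\} \,.
\]
Using cyclicity of the trace to move the trailing $Y^{t}_{a}(u)$ to the front assembles the sandwiches $Y^{t}_{a}(u)\, K^{L}_{a}(u,p)\, Y_{a}(u) = K^{L}_{a}(u,p)$ and $Y_{a}(u)\, K^{R}_{a}(u,p)\, Y^{t}_{a}(u) = K^{R}_{a}(u,p)$ from (\ref{Z2rightK}), which immediately recovers $t(u,p)$ and proves (\ref{Z2rightprop}).

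Modulo Lemma \ref{lemma:9} there is no real difficulty here: the argument is essentially the three-line manipulation sketched above. The main obstacle really lives inside Lemma \ref{lemma:9}, specifically in verifying the R-matrix identities (\ref{Z2rightR}) from Jimbo's and Kuniba's explicit forms — the proof of that is outlined in Sec.\ \ref{sec:proofz2rlemma}. The restriction $p>0$ in the statement enters only through the K-matrix invariance (\ref{Z2rightK}): for $p>0$ the first and last diagonal entries of $K^{R}(u,p)$ are $e^{-u}$ and $e^{u}$, and the $e^{\pm u}$ entries in $Y(u)$ compensate exactly when $Y(u)$ swaps them; for $p=0$ those entries are instead the middle expression $(\gamma e^{u}+1)/(\gamma + e^{u})$ and the compensation fails, explaining the hypothesis.
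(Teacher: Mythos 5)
Your proposal is correct and follows essentially the same route as the paper's proof: derive the monodromy transformation rules ${\cal Z}^{(r)} T_{a} {\cal Z}^{(r)} = Y_{a} T_{a} Y_{a}$ and ${\cal Z}^{(r)} \widehat{T}_{a} {\cal Z}^{(r)} = Y^{t}_{a} \widehat{T}_{a} Y^{t}_{a}$ from (\ref{Z2rightR}), then absorb the $Y_{a}$, $Y^{t}_{a}$ factors into the K-matrices via (\ref{Z2rightK}) and cyclicity of the auxiliary-space trace. Your added remarks on the telescoping and on why $p>0$ is needed are accurate elaborations of what the paper leaves implicit.
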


\begin{proof}
\quad We see from (\ref{Z2rightR}) that the monodromy matrices (\ref{monodromy}) transform as 
follows
\begin{align}
{\cal Z}^{(r)}\, T_{a}(u)\, {\cal Z}^{(r)} &= Y_{a}(u)\,T_{a}(u)\, 
Y_{a}(u) \,, \non \\
{\cal Z}^{(r)}\, \widehat{T}_{a}(u)\, {\cal Z}^{(r)} &= Y_{a}^{t}(u)\, 
\widehat{T}_{a}(u)\, Y_{a}^{t}(u) \,.
\label{Z2rightmonod}
\end{align}
Evaluating ${\cal Z}^{(r)}\, t(u,p)\, {\cal Z}^{(r)}$ using the definition (\ref{transfer}) of the 
transfer matrix together with (\ref{Z2rightmonod}) and  (\ref{Z2rightK}), we arrive at the 
result (\ref{Z2rightprop}).
\end{proof}

\subsubsection{Action of the ``right'' $Z_{2}$ on the QG generators}

In order to determine the action of the ``right'' $Z_{2}$ on the 
QG generators, we use a set of lemmas that are analogous to 
(\ref{lemma5}), (\ref{lemma6}) and (\ref{lemma7}), and which have similar 
proofs:

%Lemma 8->10
\begin{lemma}
\be
Y(u) = B(-u,p)\, Z^{(r)}\, B(u,p) \,, \qquad  p>0\,,
\label{lemma8}
\ee
where $Z^{(r)}$ and $Y(u)$ are given by (\ref{Z2rightmat}).
\end{lemma}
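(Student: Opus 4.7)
The plan is to mimic closely the proof of Lemma 5 (i.e., the identity $W^{t}(u) = B(u,n-p)\, U\, B(-u,p)$), which was established by a direct block-matrix multiplication. Here the block decomposition suggested by the structure of $Z^{(r)}$ and $Y(u)$ in (\ref{Z2rightmat}) is a $1 + (d-2) + 1$ partition of the row/column indices, so I will first rewrite the two diagonal gauge matrices $B(\pm u, p)$ in that same $1 + (d-2) + 1$ block form.

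Concretely, since we assume $p > 0$, the first diagonal entry of $B(u,p)$ is $e^{u/2}$ and the last is $e^{-u/2}$, while the middle $(d-2) \times (d-2)$ diagonal block, call it $\mathring{B}(u,p)$, has first $p-1$ entries $e^{u/2}$, then $d-2p$ entries $1$, then last $p-1$ entries $e^{-u/2}$. So I would write
\[
B(u,p) = \begin{pmatrix} e^{u/2} & 0 & 0 \\ 0 & \mathring{B}(u,p) & 0 \\ 0 & 0 & e^{-u/2} \end{pmatrix},
\]
and correspondingly for $B(-u,p)$ with $u \to -u$. Then I would multiply out the product $B(-u,p)\, Z^{(r)}\, B(u,p)$ using the block form of $Z^{(r)}$ in (\ref{Z2rightmat}).

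The multiplication is straightforward: the off-diagonal blocks $(1,3)$ and $(3,1)$ of $Z^{(r)}$, which are the scalars $1$, pick up a factor $e^{-u/2} \cdot e^{-u/2} = e^{-u}$ and $e^{u/2} \cdot e^{u/2} = e^{u}$ from the outer gauge matrices, respectively, producing exactly the $(1,d)$ and $(d,1)$ entries of $Y(u)$. The central $(d-2) \times (d-2)$ block becomes $\mathring{B}(-u,p)\, \id_{(d-2)\times(d-2)}\, \mathring{B}(u,p) = \mathring{B}(0,p) = \id_{(d-2)\times(d-2)}$, using the semigroup property $B(u,p) B(v,p) = B(u+v,p)$ stated earlier in the paper. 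All remaining blocks vanish because $Z^{(r)}$ has no other nonzero entries. Comparing with (\ref{Z2rightmat}) gives $Y(u)$ exactly.

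I expect no real obstacle: this is essentially a one-line block computation once the $1 + (d-2) + 1$ decomposition of $B(u,p)$ is written down, directly parallel to the proof of Lemma 5. The only subtlety worth flagging is the hypothesis $p > 0$, which is essential so that position $1$ of $B(u,p)$ carries the weight $e^{u/2}$ (and position $d$ carries $e^{-u/2}$); for $p=0$ one would have $B(u,0) = \id$ and the identity would collapse to $Y(u) = Z^{(r)}$, which fails unless $u=0$.
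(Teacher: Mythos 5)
Your proof is correct and takes the same route the paper intends: the paper states that this lemma is ``analogous to (\ref{lemma5})\ldots and [has a] similar proof,'' i.e.\ the same direct block-matrix multiplication you carry out, here in the $1+(d-2)+1$ decomposition with the semigroup property $B(-u,p)B(u,p)=\id$ handling the middle block. Your remark on why $p>0$ is needed (so that positions $1$ and $d$ carry the weights $e^{\pm u/2}$) is also accurate.
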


%Lemma 9->11
\begin{lemma}
The gauge-transformed R-matrices (\ref{gaugeR}) for $A_{2n-1}^{(2)}$, 
$B_{n}^{(1)}$ and $D_{n}^{(1)}$ with $p>0$ obey
\be
Z^{(r)}_{1}\, \tilde{R}_{12}(u,p)\, Z^{(r)}_{1} = Z^{(r)}_{2}\, 
\tilde{R}_{12}(u,p)\, Z^{(r)}_{2}\,.
\label{Z2righttildeR}
\ee
\end{lemma}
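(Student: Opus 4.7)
The proof plan mirrors that of Lemma \ref{lemma6} (equation (\ref{lemma6proof})) in the duality section, with the duality data $(U, W(u))$ replaced by the ``right'' $Z_{2}$ data $(Z^{(r)}, Y(u))$. First I would expand the LHS using the second form of the gauge-transformed R-matrix in (\ref{gaugeR}), writing $Z^{(r)}_{1}\, \tilde{R}_{12}(u,p)\, Z^{(r)}_{1} = Z^{(r)}_{1}\, B_{2}(-u,p)\, R_{12}(u)\, B_{2}(u,p)\, Z^{(r)}_{1}$. Since $B_{2}(\pm u,p)$ acts only on the second tensor factor, it commutes through $Z^{(r)}_{1}$ and can be pulled to the outside.

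Next, I would apply the first identity of Lemma \ref{lemma:9} (\ref{Z2rightR}) to trade $Z^{(r)}_{1}\, R_{12}(u)\, Z^{(r)}_{1}$ for $Y^{t}_{2}(u)\, R_{12}(u)\, Y^{t}_{2}(u)$, producing $B_{2}(-u,p)\, Y^{t}_{2}(u)\, R_{12}(u)\, Y^{t}_{2}(u)\, B_{2}(u,p)$. The final step is an algebraic rearrangement on the second space. Starting from Lemma \ref{lemma8}, $Y(u) = B(-u,p)\, Z^{(r)}\, B(u,p)$, and taking transposes (using that $B(u,p)$ is diagonal and $Z^{(r)}$ is symmetric), I obtain $Y^{t}(u) = B(u,p)\, Z^{(r)}\, B(-u,p)$. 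Combined with $B(u,p)\, B(-u,p) = \id$, this gives the two telescoping identities $B(-u,p)\, Y^{t}(u) = Z^{(r)}\, B(-u,p)$ and $Y^{t}(u)\, B(u,p) = B(u,p)\, Z^{(r)}$. Substituting these on the second space collapses the expression to $Z^{(r)}_{2}\, B_{2}(-u,p)\, R_{12}(u)\, B_{2}(u,p)\, Z^{(r)}_{2}$, which by (\ref{gaugeR}) is exactly $Z^{(r)}_{2}\, \tilde{R}_{12}(u,p)\, Z^{(r)}_{2}$.

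The argument is entirely parallel to (\ref{lemma6proof}), so I do not anticipate a genuine obstacle; the closest thing to a subtle point is verifying that the transposed form of Lemma \ref{lemma8} telescopes correctly with the outer $B_{2}$ factors, which works because $B(u,p)$ and $B(-u,p)$ are mutual inverses and $Z^{(r)}$ is a symmetric involution. The hypothesis $p>0$ is inherited from Lemma \ref{lemma8}: it is what guarantees that $Z^{(r)}$ permutes two entries of $B(u,p)$ lying in the nontrivial $e^{\pm u/2}$ blocks rather than in the inert middle block, so the factorization $Y(u) = B(-u,p)\, Z^{(r)}\, B(u,p)$ actually holds.
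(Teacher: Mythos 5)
Your proposal is correct and follows exactly the route the paper intends: the paper states that this lemma is "analogous to (\ref{lemma5}), (\ref{lemma6}) and (\ref{lemma7})" with "similar proofs," and your argument is precisely the adaptation of (\ref{lemma6proof}) with $(U,W(u))$ replaced by $(Z^{(r)},Y(u))$, the telescoping identities $B(-u,p)\,Y^{t}(u)=Z^{(r)}\,B(-u,p)$ and $Y^{t}(u)\,B(u,p)=B(u,p)\,Z^{(r)}$ following correctly from Lemma (\ref{lemma8}) since $B(u,p)$ is diagonal and $Z^{(r)}$ is a symmetric involution. Your remark on why $p>0$ is needed is also accurate.
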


%Lemma 10->12
\begin{lemma}
The gauge-transformed monodromy matrices (\ref{monodromygauge}) for 
$A_{2n-1}^{(2)}$, $B_{n}^{(1)}$ and $D_{n}^{(1)}$ with $p>0$ 
transform under the ``right'' $Z_{2}$ as 
\be
{\cal Z}^{(r)}\, \tilde{T}_{a}(u,p)\, {\cal Z}^{(r)} = Z_{a}^{(r)}\, 
\tilde{T}_{a}(u,p)\, Z_{a}^{(r)} \,.
\label{Z2rightmonodtilde}
\ee
\end{lemma}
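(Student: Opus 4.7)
The plan is to mimic the proof of Lemma 7 (equation (\ref{lemma7})), since the present statement is the ``right $Z_2$'' analogue of that duality result, and the preceding lemma (\ref{Z2righttildeR}) is the analogue of (\ref{lemma6}). I will begin with the definition (\ref{monodromygauge}) of $\tilde{T}_a(u,p)$ as an ordered product $\tilde{R}_{aN}(u,p)\,\tilde{R}_{a,N-1}(u,p)\cdots\tilde{R}_{a1}(u,p)$ and conjugate it by ${\cal Z}^{(r)} = Z_1^{(r)}\cdots Z_N^{(r)}$.

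The first routine step is to observe that the operators $Z_j^{(r)}$ act on distinct quantum sites and all commute with one another, and that $Z_j^{(r)}$ commutes with $\tilde{R}_{ak}(u,p)$ whenever $j\neq k$ (different quantum spaces, disjoint from the auxiliary space $a$). Using this, I can slide the factors of $\mathcal{Z}^{(r)}$ on the left and on the right of $\tilde{T}_a(u,p)$ through the product so that each $\tilde{R}_{aj}(u,p)$ gets sandwiched between a $Z_j^{(r)}$ on each side, giving
\begin{equation}
{\cal Z}^{(r)}\,\tilde{T}_a(u,p)\,{\cal Z}^{(r)} = \bigl[Z_N^{(r)}\,\tilde{R}_{aN}(u,p)\,Z_N^{(r)}\bigr]\cdots \bigl[Z_1^{(r)}\,\tilde{R}_{a1}(u,p)\,Z_1^{(r)}\bigr].
\end{equation}

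Now I invoke the preceding lemma (\ref{Z2righttildeR}) on each bracket to replace $Z_j^{(r)}\,\tilde{R}_{aj}(u,p)\,Z_j^{(r)}$ by $Z_a^{(r)}\,\tilde{R}_{aj}(u,p)\,Z_a^{(r)}$. Since $(Z_a^{(r)})^2 = \id$ by (\ref{Z2rightmat}), the interior $Z_a^{(r)}$'s cancel in pairs between adjacent brackets, leaving a single $Z_a^{(r)}$ on the extreme left and right and the original ordered product of R-matrices in between, which is exactly $Z_a^{(r)}\,\tilde{T}_a(u,p)\,Z_a^{(r)}$.

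There is essentially no genuine obstacle; the whole argument is a mechanical transcription of the proof of Lemma 7, with (\ref{lemma6}) replaced by (\ref{Z2righttildeR}). The only point requiring mild care is the bookkeeping when sliding the $Z_j^{(r)}$'s past R-matrices on other sites, together with the use of $(Z_a^{(r)})^2=\id$ to collapse the telescoping middle factors. No properties of the K-matrices or crossing are used at this stage, so the derivation applies uniformly to $A_{2n-1}^{(2)}$, $B_n^{(1)}$ and $D_n^{(1)}$ for all $p>0$, exactly as required.
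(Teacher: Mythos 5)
Your proof is correct and follows exactly the route the paper intends: the paper proves this lemma by noting it ``follows immediately'' from the definition (\ref{monodromygauge}) and the single-site relation (\ref{Z2righttildeR}), in direct analogy with its proof of (\ref{lemma7}). Your explicit bookkeeping --- commuting each $Z_j^{(r)}$ to sandwich $\tilde{R}_{aj}(u,p)$, applying (\ref{Z2righttildeR}) factor by factor, and telescoping via $(Z_a^{(r)})^2=\id$ --- is just the spelled-out version of that same argument.
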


Finally, taking asymptotic limits of the result (\ref{Z2rightmonodtilde}), we 
obtain the sought-after result:

%Prop 5
\begin{prop}
For the cases $A_{2n-1}^{(2)}$, $B_{n}^{(1)}$ and $D_{n}^{(1)}$ with $p>0$, 
the asymptotic gauge-transformed monodromy matrices $\tilde{T}^{\pm}_{a}(p)$ transform under the ``right'' $Z_{2}$ as 
\be
{\cal Z}^{(r)}\, \tilde{T}^{\pm}_{a}(p)\, {\cal Z}^{(r)} = 
Z_{a}^{(r)}\, \tilde{T}^{\pm}_{a}(p)\, Z_{a}^{(r)} \,.
\label{prop5}
\ee
\end{prop}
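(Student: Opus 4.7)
The plan is to deduce the proposition by taking asymptotic limits of the preceding Lemma, namely the identity
\be
{\cal Z}^{(r)}\, \tilde{T}_{a}(u,p)\, {\cal Z}^{(r)} = Z_{a}^{(r)}\, \tilde{T}_{a}(u,p)\, Z_{a}^{(r)}\,,
\ee
which is already in a form essentially identical to the desired equation, except that it involves the full $u$-dependent monodromy matrix rather than its $u\to\pm\infty$ asymptotic limits. The key observation is that both ${\cal Z}^{(r)} = Z^{(r)}_{1}\cdots Z^{(r)}_{N}$ and $Z_{a}^{(r)}$ are $u$-independent matrices (they depend only on the index structure of the auxiliary/quantum spaces), so they commute with any rescaling in $u$ and with the limit operation itself.

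First I would multiply both sides of the above identity by the scalar factor $e^{\mp N u}$. From the definition $\tilde{T}_{a}(u,p) = \tilde{R}_{aN}(u,p)\,\tilde{R}_{a,N-1}(u,p)\cdots \tilde{R}_{a1}(u,p)$ together with (\ref{Rtildeasym}) and (\ref{tildeTpm}), one has
\be
\lim_{u\to\pm\infty} e^{\mp N u}\,\tilde{T}_{a}(u,p) \;=\; \tilde{R}^{\pm}_{aN}(p)\,\tilde{R}^{\pm}_{a,N-1}(p)\cdots \tilde{R}^{\pm}_{a1}(p) \;=\; \tilde{T}^{\pm}_{a}(p)\,,
\ee
since each of the $N$ R-matrix factors contributes an exponential $e^{\pm u}$ in the asymptotic limit, and these exponentials factor out as an overall scalar that the matrix operations pass through freely.

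Then I would take the limit $u \to \pm\infty$ on both sides of the rescaled identity. Because ${\cal Z}^{(r)}$ and $Z_{a}^{(r)}$ are both $u$-independent (and squared-involutive), the similarity transformation commutes with the limit, yielding
\be
{\cal Z}^{(r)}\,\tilde{T}^{\pm}_{a}(p)\,{\cal Z}^{(r)} \;=\; Z_{a}^{(r)}\,\tilde{T}^{\pm}_{a}(p)\,Z_{a}^{(r)}\,,
\ee
which is precisely the claim (\ref{prop5}). The only mild subtlety is verifying that the scaled limit is finite and nonzero — i.e.\ that $\tilde{R}^{\pm}(p)$ exists as defined in (\ref{Rtildeasym}) — but this is a property of the R-matrix already used in the construction of $\tilde{T}^{\pm}_{a}(p)$ and therefore not a new obstacle. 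There is no hard step: the work has all been done in proving Lemma (\ref{Z2righttildeR}) and the subsequent monodromy-level lemma (\ref{Z2rightmonodtilde}), from which the asymptotic statement follows by a continuity-style argument that commutes the fixed similarity transformation with the limit.
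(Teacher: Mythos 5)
Your proposal is correct and coincides with the paper's own (essentially one-line) proof, which likewise obtains (\ref{prop5}) by taking the asymptotic limit of the monodromy-level identity (\ref{Z2rightmonodtilde}). Your explicit treatment of the rescaling factor $e^{\mp N u}$ and of the $u$-independence of ${\cal Z}^{(r)}$ and $Z^{(r)}_{a}$ just spells out the details the paper leaves implicit.
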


We can read off from this result how the coproducts of 
the ``right'' QG generators transform under this $Z_{2}$ symmetry. 
In particular, we observe that
\begin{align}
Z_{a}^{(r)}\, H^{(r)}_{j}\, Z_{a}^{(r)}  &= \left\{ 
\begin{array}{rl}
    H^{(r)}_{j} & \mbox{ for }  j = 1, \ldots, p-1\,, \\
   -H^{(r)}_{p} & \mbox{ for }  j=p 
   \end{array} \right. \,, \non \\
   Z_{a}^{(r)}\, E^{\pm (r)}_{j}\, Z_{a}^{(r)} &= \left\{   
\begin{array}{ll}
     E^{\pm (r)}_{j} & \mbox{ for }  j = 1, \ldots, p-2\,, \\
     E^{\pm (r)}_{p} & \mbox{ for }  j=p-1 
    \end{array} \right. \,.
\end{align}
Hence, this transformation maps complex representations of $U_{q}(D_{p})$ to 
their conjugates.

\subsection{The ``left'' $Z_{2}$}

In order to prove the main result (\ref{Z2leftprop}), we need the 
following lemma:

%new Lemma 13
\begin{lemma}
The R-matrix for $D_{n}^{(1)}$ obeys
\be
Z^{(l)}_{1}\, R_{12}(u)\, Z^{(l)}_{1} = Z^{(l)}_{2}\, 
R_{12}(u)\, Z^{(l)}_{2}\,,
\label{Z2leftR}
\ee
where $Z^{(l)}$ is the following $2n \times 2n$ matrix
\be
Z^{(l)} = \left(\begin{array}{cccc}
\id_{(n-1) \times (n-1)} &  &  &  \\
& 0 & 1 &  \\
& 1 & 0 &  \\
& & & \id_{(n-1) \times (n-1)}
\end{array}\right)_{2n \times 2n} \,, \qquad Z^{(l)\, 2} = \id \,.
\label{Z2leftmat}
\ee
Furthermore, for $p<n$, the K-matrices (\ref{KR}) and (\ref{KL}) obey
\begin{align}
Z^{(l)}\, K^{R}(u,p)\, Z^{(l)} &=  K^{R}(u,p) \,, \non \\
Z^{(l)}\, K^{L}(u,p)\, Z^{(l)} &=  K^{L}(u,p) 
\,.
\label{Z2leftK}
\end{align}
\label{lemma:13}
\end{lemma}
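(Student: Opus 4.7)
The plan is to establish this lemma in two independent parts: the K-matrix identities (\ref{Z2leftK}), which reduce to simple bookkeeping, and the R-matrix identity (\ref{Z2leftR}), which is the main technical content.

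For the K-matrix identities, the matrix $Z^{(l)}$ defined in (\ref{Z2leftmat}) acts as the identity on all coordinates except the $n$-th and $(n+1)$-th, which it swaps. The right K-matrix $K^{R}(u,p)$ in (\ref{KR}) is diagonal, with $p$ entries equal to $e^{-u}$, then $d-2p=2n-2p$ entries equal to $\frac{\gamma e^{u}+1}{\gamma+e^{u}}$, then $p$ entries equal to $e^{u}$. The hypothesis $p<n$ guarantees that the $n$-th and $(n+1)$-th diagonal entries both lie strictly inside the middle block, and so are equal; hence $Z^{(l)}$ commutes with $K^{R}(u,p)$. Since $K^{L}(u,p)$ (\ref{KL}) is built from $K^{R}(-u-\rho,p)$ multiplied by the diagonal matrix $M$, and $M$ is invariant under the same swap because $\bar{n}=\bar{n+1}$ for $D_{n}$, the second identity follows in the same way.

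For the R-matrix identity, the claim is equivalent to saying that $Z^{(l)}\otimes Z^{(l)}$ commutes with $R_{12}(u)$, i.e.\ the $D_{n}^{(1)}$ R-matrix enjoys the $Z_{2}$ symmetry coming from the Dynkin-diagram automorphism that swaps the two ``spinor'' end-nodes while fixing all other nodes. I would verify this directly from Jimbo's explicit form of the $D_{n}^{(1)}$ R-matrix as recalled in Appendix~\ref{sec:Rmatices}. That R-matrix is a sum of terms of the types $e_{ii}\otimes e_{jj}$, $e_{ij}\otimes e_{ji}$ (for $i\ne j$, $j\ne i'$), and $e_{ij}\otimes e_{i'j'}$ (using $\alpha'=d+1-\alpha$), with coefficients depending on the indices only through the discrete comparisons $i=j$, $i<j$, $i>j$, $i=j'$, and through the weight parameters $\bar{\alpha}$. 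For $D_{n}$ one has $n'=n+1$, $(n+1)'=n$, and crucially $\bar{n}=\bar{n+1}$; all other index labels are untouched by the swap. Hence the simultaneous relabelling $n\leftrightarrow n+1$ in both tensor factors is a combinatorial symmetry of every term, which is exactly (\ref{Z2leftR}).

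The main, though mild, obstacle is checking that the ``middle'' block at positions involving $n$ and $n+1$ behaves correctly: those matrix elements carry the most intricate index-dependence in Jimbo's formula, and one must verify that under $n\leftrightarrow n+1$ the coefficients of $e_{n,n+1}\otimes e_{n+1,n}$, $e_{n+1,n}\otimes e_{n,n+1}$, $e_{n,n}\otimes e_{n+1,n+1}$, and $e_{n+1,n+1}\otimes e_{n,n}$ are paired up consistently. This reduces to the facts $\bar{n}=\bar{n+1}$ and $n'=n+1$, after which (\ref{Z2leftR}) holds on every basis element of $\mathrm{End}(\mathcal{V}^{\otimes 2})$, completing the proof.
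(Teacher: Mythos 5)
Your proposal is correct in outline and follows essentially the same route as the paper: both amount to a direct, term-by-term verification of (\ref{Z2leftR}) from the explicit form (\ref{Rmat}) of the $D_{n}^{(1)}$ R-matrix, using $n'=n+1$ and $\bar{n}=\overline{n+1}$, while the K-matrix identities (\ref{Z2leftK}) are immediate from the block structure of (\ref{KR}) and of $M$ exactly as you say. Your reformulation of (\ref{Z2leftR}) as $[Z^{(l)}\otimes Z^{(l)},\,R_{12}(u)]=0$ is a harmless repackaging of the same computation. The one place where your sketch understates the work is the claim that the pairing of the middle-block coefficients ``reduces to the facts $\bar{n}=\overline{n+1}$ and $n'=n+1$.'' For the pair $e_{n,n+1}\otimes e_{n+1,n}$ versus $e_{n+1,n}\otimes e_{n,n+1}$ the swap exchanges the $\alpha<\beta$ and $\alpha>\beta$ branches of $a_{\alpha\beta}(u)$ in (\ref{Ra}), which carry different prefactors $e^{\mp u/2}$ and $e^{\pm\kappa\eta}$ and a nonvanishing $\delta_{\alpha\beta'}$ term (since $(n+1)'=n$); the required equality $a_{n,n+1}=a_{n+1,n}$ therefore needs a genuine, if elementary, hyperbolic identity (both sides collapse to $2\sinh(2\eta)\sinh(\kappa\eta)$), and is not pure index bookkeeping. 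This is exactly one of the identities the paper lists explicitly ($a_{n,n}=a_{n+1,n+1}$, $a_{n,n+1}=a_{n+1,n}$, $a_{n,\beta}=a_{n+1,\beta}$, $a_{\beta,n}=a_{\beta,n+1}$) at the end of its proof, and the paper is also more explicit about handling the restricted sums in the $e(u)$ and $\bar{e}(u)$ terms; once you supply that check, your argument closes.
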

A proof of (\ref{Z2leftR}) is outlined in Sec. 
\ref{sec:proofz2llemma}. 

The main result concerning the ``left'' $Z_{2}$ symmetry is contained in the following proposition:

%Prop 6
\begin{prop}
For the case $D_{n}^{(1)}$ with $p<n$, the transfer matrix has the ``left'' $Z_{2}$ symmetry
\be
\left[{\cal Z}^{(l)}\,, t(u,p) \right] = 0 \,, 
\label{Z2leftprop}
\ee 
where ${\cal Z}^{(l)}$ is the quantum-space operator
\be
{\cal Z}^{(l)} = Z_{1}^{(l)}\ldots Z_{N}^{(l)} \,, \qquad {\cal Z}^{(l)\, 2} = \id^{\otimes N} \,.
\label{Zcalleft}
\ee
\end{prop}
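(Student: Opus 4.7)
The plan is to mirror the proof of the \emph{right} $Z_2$ symmetry (\ref{Z2rightprop}), with one noticeable simplification: the R-matrix identity (\ref{Z2leftR}) is symmetric between spaces~1 and~2, so there is no analogue of the intertwining matrix $Y(u)$ that appeared in (\ref{Z2rightR}). Equivalently, $Z^{(l)}\otimes Z^{(l)}$ commutes with $R_{12}(u)$, so conjugation by ${\cal Z}^{(l)}$ will behave like an ordinary tensor-product symmetry of the bulk.

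First, I would combine (\ref{Z2leftR}) with the definition (\ref{monodromy}) of the monodromy matrices to derive
\begin{align*}
{\cal Z}^{(l)}\, T_a(u)\, {\cal Z}^{(l)} &= Z_a^{(l)}\, T_a(u)\, Z_a^{(l)}\,, \\
{\cal Z}^{(l)}\, \widehat{T}_a(u)\, {\cal Z}^{(l)} &= Z_a^{(l)}\, \widehat{T}_a(u)\, Z_a^{(l)}\,.
\end{align*}
This is a telescoping argument: $Z^{(l)}_i$ is the only factor of ${\cal Z}^{(l)}$ that does not commute with $R_{ai}(u)$, so ${\cal Z}^{(l)}\, R_{ai}(u)\, {\cal Z}^{(l)} = Z^{(l)}_i\, R_{ai}(u)\, Z^{(l)}_i$, and (\ref{Z2leftR}) converts this into $Z^{(l)}_a\, R_{ai}(u)\, Z^{(l)}_a$, which can then be pulled out of the full product over $i$.

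Second, I would substitute these transformations into (\ref{transfer}). Because ${\cal Z}^{(l)}$ acts only on the quantum space, it commutes with $K_a^{R,L}(u,p)$, and after conjugation each of $T_a(u)$, $K_a^{R}(u,p)$, and $\widehat{T}_a(u)$ carries a pair of $Z_a^{(l)}$'s in the auxiliary space. The K-matrix invariance (\ref{Z2leftK}) immediately removes the pair flanking $K_a^{R}$, and then cyclicity of the auxiliary trace together with the analogous invariance of $K_a^{L}$ absorbs the outer $Z_a^{(l)}$'s, leaving $t(u,p)$.

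All of the real content resides in Lemma~\ref{lemma:13}, whose proof is deferred to Section~\ref{sec:proofz2llemma}; given that lemma, the remainder of the argument is essentially mechanical and I anticipate no genuine obstacle. The only subtlety to keep in mind is the hypothesis $p<n$, which is precisely the condition that forces the off-diagonal swap in $Z^{(l)}$ to act entirely within the constant ``middle block'' of $K^{R}(u,p)$, so that (\ref{Z2leftK}) holds — at $p=n$ that block collapses and one would expect the invariance to fail.
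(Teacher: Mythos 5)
Your proposal is correct and follows essentially the same route as the paper: derive the monodromy transformations ${\cal Z}^{(l)}\, T_a(u)\, {\cal Z}^{(l)} = Z_a^{(l)}\, T_a(u)\, Z_a^{(l)}$ (and likewise for $\widehat{T}_a$) from (\ref{Z2leftR}), then absorb the $Z_a^{(l)}$ factors using the K-matrix invariance (\ref{Z2leftK}) and cyclicity of the auxiliary trace. Your telescoping argument and your remark on why $p<n$ is needed are both accurate elaborations of steps the paper states more tersely.
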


\begin{proof}
\quad We see from (\ref{Z2leftR}) that the monodromy matrices (\ref{monodromy}) transform as follows
\begin{align}
{\cal Z}^{(l)}\, T_{a}(u)\, {\cal Z}^{(l)} &= Z_{a}^{(l)}\, 
T_{a}(u)\, Z_{a}^{(l)} \,, \non \\
{\cal Z}^{(l)}\, \widehat{T}_{a}(u)\, {\cal Z}^{(l)} &= 
Z_{a}^{(l)}\, 
\widehat{T}_{a}(u)\, Z_{a}^{(l)} \,.
\label{Z2leftmonod}
\end{align}
Evaluating ${\cal Z}^{(l)}\, t(u,p)\, {\cal Z}^{(l)}$ using the definition (\ref{transfer}) of the 
transfer matrix together with (\ref{Z2leftmonod}) and  (\ref{Z2leftK}), we arrive at the 
result (\ref{Z2leftprop}).
\end{proof}

\subsubsection{Action of the ``left'' $Z_{2}$ on the QG generators}

The gauge-transformed R-matrix (\ref{gaugeR}) for $D_{n}^{(1)}$ with $p<n$ obeys
\be
Z^{(l)}_{1}\, \tilde{R}_{12}(u,p)\, Z^{(l)}_{1} = Z^{(l)}_{2}\, 
\tilde{R}_{12}(u,p)\, Z^{(l)}_{2}\,,
\label{Z2leftRgauge}
\ee
in view of the property (\ref{Z2leftR}) and the fact that $\left[ 
Z^{(l)}\,, B(u,p) \right] = 0$ for $p<n$.
Hence, the gauge-transformed monodromy matrices (\ref{monodromygauge}) transform as 
follows
\begin{align}
{\cal Z}^{(l)}\, \tilde{T}_{a}(u,p)\, {\cal Z}^{(l)} &= Z_{a}^{(l)}\, 
\tilde{T}_{a}(u,p)\, Z_{a}^{(l)} \,, \non \\
{\cal Z}^{(l)}\, \widehat{\tilde{T}}_{a}(u,p)\, {\cal Z}^{(l)} &= 
Z_{a}^{(l)}\, 
\widehat{\tilde{T}}_{a}(u,p)\, Z_{a}^{(l)} \,.
\label{Z2leftmonodgauge}
\end{align}
Taking asymptotic limits of this result gives the following 
proposition:

%Prop 7
\begin{prop}
For the case $D_{n}^{(1)}$ with $p<n$, the asymptotic 
gauge-transformed monodromy matrices $\tilde{T}^{\pm}_{a}(p)$ 
transform under the ``left'' $Z_{2}$ as   
\be
{\cal Z}^{(l)}\, \tilde{T}^{\pm}_{a}(p)\, {\cal Z}^{(l)} = 
Z_{a}^{(l)}\, \tilde{T}^{\pm}_{a}(p)\, Z_{a}^{(l)} \,.
\label{Z2leftmonotilde}
\ee 
\end{prop}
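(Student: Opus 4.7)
The plan is to reduce this statement directly to the finite-$u$ result (\ref{Z2leftmonodgauge}) by taking the appropriate asymptotic limit, exactly parallel to how Proposition (\ref{prop3}) was obtained from Lemma (\ref{lemma7}) in the duality section.

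First, I would recall that the defining relation for the asymptotic monodromy matrix is
\be
\tilde{T}^{\pm}_{a}(p) = \lim_{u\to \pm\infty} e^{\mp N u}\, \tilde{T}_{a}(u,p) \,,
\ee
which follows by combining (\ref{Rtildeasym}) with the factored form (\ref{tildeTpm}) (the overall scalar $e^{\mp Nu}$ simply strips off the $N$ leading exponentials, one from each R-matrix factor in $\tilde{T}_{a}(u,p)$).

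Next, I would multiply both sides of the first equation of (\ref{Z2leftmonodgauge}) by the scalar $e^{\mp N u}$ and take the limit $u \to \pm\infty$. Since the quantum-space operator ${\cal Z}^{(l)}$ and the auxiliary-space matrix $Z^{(l)}_{a}$ are both $u$-independent, they commute with $\lim_{u\to\pm\infty}$, so the limit passes through to the interior. This yields precisely (\ref{Z2leftmonotilde}).

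No new obstacles arise: the essential content was already established when (\ref{Z2leftRgauge}) was derived from (\ref{Z2leftR}) together with $[Z^{(l)}, B(u,p)] = 0$ for $p<n$, and then propagated to the full monodromy matrix in (\ref{Z2leftmonodgauge}). The asymptotic version is an automatic corollary, with the only mild check being that the scalar rescaling $e^{\mp Nu}$ is compatible with the conjugation by ${\cal Z}^{(l)}$ on the left-hand side, which it is trivially (a scalar commutes with everything).
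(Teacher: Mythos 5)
Your proposal is correct and is precisely the paper's own argument: the paper simply states that the proposition follows by ``taking asymptotic limits'' of (\ref{Z2leftmonodgauge}), which is exactly the limit you perform, with the scalar $e^{\mp Nu}$ correctly accounting for the $N$ factors of $e^{\mp u}$ in (\ref{Rtildeasym}) and (\ref{tildeTpm}). Your added remarks about the $u$-independence of ${\cal Z}^{(l)}$ and $Z^{(l)}_{a}$ just make explicit what the paper leaves implicit.
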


We can read off from this result how the coproducts of 
the ``left'' QG generators transform under this $Z_{2}$ symmetry. 
In particular, we observe that
\begin{align}
Z_{a}^{(l)}\, H^{(l)}_{j}\, Z_{a}^{(l)}  &= \left\{ 
\begin{array}{rl}
    H^{(l)}_{j} & \mbox{ for }  j = 1, \ldots, n-p-1\,, \\
   -H^{(l)}_{n-p} & \mbox{ for }  j= n-p 
   \end{array} \right. \,, \non \\
   Z_{a}^{(l)}\, E^{\pm (l)}_{j}\, Z_{a}^{(l)} &= \left\{   
\begin{array}{ll}
     E^{\pm (l)}_{j} & \mbox{ for }  j = 1, \ldots, n-p-2\,, \\
     E^{\pm (l)}_{n-p} & \mbox{ for }  j= n-p-1 
    \end{array} \right. \,.
\end{align}
Hence, this transformation maps complex representations of $U_{q}(D_{n-p})$ to 
their conjugates.

\section{Degeneracies of the transfer matrix}\label{sec:degen}

The symmetries identified above can be used to understand the 
degeneracies in the spectrum of the transfer matrix. Most importantly, 
the QG symmetries of the transfer matrix (\ref{prop1}), 
summarized in Table \ref{table:symmetries}, are
directly manifested in the degeneracies of the spectrum.  Indeed, for
generic values of the anisotropy parameter $\eta$, the $N$-site Hilbert space 
${\cal V}^{\otimes N}$ can be decomposed
into a direct sum of irreducible representations of the corresponding
classical group, whose dimensions are generally equal to the
degeneracies of the eigenvalues.

For the cases $A_{2n-1}^{(2)}$, $B_{n}^{(1)}$ and $D_{n}^{(1)}$, 
the transfer matrix has an additional ``right'' $Z_{2}$ symmetry (\ref{Z2rightprop}) 
that maps complex representations of $U_{q}(D_{p})$ to their conjugates.
Moreover, for the case $D_{n}^{(1)}$, the transfer matrix also has 
a  ``left'' $Z_{2}$ symmetry (\ref{Z2leftprop}) that maps complex
representations of $U_{q}(D_{n-p})$ to their conjugates. Consequently,
the degeneracies of eigenvalues corresponding to complex 
representations are larger than expected from the decomposition of the Hilbert space. 

For the cases $C_{n}^{(1)}$ and $D_{n}^{(1)}$ with $n$ even and 
$p=\frac{n}{2}$, the transfer matrix has a self-duality symmetry (\ref{selfdual})
that maps the representations $(1, {\bf R})$ and $({\bf R}, 1)$ into 
each other, and therefore those states are degenerate.  
If $\gamma_{0}=-1$, then there is a bonus symmetry (\ref{propbonus}), (\ref{calDsymmetry}) that 
leads to additional degeneracies.

For the cases $C_{n}^{(1)}$ and $D_{n}^{(1)}$ with $n$ odd and 
$p=\frac{n\pm 1}{2}$, we also observe some higher degeneracies, 
which presumably can also be attributed to some discrete symmetries
that remain to be elucidated.

We now consider examples of each of these cases.

\subsection{$A_{2n}^{(2)}$}

For $A_{2n}^{(2)}$ and generic values of $\eta$, the degeneracies of the transfer matrix exactly 
match with the predictions from the decomposition of the Hilbert 
space based on the QG symmetry. That is, in contrast with the other cases considered below, we do not 
find any higher degeneracies. As an example, let us consider the 
case $n=5$ and $N=2$ (two sites). By direct diagonalization 
of the transfer matrix $t(u, p)$ for generic numerical values of $u$ and 
$\eta$, we find that the degeneracies are as follows:
\begin{align}
& p=0:  &  &\{1, 55, 65 \}\non \\
& p=1:  &  &\{1, 1, 3, 18, 18, 36, 44 \}\non \\
& p=2:  &  &\{1, 1, 5, 10, 21, 27, 28, 28 \}\non \\
& p=3:  &  &\{1, 1, 10, 14, 14, 21, 30, 30 \}\non \\
& p=4:  &  &\{1, 1, 3, 5, 24, 24, 27, 36 \}\non \\
& p=5:  &  &\{1, 1, 10, 10, 44, 55 \} \,.
\label{numericaldegenA2n2}
\end{align}
In other words, for $p=0$, one eigenvalue is repeated 65 times, 
another eigenvalue is repeated 55 times, and another eigenvalue 
appears only once; and similarly for other values of $p$.

On the other hand, according to Table \ref{table:symmetries}, the 
symmetry for $A_{2n}^{(2)}$ with $n=5$ 
is $U_{q}(B_{5-p}) \otimes U_{q}(C_{p})$, and the representation at each site
is ${\cal V} =
(11-2p,1) \oplus (1,2p)$. For generic values of $\eta$, the 
QG representations are the same as for the corresponding classical groups.
Performing the tensor-product decompositions here and below using LieART 
\cite{Feger:2012bs}, we obtain\footnote{We recall that $A_{1} = B_{1} 
= C_{1}$, while the $D_{n}$ series starts with $n=2$.}
\begin{align}
& p=0: B_{5} & (\mathbf{11})^{\otimes 2} &= \mathbf{1} \oplus 
\mathbf{55} \oplus \mathbf{65} \non \\
& p=1: B_{4}\otimes C_{1} & ((\mathbf{9},\mathbf{1}) \oplus 
(\mathbf{1},\mathbf{2}))^{\otimes 2} &= 
2(\mathbf{1}, \mathbf{1}) \oplus (\mathbf{1}, \mathbf{3}) \oplus 
2(\mathbf{9}, \mathbf{2}) \oplus (\mathbf{36}, \mathbf{1}) \oplus 
(\mathbf{44}, \mathbf{1})\non \\
& p=2: B_{3}\otimes C_{2} & ((\mathbf{7},\mathbf{1}) \oplus 
(\mathbf{1},\mathbf{4}))^{\otimes 
2} &= 2(\mathbf{1}, \mathbf{1}) \oplus (\mathbf{1}, \mathbf{5}) 
\oplus 2 (\mathbf{7}, \mathbf{4}) \oplus (\mathbf{1}, \mathbf{10}) 
\oplus (\mathbf{21}, \mathbf{1}) \oplus (\mathbf{27}, \mathbf{1})  \non \\
& p=3: B_{2}\otimes C_{3} & ((\mathbf{5},\mathbf{1}) \oplus 
(\mathbf{1},\mathbf{6}))^{\otimes 
2} &= 2(\mathbf{1}, \mathbf{1}) \oplus 2(\mathbf{5}, \mathbf{6}) 
\oplus (\mathbf{10}, \mathbf{1}) \oplus (\mathbf{1}, \mathbf{14}) 
\oplus (\mathbf{14}, \mathbf{1}) \oplus (\mathbf{1}, \mathbf{21}) \non \\
& p=4: B_{1}\otimes C_{4} & ((\mathbf{3},\mathbf{1}) \oplus 
(\mathbf{1},\mathbf{8}))^{\otimes 
2} &= 2(\mathbf{1}, \mathbf{1}) \oplus (\mathbf{3}, \mathbf{1}) 
\oplus (\mathbf{5}, \mathbf{1}) \oplus 2(\mathbf{3}, \mathbf{8}) 
\oplus (\mathbf{1}, \mathbf{27}) \oplus (\mathbf{1}, \mathbf{36}) \non \\
& p=5: C_{5} & (\mathbf{1} \oplus \mathbf{10})^{\otimes 2} &=  
2(\mathbf{1}) \oplus 2(\mathbf{10}) \oplus \mathbf{44} \oplus 
\mathbf{55} \,.
\label{LieARTA2n2}
\end{align}
Comparing the degeneracies (\ref{numericaldegenA2n2}) with the 
corresponding tensor-product decompositions (\ref{LieARTA2n2}), we 
see that they exactly match. We obtain similar results for other 
values of $n$ and $N$. The special cases $p=0$ and $p=n$ are discussed 
further in \cite{Ahmed:2017mqq}.

\subsection{$A_{2n-1}^{(2)}$}

For $A_{2n-1}^{(2)}$ and generic values of $\eta$, the degeneracies 
of the transfer matrix either 
match with the predictions from QG symmetry, or are 
larger due to the ``right'' $Z_{2}$ symmetry (\ref{Z2rightprop}). As an example, let us consider the 
case $n=5$ and $N=2$ (two sites). By direct diagonalization 
of the transfer matrix $t(u,p)$ for generic numerical values of $u$ and 
$\eta$, we find that the degeneracies are as follows:
\begin{align}
& p=0:  &  &\{1, 44, 55 \}\non \\
& p=2:  &  &\{1, 1, 6, 9, 14, 21, 24, 24 \}\non \\
& p=3:  &  &\{1, 1, 5, 10, 15, 20, 24, 24 \}\non \\
& p=4:  &  &\{1, 1, 3, 16, 16, 28, 35 \}\non \\
& p=5:  &  &\{1, 45, 54 \} \,.
\label{numericaldegenA2m1n2}
\end{align}
Note that we exclude the case $p=1$.

On the other hand, according to Table \ref{table:symmetries}, the 
symmetry for $A_{2n-1}^{(2)}$ with $n=5$ and $p \ne 1$
is $U_{q}(C_{5-p}) \otimes U_{q}(D_{p})$, and the representation at each site
is ${\cal V} = (10-2p,1) \oplus (1,2p)$. The tensor-product decompositions are as follows:
\begin{align}
& p=0: C_{5} & (\mathbf{10})^{\otimes 2} &= \mathbf{1} \oplus 
\mathbf{44} \oplus \mathbf{55} \non \\
& p=2: C_{3}\otimes D_{2} & ((\mathbf{6},\mathbf{1}) \oplus 
(\mathbf{1},\mathbf{4}))^{\otimes 2} &= 2(\mathbf{1}, \mathbf{1}) 
\oplus (\mathbf{1}, \mathbf{3}) 
\oplus (\mathbf{1}, \mathbf{\bar 3}) 
\oplus 2 (\mathbf{6}, \mathbf{4}) \oplus (\mathbf{1}, \mathbf{9}) 
\non \\
& & & \qquad
\oplus (\mathbf{14}, \mathbf{1}) \oplus (\mathbf{21}, \mathbf{1})  \non \\
& p=3: C_{2}\otimes D_{3} & ((\mathbf{4},\mathbf{1}) \oplus 
(\mathbf{1},\mathbf{6}))^{\otimes 
2} &= 2(\mathbf{1}, \mathbf{1}) \oplus (\mathbf{5}, \mathbf{1}) 
\oplus 2(\mathbf{4}, \mathbf{6}) \oplus (\mathbf{10}, \mathbf{1}) 
\non \\
& & & \qquad
\oplus (\mathbf{1}, \mathbf{15}) \oplus (\mathbf{1}, \mathbf{20'}) \non \\
& p=4: C_{1}\otimes D_{4} & ((\mathbf{2},\mathbf{1}) \oplus 
(\mathbf{1},\mathbf{8}_{v}))^{\otimes 
2} &= 2(\mathbf{1}, \mathbf{1}) \oplus (\mathbf{3}, \mathbf{1}) 
\oplus 2(\mathbf{2}, \mathbf{8}_{v}) \oplus (\mathbf{1}, \mathbf{28}) 
\oplus (\mathbf{1}, \mathbf{35}_{v}) \non \\
& p=5: D_{5} & (\mathbf{10})^{\otimes 2} &=  
\mathbf{1} \oplus \mathbf{45} \oplus \mathbf{54} \,.
\label{LieARTA2m1n2}
\end{align}
Comparing the degeneracies (\ref{numericaldegenA2m1n2}) with the 
corresponding tensor-product decompositions (\ref{LieARTA2m1n2}), we 
see that they match, except for $p=2$. For the latter case, the 
degeneracies are larger, due to the ``right'' $Z_{2}$ symmetry 
mapping complex representations of $D_{p}$ to their 
conjugates (here, the $\mathbf{3}$ and $\mathbf{\bar 3}$).
We obtain similar results for other 
values of $n$ and $N$. The special cases $p=0$ and $p=n$ are discussed 
further in \cite{Nepomechie:2017hgw}.

\subsection{$B_{n}^{(1)}$}

For $B_{n}^{(1)}$ and generic values of $\eta$, the degeneracies 
of the transfer matrix also either 
match with the predictions from QG symmetry, or are 
larger due to the ``right'' $Z_{2}$ symmetry (\ref{Z2rightprop}). As an example, let us consider the 
case $n=5$ and $N=2$ (two sites). By direct diagonalization 
of the transfer matrix $t(u,p)$ for generic numerical values of $u$ and 
$\eta$, we find that the degeneracies are as follows:
\begin{align}
& p=0:  &  &\{1, 55, 65 \}\non \\
& p=2:  &  &\{1, 1, 6, 9, 21, 27, 28, 28 \}\non \\
& p=3:  &  &\{1, 1, 10, 14, 15, 20, 30, 30 \}\non \\
& p=4:  &  &\{1, 1, 3, 5, 24, 24, 28, 35 \}\non \\
& p=5:  &  &\{1, 1, 10, 10, 45, 54 \} \,.
\label{numericaldegenBn1}
\end{align}
Note that we again exclude the case $p=1$.

On the other hand, according to Table \ref{table:symmetries}, the 
symmetry for $B_{n}^{(1)}$ with $n=5$ and $p \ne 1$
is $U_{q}(B_{5-p}) \otimes U_{q}(D_{p})$, and the representation at each site
is  ${\cal V} = (11-2p,1) \oplus (1,2p)$. The tensor-product decompositions are as follows:
\begin{align}
& p=0: B_{5} & (\mathbf{11})^{\otimes 2} &= \mathbf{1} \oplus 
\mathbf{55} \oplus \mathbf{65} \non \\
& p=2: B_{3}\otimes D_{2} & ((\mathbf{7},\mathbf{1}) \oplus 
(\mathbf{1},\mathbf{4}))^{\otimes 2} &= 2(\mathbf{1}, \mathbf{1}) 
\oplus (\mathbf{1}, \mathbf{3}) 
\oplus (\mathbf{1}, \mathbf{\bar 3}) 
\oplus (\mathbf{1}, \mathbf{9}) \oplus 2(\mathbf{7}, \mathbf{4}) 
\non \\
& & & \qquad
\oplus (\mathbf{21}, \mathbf{1}) \oplus (\mathbf{27}, \mathbf{1})  \non \\
& p=3: B_{2}\otimes D_{3} & ((\mathbf{5},\mathbf{1}) \oplus 
(\mathbf{1},\mathbf{6}))^{\otimes 
2} &= 2(\mathbf{1}, \mathbf{1}) \oplus 2(\mathbf{5}, \mathbf{6}) 
\oplus (\mathbf{10}, \mathbf{1}) \oplus (\mathbf{14}, \mathbf{1}) 
\non \\
& & & \qquad
\oplus (\mathbf{1}, \mathbf{15}) \oplus (\mathbf{1}, \mathbf{20'}) \non \\
& p=4: B_{1}\otimes D_{4} & ((\mathbf{3},\mathbf{1}) \oplus 
(\mathbf{1},\mathbf{8}_{v}))^{\otimes 
2} &= 2(\mathbf{1}, \mathbf{1}) \oplus (\mathbf{3}, \mathbf{1}) 
\oplus (\mathbf{5}, \mathbf{1}) \oplus 2(\mathbf{3},\mathbf{8}_{v}) \oplus (\mathbf{1}, \mathbf{28}) 
\oplus (\mathbf{1}, \mathbf{35}_{v}) \non \\
& p=5: D_{5} & (\mathbf{1} \oplus \mathbf{10})^{\otimes 2} &=  
2(\mathbf{1}) \oplus 2(\mathbf{10}) \oplus \mathbf{45} \oplus \mathbf{54} \,.
\label{LieARTBn1}
\end{align}
Comparing the degeneracies (\ref{numericaldegenBn1}) with the 
corresponding tensor-product decompositions (\ref{LieARTBn1}), we 
see that they match, except for $p=2$. For the latter case, the 
degeneracies are larger, due to the 
``right'' $Z_{2}$ symmetry mapping complex representations of $D_{p}$ to their 
conjugates (here, the $\mathbf{3}$ and $\mathbf{\bar 3}$).
We obtain similar results for other 
values of $n$ and $N$.

\subsection{$C_{n}^{(1)}$}

For $C_{n}^{(1)}$ and generic values of $\eta$, the degeneracies 
of the transfer matrix match with the predictions from QG symmetry, 
except when $n$ is even and $p=\frac{n}{2}$ (in which case there is a 
self-duality symmetry (\ref{selfdual})) or when $n$ is odd and 
$p=\frac{n\pm 1}{2}$. Moreover, 
the spectrum exhibits a $p \rightarrow n-p$ duality symmetry.

\subsubsection{Example 1: even $n$}\label{sec:Cneven}

As a first example, let us consider the 
case $n=4$ and $N=2$ (two sites). By direct diagonalization 
of the transfer matrix $t(u,p)$ for generic numerical values of $u$ and 
$\eta$, we find that the degeneracies are as follows: 
\begin{align}
& p=0:  &  &\{1, 27, 36 \}\non \\
& p=1:  &  &\{1, 1, 3, 12, 12, 14, 21 \}\non \\
& p=2:  &  &\begin{cases}
\{1, 1, 10, 16, 16, 20 \} & \mbox{for } \gamma_{0}=+1\\
\{2, 10, 20, 32 \} & \mbox{for } \gamma_{0}=-1
\end{cases}
\non \\
& p=3:  &  &\{1, 1, 3, 12, 12, 14, 21 \}\non \\
& p=4:  &  &\{1, 27, 36 \} \,.
\label{numericaldegenCn1x}
\end{align}
The fact that the degeneracies are the same for $p$ and $n-p$ is a 
consequence of the duality symmetry (\ref{duality}), (\ref{duality2}).

On the other hand, according to Table \ref{table:symmetries}, the 
symmetry for $C_{n}^{(1)}$ with $n=4$ 
is $U_{q}(C_{4-p}) \otimes U_{q}(C_{p})$, and the representation at each site
is ${\cal V} = (8-2p,1) \oplus (1,2p)$. The tensor-product decompositions are as follows:
\begin{align}
& p=0: C_{4} & (\mathbf{8})^{\otimes 2} &= \mathbf{1} \oplus 
\mathbf{27} \oplus \mathbf{36} \non \\
& p=1: C_{3}\otimes C_{1} & ((\mathbf{6},\mathbf{1}) \oplus 
(\mathbf{1},\mathbf{2}))^{\otimes 2} &= 2(\mathbf{1}, \mathbf{1}) 
\oplus (\mathbf{1}, \mathbf{3}) 
\oplus 2 (\mathbf{6}, \mathbf{2}) \oplus (\mathbf{14}, \mathbf{1}) 
\oplus (\mathbf{21}, \mathbf{1})  \non \\
& p=2: C_{2}\otimes C_{2} & ((\mathbf{4},\mathbf{1}) \oplus 
(\mathbf{1},\mathbf{4}))^{\otimes 2} &= 2(\mathbf{1}, \mathbf{1}) 
\oplus (\mathbf{5}, \mathbf{1}) \oplus (\mathbf{1}, \mathbf{5})  
\oplus 2 (\mathbf{4}, \mathbf{4}) \oplus (\mathbf{10}, \mathbf{1}) 
\oplus (\mathbf{1}, \mathbf{10})  \,.
\label{LieARTCn1x}
\end{align}
There is no need to display the tensor-product decompositions for 
$p>2$ due to the symmetry $p \rightarrow n-p$.

Comparing the degeneracies (\ref{numericaldegenCn1x}) with the 
corresponding tensor-product decompositions (\ref{LieARTCn1x}), we 
see that they match, except for $p=2$. For the latter case, the 
degeneracies are larger, due to the self-duality symmetry (\ref{selfdual}) 
for even $n$ and $p=\frac{n}{2}$, which here maps $(\mathbf{1}, 
\mathbf{5})$ to $(\mathbf{5}, \mathbf{1})$ (resulting 
in a 10-fold degeneracy), and also maps $(\mathbf{1}, 
\mathbf{10})$ to $(\mathbf{10}, \mathbf{1})$ (resulting 
in a 20-fold degeneracy). If $\gamma_{0}=-1$, then the 
bonus symmetry (\ref{propbonus}), (\ref{calDsymmetry}) implies that 
the two $ (\mathbf{4}, \mathbf{4})$ are degenerate (giving rise to a 
32-fold degeneracy), as well as the 
two $(\mathbf{1}, \mathbf{1})$ (resulting in a 2-fold degeneracy).

\subsubsection{Example 2: odd $n$}\label{sec:Cnodd}

As a second example, let us consider the 
case $n=5$ and $N=2$ (two sites). By direct diagonalization 
of the transfer matrix $t(u,p)$ for generic numerical values of $u$ and 
$\eta$, we find that the degeneracies are as follows: 
\begin{align}
& p=0:  &  &\{1, 44, 55 \}\non \\
& p=1:  &  &\{1, 1, 3, 16, 16, 27, 36\} \non \\
& p=2:  &  &\{1, 1, 5, 21, 34, 38 \}\non \\
& p=3:  &  &\{1, 1, 5, 21, 34, 38 \}\non \\
& p=4:  &  &\{1, 1, 3, 16, 16, 27, 36\} \non \\
& p=5:  &  &\{1, 44, 55 \} \,.
\label{numericaldegenCn1}
\end{align}
We see again that the degeneracies are the same for $p$ and $n-p$, as a 
consequence of the duality symmetry (\ref{duality}), (\ref{duality2}).

On the other hand, according to Table \ref{table:symmetries}, the 
symmetry for $C_{n}^{(1)}$ with $n=5$ 
is $U_{q}(C_{5-p}) \otimes U_{q}(C_{p})$, and the representation at each site
is ${\cal V} = (10-2p,1) \oplus (1,2p)$. The tensor-product decompositions are as follows:
\begin{align}
& p=0: C_{5} & (\mathbf{10})^{\otimes 2} &= \mathbf{1} \oplus 
\mathbf{44} \oplus \mathbf{55} \non \\
& p=1: C_{4}\otimes C_{1} & ((\mathbf{8},\mathbf{1}) \oplus 
(\mathbf{1},\mathbf{2}))^{\otimes 2} &= 2(\mathbf{1}, \mathbf{1}) 
\oplus (\mathbf{1}, \mathbf{3}) 
\oplus 2 (\mathbf{8}, \mathbf{2}) \oplus (\mathbf{27}, \mathbf{1}) 
\oplus (\mathbf{36}, \mathbf{1})  \non \\
& p=2: C_{3}\otimes C_{2} & ((\mathbf{6},\mathbf{1}) \oplus 
(\mathbf{1},\mathbf{4}))^{\otimes 2} &= 2(\mathbf{1}, \mathbf{1}) 
\oplus (\mathbf{1}, \mathbf{5}) 
\oplus 2 (\mathbf{6}, \mathbf{4}) \oplus (\mathbf{1}, \mathbf{10}) 
\oplus (\mathbf{14}, \mathbf{1}) \oplus (\mathbf{21}, \mathbf{1})  \,.
\label{LieARTCn1}
\end{align}
Again, there is no need to display the tensor-product decompositions for 
$p>2$ due to the symmetry $p \rightarrow n-p$.

Comparing the degeneracies (\ref{numericaldegenCn1}) with the 
corresponding tensor-product decompositions (\ref{LieARTCn1}), we 
see that they match, except for $p=2$. For the latter case, the 
degeneracies are larger: the $(\mathbf{1}, \mathbf{10})$ and one  
$(\mathbf{6}, \mathbf{4})$ are degenerate (resulting in a 34-fold 
degeneracy); and the $(\mathbf{14}, \mathbf{1})$  and the other 
$(\mathbf{6}, \mathbf{4})$ are degenerate (resulting in a 38-fold 
degeneracy). We expect that such degeneracies for odd $n$ and 
$p=\frac{n\pm 1}{2}$ can be attributed to some discrete symmetries, 
which remain to be elucidated.

\subsection{$D_{n}^{(1)}$}

For $D_{n}^{(1)}$ and generic values of $\eta$, the degeneracies 
of the transfer matrix match with the predictions from QG symmetry, 
except for the following exceptions:
when $n$ is even and $p=\frac{n}{2}$ (in which case there is a 
self-duality symmetry (\ref{selfdual})); when $n$ is odd and 
$p=\frac{n\pm 1}{2}$; and when there are additional 
degeneracies due to the
``right'' and ``left'' $Z_{2}$ symmetries (\ref{Z2rightprop}), 
(\ref{Z2leftprop}).  Moreover, 
the spectrum exhibits a $p \rightarrow n-p$ duality symmetry.

\subsubsection{Example 1: even $n$}\label{sec:Dneven}

As a first example, let us consider the 
case $n=6$ and $N=2$ (two sites). By direct diagonalization 
of the transfer matrix $t(u,p)$ for generic numerical values of $u$ and 
$\eta$, we find that the degeneracies are as follows: 
\begin{align}
& p=0:  &  &\{1, 66, 77 \}\non \\
& p=2:  &  &\{1, 1, 6, 9, 28, 32, 32, 35 \}\non \\
& p=3:  &  &\begin{cases}
\{1, 1, 30, 36, 36, 40 \}  & \mbox{for } \gamma_{0}=+1\\
\{2, 30, 40, 72 \}  & \mbox{for } \gamma_{0}=-1
\end{cases}
\non \\
& p=4:  &  &\{1, 1, 6, 9, 28, 32, 32, 35 \} \non\\
& p=6:  &  &\{1, 66, 77 \} \,.
\label{numericaldegenDn1x}
\end{align}
Note that we exclude the cases $p=1$ and $p=n-1$.
The fact that the degeneracies are the same for $p$ and $n-p$ is a 
consequence of the duality symmetry (\ref{duality}), (\ref{duality2}).

On the other hand, according to Table \ref{table:symmetries}, the 
symmetry for $D_{n}^{(1)}$ with $n=6$ and $p \ne 1, n-1$
is $U_{q}(D_{6-p}) \otimes U_{q}(D_{p})$, and the representation at each site
is ${\cal V} = (12-2p,1) \oplus (1,2p)$. The tensor-product decompositions are as follows:
\begin{align}
& p=0: D_{6} & (\mathbf{12})^{\otimes 2} &= \mathbf{1} \oplus 
\mathbf{66} \oplus \mathbf{77} \non \\
& p=2: D_{4}\otimes D_{2} & ((\mathbf{8}_{v},\mathbf{1}) \oplus 
(\mathbf{1},\mathbf{4}))^{\otimes 2} &= 2(\mathbf{1}, \mathbf{1}) 
\oplus (\mathbf{1}, \mathbf{3})  \oplus (\mathbf{1}, \mathbf{\bar 
3}) \oplus (\mathbf{1}, \mathbf{9})  \oplus 2 (\mathbf{8}_{v}, 
\mathbf{4}) \non\\
& & & \qquad
\oplus (\mathbf{28}, \mathbf{1}) \oplus (\mathbf{35}_{v}, \mathbf{1}) \non\\
& p=3: D_{3}\otimes D_{3} & ((\mathbf{6},\mathbf{1}) \oplus 
(\mathbf{1},\mathbf{6}))^{\otimes 2} &= 2(\mathbf{1}, \mathbf{1}) 
\oplus 2(\mathbf{6}, \mathbf{6})  \oplus (\mathbf{15}, \mathbf{1}) 
\oplus (\mathbf{1}, \mathbf{15})  \non\\
& & & \qquad
\oplus (\mathbf{20'}, \mathbf{1}) \oplus (\mathbf{1}, \mathbf{20'}) \,.
\label{LieARTDn1x}
\end{align}
There is no need to display the tensor-product decompositions for 
$p>3$ due to the symmetry $p \rightarrow n-p$.

Comparing the degeneracies (\ref{numericaldegenDn1x}) with the 
corresponding tensor-product decompositions (\ref{LieARTDn1x}), we 
see that they match for $p=0$. For $p=2$, the 
degeneracies are larger due to the the ``right'' $Z_{2}$ symmetry 
(\ref{Z2rightprop}), which maps $(\mathbf{1}, \mathbf{3})$ to $(\mathbf{1}, \mathbf{\bar 
3})$, and results in a 6-fold degeneracy. 

For $p=3$, the
degeneracies are larger due to the self-duality symmetry (\ref{selfdual}) 
for even $n$ and $p=\frac{n}{2}$, which maps $(\mathbf{1}, 
\mathbf{15})$ to $(\mathbf{15}, \mathbf{1})$ (resulting in a 30-fold 
degeneracy), and also maps  $(\mathbf{1}, 
\mathbf{20'})$ to $(\mathbf{20'}, \mathbf{1})$ (resulting in a 40-fold 
degeneracy).  If $\gamma_{0}=-1$, then the 
bonus symmetry (\ref{propbonus}), (\ref{calDsymmetry}) implies that 
the two $ (\mathbf{6}, \mathbf{6})$ are degenerate (giving rise to a 
72-fold degeneracy), as well as the 
two $(\mathbf{1}, \mathbf{1})$ (resulting in a 2-fold degeneracy).

\subsubsection{Example 2: odd $n$}\label{sec:Dnodd}

As a second example, let us consider the 
case $n=5$ and $N=2$ (two sites). By direct diagonalization 
of the transfer matrix $t(u,p)$ for generic numerical values of $u$ and 
$\eta$, we find that the degeneracies are as follows: 
\begin{align}
& p=0:  &  &\{1, 45, 54 \}\non \\
& p=2:  &  &\{1, 1, 6, 20, 33, 39 \}\non \\
& p=3:  &  &\{1, 1, 6, 20, 33, 39 \}\non \\
& p=5:  &  &\{1, 45, 54 \} \,.
\label{numericaldegenDn1}
\end{align}
We again exclude the cases $p=1, n-1$, and observe that 
the degeneracies are the same for $p$ and $n-p$, as a 
consequence of the duality symmetry (\ref{duality}), (\ref{duality2}).

On the other hand, according to Table \ref{table:symmetries}, the 
symmetry for $D_{n}^{(1)}$ with $n=5$ and $p \ne 1, n-1$
is $U_{q}(D_{5-p}) \otimes U_{q}(D_{p})$, and the representation at each site
is ${\cal V} = (10-2p,1) \oplus (1,2p)$. The tensor-product decompositions are as follows:
\begin{align}
& p=0: D_{5} & (\mathbf{10})^{\otimes 2} &= \mathbf{1} \oplus 
\mathbf{45} \oplus \mathbf{54} \non \\
& p=2: D_{3}\otimes D_{2} & ((\mathbf{6},\mathbf{1}) \oplus 
(\mathbf{1},\mathbf{4}))^{\otimes 2} &= 2(\mathbf{1}, \mathbf{1}) 
\oplus (\mathbf{1}, \mathbf{3}) \oplus (\mathbf{1}, \mathbf{\bar 3}) 
\oplus 2 (\mathbf{6}, \mathbf{4}) \oplus (\mathbf{1}, \mathbf{9}) \non\\
& & & \qquad
\oplus (\mathbf{15}, \mathbf{1}) \oplus (\mathbf{20'}, \mathbf{1})  \,.
\label{LieARTDn1}
\end{align}
Again, there is no need to display the tensor-product decompositions for 
$p>2$ due to the symmetry $p \rightarrow n-p$.

Comparing the degeneracies (\ref{numericaldegenDn1}) with the 
corresponding tensor-product decompositions (\ref{LieARTDn1}), we 
see that they match for $p=0$. For $p=2$, the 6-fold degeneracy
is due to ``right'' $Z_{2}$ symmetry, which maps $(\mathbf{1}, \mathbf{3})$ 
to $(\mathbf{1}, \mathbf{\bar 3})$. Moreover, the $(\mathbf{1}, 
\mathbf{9})$ and one  
$(\mathbf{6}, \mathbf{4})$ are degenerate (resulting in a 33-fold 
degeneracy); and the $(\mathbf{15}, \mathbf{1})$  and the other 
$(\mathbf{6}, \mathbf{4})$ are degenerate (resulting in a 39-fold 
degeneracy). We expect that such degeneracies for odd $n$ and 
$p=\frac{n\pm 1}{2}$ can be attributed to some discrete symmetries, 
which remain to be elucidated.

\section{Outlook}\label{sec:outlook}

Several interesting problems remain to be addressed, some of which we 
list here.

We have noted the existence of a higher degeneracy of the transfer 
matrix that occurs for the cases $C_{n}^{(1)}$ and $D_{n}^{(1)}$ with $n$ odd and 
$p=\frac{n\pm 1}{2}$, see Sections \ref{sec:Cnodd} and 
\ref{sec:Dnodd}. These degeneracies are unusual, since they result 
from the ``mixing'' of representations of {\em unequal} dimensions,
such as the $(\mathbf{1}, \mathbf{10})$ and the $(\mathbf{6}, 
\mathbf{4})$ discussed in Section \ref{sec:Cnodd}.
In contrast, the self-duality and $Z_{2}$ symmetries that we 
identified imply degeneracies of representations of {\em equal} 
dimensions, namely,  $(1, {\bf R}) \leftrightarrow ({\bf R}, 1)$ and 
${\bf R} \leftrightarrow {\bf \bar R}$, respectively. It would be interesting to 
find some discrete symmetries that could account for these unusual degeneracies.

For the R-matrices that we have considered (\ref{affinealgebras}), the
$K$-matrices (\ref{KR}) do not exhaust the possible diagonal
K-matrices.  Indeed, a few additional diagonal solutions depending on
one boundary parameter are known \cite{Malara:2004bi}.  We expect that
the corresponding transfer matrices also have some QG symmetry;
however, we leave an investigation of those cases to the future.

We have not considered here the case of the $D^{(2)}_{n+1}$ R-matrix
\cite{Jimbo:1985ua}, because a corresponding set of K-matrices
depending on an integer $p=0, 1, \ldots, n$ is not yet known.  It
would be interesting to find such a set of K-matrices, since the
corresponding transfer matrices would presumably have the QG symmetry
$U_{q}(B_{n-p}) \otimes U_{q}(B_{p})$, as well as a $p \leftrightarrow
n-p$ duality symmetry, and a self-duality symmetry for even $n$ and
$p=\frac{n}{2}$.  So far, only the special cases $p=0$ and $p=n$ have
been investigated \cite{Nepomechie:2017hgw}, based on the
$D^{(2)}_{n+1}$ K-matrices found in \cite{Martins:2000xie}.
Interestingly, these K-matrices are not diagonal, but only
block-diagonal.

Since the transfer matrix $t(u,p)$ is integrable 
(\ref{commutativity}), its eigenvalues and eigenvectors can be 
determined by Bethe ansatz. We expect that, for general values of 
$p$, the $(g^{(l)}, g^{(r)})$ Dynkin labels of the Bethe states
can be related to the numbers of Bethe roots of 
each type, as was done for $p=0$ and $p=n$ in \cite{Ahmed:2017mqq, Nepomechie:2017hgw}.
It would be interesting to understand the dependence of the Bethe 
equations on $p$; and, for the cases $C_{n}^{(1)}$ and $D_{n}^{(1)}$, 
to see how the $p \leftrightarrow n-p$ duality symmetry is manifested 
in these Bethe equations.

\section*{Acknowledgments}
We thank J. F. Gomes and M. Jimbo for helpful correspondence.
RN was supported in part by a Cooper fellowship, and AR was supported by the S\~ao Paulo
Research Foundation FAPESP under the process  \# 2017/03072-3 and \# 
2015/00025-9. AR thanks the University of Miami for its warm 
hospitality.

\appendix

\section{R-matrices}\label{sec:Rmatices}

The R-matrices are given by 
\begin{align}
R(u) &=  c(u)\sum_{\alpha\neq\alpha'}e_{\alpha\alpha}\otimes e_{\alpha\alpha}
+ b(u)\sum_{\alpha\neq\beta,\beta'}e_{\alpha\alpha}\otimes e_{\beta\beta}\nonumber\\
& +\left(\, e(u)\sum_{\alpha<\beta,\alpha\neq\beta'}
+\,\bar{e}(u)\sum_{\alpha>\beta,\alpha\neq\beta'}\right)e_{\alpha\beta}\otimes e_{\beta\alpha}
+\sum_{\alpha,\beta}a_{\alpha\beta}(u)e_{\alpha\beta}\otimes 
e_{\alpha'\beta'} \,,
\label{Rmat}
\end{align}
where $e_{\alpha\beta}$ are the elementary $d \times d$ matrices, 
with $d$ given by (\ref{defd}). Moreover,
\be
\left. \begin{array}{lll}
      c(u)&=&2 \sinh(\frac{u}{2}-2\eta)\   \\
      b(u)&=&2 \sinh(\frac{u}{2})\ \\
      e(u)&=&-2 e^{-\frac{u}{2}} \sinh (2\eta)\  \\
      \end{array}\right \} \times
\left \{ \begin{array}{lll}
       \cosh(\frac{u}{2} - \kappa\eta) & \mbox{for} & 
       A_{2n}^{(2)}\,, A_{2n-1}^{(2)} \\
\sinh(\frac{u}{2} -\kappa\eta)  & \mbox{for} & B_{n}^{(1)}\,, 
C_{n}^{(1)}\,, D_{n}^{(1)}
       \end{array} \right. \,,
\label{Raa}
\ee
\be
\bar{e}(u)=e^u e(u)\,, \non
\ee

\begin{align}
a_{\alpha \beta}(u) &=
\left\{ \begin{array}{l}
        2\sinh(\frac{u}{2})\times\left\{ \begin{array}{l}
     \cosh(\frac{u}{2} -(\kappa-2)\eta)\quad \mbox{for} \quad 
     A_{2n}^{(2)}\,, A_{2n-1}^{(2)} \\
     \sinh(\frac{u}{2} -(\kappa-2)\eta)\quad \mbox{for} \quad B_n^{(1)} \,,
        C_n^{(1)} \,, D_n^{(1)}
                                 \end{array}\right.
              \qquad \alpha =\beta, \alpha \neq \alpha'\\
              \\
       b(u) + \left\{\begin{array}{ll}
          2 \sinh(2\eta)\sinh((2n-1)\eta) & \mbox{for} \quad B_n^{(1)} \\
         -2 \sinh(2\eta)\cosh((2n+1)\eta) & \mbox{for} \quad A_{2n}^{(2)}
	 \end{array}\right.
              \qquad \alpha=\beta, \alpha=\alpha'\\
               \\
       2\sinh (2\eta) e^{\mp\frac{u}{2}}\times\left\{ \begin{array}{ll}
                      \mp\epsilon_\alpha\epsilon_\beta
                    e^{(\pm \kappa 
		    +2(\bar{\alpha}-\bar{\beta}))\eta}\sinh(\frac{u}{2})\\
		    \quad -\delta_{\alpha \beta'} \cosh(\frac{u}{2}-\kappa\eta)
& \mbox{for} \quad A_{2n}^{(2)}\,, A_{2n-1}^{(2)} \\ \non\\
                      \epsilon_\alpha\epsilon_\beta
                    e^{(\pm \kappa 
		    +2(\bar{\alpha}-\bar{\beta}))\eta}\sinh(\frac{u}{2})\\
                      \quad  -\delta_{\alpha \beta'} \sinh(\frac{u}{2}-\kappa\eta)
& \mbox{for} \quad B_n^{(1)} \,, C_n^{(1)} \,, D_n^{(1)}\\
                                  \end{array}\right.
              \quad \alpha _>^< \beta
       \end{array} \right. 
       & \\
\label{Ra}
\end{align}
where 
\be
\kappa = \left\{ \begin{array}{l l}
2n  &   \mbox{ for  } A_{2n-1}^{(2)} \\
2n+1 &  \mbox{ for  } A_{2n}^{(2)}\\
2n-1 &  \mbox{ for  } B_{n}^{(1)}\\
2n+2 &  \mbox{ for  } C_{n}^{(1)}\\
2n-2 &  \mbox{ for  } D_{n}^{(1)}
\end{array} \right. \,,
\label{kappa}
\ee
\begin{align}
\epsilon_\alpha & =  \left \{\begin{array}{rll}
                        1 & \mbox{  for   } & 1\le\alpha \le n \\
                       -1 & \mbox{  for   } & n+1 \le \alpha \le 2n
                               \end{array}\right.  & \mbox{  for  }  
			       A_{2n-1}^{(2)}\,, C_n^{(1)}
			       \non\\
\epsilon_\alpha & =  1 & \quad \mbox{  for  } A_{2n}^{(2)}\,, B_n^{(1)}\,, D_n^{(1)} 
\label{epsilonalpha}
\end{align}
\begin{align}
\bar{\alpha} &=\left\{ \begin{array}{ll}
             \alpha-\half & 1\le\alpha \le n\\
             \alpha+\half & n+1\le\alpha\le 2n
             \qquad\mbox{for}\,\,A_{2n-1}^{(2)}\,,  C_n^{(1)}
                     \end{array} \right. \non \\
\bar{\alpha} &=\left\{ \begin{array}{ll}
              \alpha+\half & 1\le\alpha<\frac{d+1}{2}\\
                    \alpha & \alpha=\frac{d+1}{2}\\
              \alpha-\half & \frac{d+1}{2}<\alpha\le d
              \qquad\mbox{for}\,\, A_{2n}^{(2)}\,, B_n^{(1)}\,, D_n^{(1)}
                     \end{array} \right.
\label{bar}		     
\end{align}
\begin{align}
        \alpha' &= d+1-\alpha  \,, \non\\
\alpha,\beta &= 1 \,, \ldots \,, d \,.
\label{prime}
\end{align}

All but one of these R-matrices are the same as in 
\cite{Jimbo:1985ua}, up to the change of variables $x=e^{u}\,, 
k=e^{2\eta}$ and an overall factor. The one exception is the R-matrix 
for $A_{2n-1}^{(2)}$, which we obtain from the $C_{n}^{(1)}$ R-matrix in \cite{Jimbo:1985ua} by
replacing $\xi=k^{2n+2}$ by $\xi=-k^{2n}$; i.e. by changing $\xi
\mapsto -\xi k^{-2}$.  It is the same as the $A_{2n-1}^{(2)}$ R-matrix
in the appendix of \cite{Kuniba:1991yd} up to some redefinitions of the anisotropy and 
spectral parameters, and an overall factor. This $A_{2n-1}^{(2)}$ R-matrix
was used in \cite{Artz:1995bm, Li:2006mv, Nepomechie:2017hgw}.

\section{$U_{q}(g^{(l)}) \otimes U_{q}(g^{(r)})$ and 
$\tilde{T}^{\pm}(p)$}\label{sec:QG}

We show here that the asymptotic gauge-transformed monodromy matrix
$\tilde{T}^{\pm}(p)$ (\ref{tildeTpm}) can be expressed in terms of
coproducts of the generators of a QG of the form $U_{q}(g^{(l)})
\otimes U_{q}(g^{(r)})$, where $g^{(l)}$ and $g^{(r)}$ are
(non-affine) simple Lie algebras of type $B$, $C$ or $D$, with rank
$n-p$ and $p$, respectively.  Specifically, the pairs of algebras
$(g^{(l)}, g^{(r)})$ are given in Table \ref{table:algebras}, where
$\hat g$ is the affine Lie algebra in the list (\ref{affinealgebras})
that is associated to the R-matrix. 
The algebras $g^{(l)} \oplus g^{(r)}$
are in fact the subalgebras of $\hat g$ obtained by removing
the $p^{th}$ node from the (extended) Dynkin diagram of $\hat g$, 
which has $n+1$ nodes.  We emphasize
that the possible values of $p$ are $0, 1, \ldots, n$; it is
understood that the ``right'' algebra
$g^{(r)}$ is absent for $p=0$, while the ``left'' algebra $g^{(l)}$ is
absent for $p=n$.

\begin{table}[h]
\centering
\begin{tabular}{|c|c|}
\hline
$\hat g$ & $(g^{(l)}, g^{(r)})$ \\   
\hline
$A_{2n}^{(2)}$ & $(B_{n-p}, C_{p})$ \\
$A_{2n-1}^{(2)}$ & $(C_{n-p}, D_{p})\quad (p \ne 1)$ \\
$B_{n}^{(1)}$ & $(B_{n-p}, D_{p}) \quad (n>1, p \ne 1)$ \\
$C_{n}^{(1)}$ & $(C_{n-p}, C_{p})$ \\
$D_{n}^{(1)}$ & $(D_{n-p}, D_{p}) \quad (n>1, p \ne 1, 
n-1)$ \\
\hline
\end{tabular}
\caption{\small Pairs of Lie algebras $(g^{(l)}, g^{(r)})$ 
corresponding to the affine Lie algebras $\hat g$, 
where $p=0, 1, \ldots, n$.}
\label{table:algebras}
\end{table}

\subsection{Generators}

We denote the generators corresponding to the simple roots of $g^{(l)}$ and 
$g^{(r)}$ by
\be
H^{(l)}_{i}(p)\,, \quad E^{\pm\, (l)}_{i}(p) \,, \qquad i = 1, \ldots, 
n-p\,,  \non
\ee
and 
\be
H^{(r)}_{i}(p)\,, \quad E^{\pm\, (r)}_{i}(p) \,, \qquad i = 1, \ldots, p \,, \non
\ee
respectively. (To lighten the notation, we shall refrain from displaying the dependence 
of these generators on $p$ when there is no ambiguity in so doing.)
The ``left'' generators satisfy the commutation relations
\begin{align}
\left[H^{(l)}_i(p)\,, H^{(l)}_j(p) \right] &= 0 \,, \non\\
\left[H^{(l)}_i(p)\,, E^{\pm\, (l)}_j(p) \right] &= \pm \alpha_i^{(j)}E^{\pm\, (l)}_j(p) \,, \non\\
\left[E^{+\, (l)}_i(p)\,, E^{-\, (l)}_j(p) \right] &=\delta_{i,j}\sum_{k=1}^{n-p}\alpha_k^{(j)}H^{(l)}_k(p) \,,
\label{leftcommutators}
\end{align}
and the ``right'' generators similarly satisfy the commutation relations 
\begin{align}
\left[H^{(r)}_i(p)\,, H^{(r)}_j(p) \right] &= 0 \,, \non\\
\left[H^{(r)}_i(p)\,, E^{\pm\, (r)}_j(p) \right] &= \pm 
\alpha_i^{(j)}E^{\pm\, (r)}_j(p) \,, \non\\
\left[E^{+\, (r)}_i(p)\,, E^{-\, (r)}_j(p) \right] 
&=\delta_{i,j}\sum_{k=1}^{p}\alpha_k^{(j)}H^{(r)}_k(p) \,.
\label{rightcommutators}
\end{align}
Moreover, the ``left'' and  
``right'' generators commute with each other
\be
\left[ H^{(l)}_{i}(p)\,, E^{\pm\, (r)}_{j}(p) \right] = 
\left[ E^{\pm\, (l)}_{i}(p)\,,  H^{(r)}_{j}(p) \right] = 
\left[ E^{\pm\, (l)}_{i}(p)\,, E^{\pm\, (r)}_{j}(p) \right] = 
\left[ E^{\pm\, (l)}_{i}(p)\,, E^{\mp\, (r)}_{j}(p) \right] = 0 \,.
\label{leftrightcommutator}
\ee
The simple roots $\{\alpha^{(1)}, 
\ldots,  \alpha^{(m)} \}$ (where $m$ is either $n-p$ or $p$) in the orthogonal basis are given by
\begin{align}
\alpha^{(j)} &=   e_{j} - e_{j+1} \,, \qquad  j =  1,\ldots, m-1\,,   
\non \\
\alpha^{(m)} &=  \left\{
\begin{array}{ll}
    e_{m} & \mbox{ for  } B_{m} \\
   2e_{m} & \mbox{ for  } C_{m} \\
   e_{m-1} + e_{m} & \mbox{ for  } D_{m}
    \end{array} \right. \,,
\label{simpleroots}
\end{align}
where $e_{j}$ are the elementary $m$-dimensional basis vectors 
$(e_{j})_{i} = \delta_{i,j}$ (i.e., $e_{1}=(1, 0, 0, 
\ldots, 0)\,, e_{2} = (0, 1, 0, \ldots, 0)$, etc.). 

In terms of the $\hat g$ generators\footnote{Note that $e_{ij}$ 
are the elementary $d \times d$ matrices introduced below 
(\ref{permutation}), where $d$ is defined in (\ref{defd}).  We see from (\ref{leftgensvectorrep}) that the 
generators in (\ref{gensp0}) with $i=1, \ldots, n$ are in fact the 
generators of $g^{(l)}$ with $p=0$; and we see from (\ref{rightgensvectorrep}) that  
$E^{\pm}_{0}$ in (\ref{gensp0}) are the $n^{th}$ generators of 
$g^{(r)}$ with $p=n$.
}
\begin{align}
    H_{i} &= e_{i,i} - e_{d+1-i,d+1-i} \,, \qquad i = 1, \ldots, n \,, \non \\
    E^{+}_{i} &= e_{i,i+1} + e_{d-i, d+1-i} \,, \qquad i = 1, \ldots, n-1 \,, \non \\
    E^{+}_{n} &= \left\{ \begin{array}{ll}	
  e_{n,n+1} + e_{d-n, d+1-n} &  \mbox{ if  } g^{(l)} = B_{n-p} \mbox{ 
  i.e., for  } A^{(2)}_{2n}\,, B^{(1)}_{n} \\
          \sqrt{2}e_{n,n+1} &  \mbox{ if  } g^{(l)} = C_{n-p} 
	  \mbox{ i.e., for  } A^{(2)}_{2n-1}\,, C^{(1)}_{n} \\
        e_{n-1,n+1} + e_{n, n+2} & \mbox{ if  } g^{(l)} = D_{n-p} 
	\mbox{ i.e., for  } D^{(1)}_{n}
    \end{array} \right. \,, \non \\
    E^{+}_{0} &=   \left\{ \begin{array}{ll}	
          \sqrt{2}e_{d,1} & \mbox{ if  } g^{(r)} = C_{p} \mbox{ i.e., for  } A^{(2)}_{2n}\,, C^{(1)}_{n} \\
        e_{d-1,1} + e_{d, 2} & \mbox{ if  } g^{(r)} = D_{p} \mbox{ 
	i.e., for  }  
	A^{(2)}_{2n-1}\,,B^{(1)}_{n}\,, D^{(1)}_{n}
    \end{array} \right. \,, \non \\
    E^{-}_{i} &= (E^{+}_{i})^{t} \,, \qquad\qquad\qquad i = 0, 1, 
    \ldots, n \,,
    \label{gensp0}
\end{align}
the ``left and ``right'' generators are given by
\begin{align}
H_{i}^{(l)}(p) &= H_{p+i} \,,  \non \\
E^{\pm\, (l)}_{i}(p) &=  E^{\pm}_{p+i} \,, \qquad i = 1, \ldots, n-p \,,
\label{leftgensvectorrep}
\end{align}
and
\begin{align}
H_{i}^{(r)}(p) &= -H_{p+1-i} \,,  \non \\
E^{\pm\, (r)}_{i}(p) &=  E^{\pm}_{p-i} \,, \qquad i = 1, \ldots, p \,,
\label{rightgensvectorrep}
\end{align}
respectively. Indeed, one can check that the commutation relations (\ref{leftcommutators}) 
-  (\ref{leftrightcommutator}) are satisfied. Note that the ``broken''
generators $E^{\pm}_{p}$ in (\ref{gensp0}) do not belong to either 
the ``left'' (\ref{leftgensvectorrep}) or ``right''  
(\ref{rightgensvectorrep}) set of generators; indeed, dropping the 
$\hat g$ generators
$E^{\pm}_{p}$ corresponds to deleting the $p^{th}$ node from the 
(extended) Dynkin diagram of $\hat g$.

\subsection{Coproducts}

We now present the coproducts for the quantum groups $U_{q}(g^{(l)})$ and 
$U_{q}(g^{(r)})$.

\subsubsection{``Left'' generators}

The coproducts for the ``left'' generators are given by
\begin{align}
\Delta(H^{(l)}_{j}) &= H^{(l)}_{j} \otimes \id + \id \otimes H^{(l)}_{j} \,, \qquad\qquad\qquad j = 
1, \ldots,  n-p \,, \non \\
\Delta(E^{\pm\, (l)}_{j}) &= E^{\pm\, (l)}_{j} \otimes e^{(\eta + i \pi)
H^{(l)}_{j} - \eta H^{(l)}_{j+1}} + e^{-(\eta + i \pi)
H^{(l)}_{j} + \eta H^{(l)}_{j+1}} \otimes E^{\pm\, (l)}_{j} \,, \qquad j 
= 1, \ldots, n-p-1 \,, \non  \\
\Delta(E^{\pm\, (l)}_{n-p}) &= \left\{
\begin{array}{ll}
E^{\pm\, (l)}_{n-p} \otimes e^{(\eta + i \pi) 
H^{(l)}_{n-p}} & \\
\qquad + e^{-(\eta + i \pi) 
H^{(l)}_{n-p} } \otimes E^{\pm\, (l)}_{n-p} &  
\mbox{ if  } g^{(l)} = B_{n-p} \mbox{ 
  i.e., for  } A^{(2)}_{2n}\,, B^{(1)}_{n} \\ \\
E^{\pm\, (l)}_{n-p} \otimes e^{2 \eta H^{(l)}_{n-p}} + e^{-2 \eta 
H^{(l)}_{n-p}} \otimes E^{\pm\, (l)}_{n-p} &  \mbox{ if  } g^{(l)} = 
C_{n-p}  \mbox{ i.e., for  } A^{(2)}_{2n-1}\,, C^{(1)}_{n} \\ \\
E^{\pm\, (l)}_{n-p} \otimes e^{ \eta 
	H^{(l)}_{n-p-1}+(\eta + i \pi) H^{(l)}_{n-p}}  & \\
\qquad 
+  e^{ - \eta H^{(l)}_{n-p-1}-(\eta + i \pi) H^{(l)}_{n-p}} \otimes E^{\pm\, (l)}_{n-p}
& \mbox{ if  } g^{(l)} = D_{n-p}  \mbox{ i.e., for  } D^{(1)}_{n}
\end{array} \right. \,.
\label{coproductleft}
\end{align}

These coproducts satisfy
\be
\left[  \Delta(H^{(l)}_{i}) \,,   \Delta(E^{\pm\, (l)}_{j}) \right] = \pm \alpha_{i}^{(j)} 
\Delta(E^{\pm\, (l)}_{j})\,,  
\label{DeltaHDeltaE}
\ee
and 
\begin{align}
& \Omega_{ij}^{(l)}\Delta(E_i^{+\, (l)})\Delta(E_j^{-\, (l)})-\Delta(E_j^{-\, 
(l)})\Delta(E_i^{+\, (l)})\Omega_{ij}^{(l)} \\
&= \begin{cases}
\delta_{i,j}\frac{\sinh\left[2\eta \sum_{k=1}^{n-p}\alpha_{k}^{(j)}\Delta(H^{(l)}_k)\right]}
{\sinh(2\eta)} & \mbox{for } A_{2n}^{(2)}\,, B_{n}^{(1)}\,, 
D_{n}^{(1)}\\ \\
\delta_{i,j}(1+\delta_{i,n-p})\frac{\sinh\left[2\eta \sum_{k=1}^{n-p}\alpha_{k}^{(j)}\Delta(H^{(l)}_k)\right]}
{\sinh(2(1+\delta_{i,n-p})\eta)} & \mbox{for } A_{2n-1}^{(2)}\,, C_{n}^{(1)}
\end{cases} \,,
\label{DeltaEDelta}
\end{align} 
where $\Omega_{ij}^{(l)}$ is given by 
\be
\Omega_{ij}^{(l)}=
\begin{cases}
\begin{cases}
e^{i \pi H^{(l)}_{\text{max}(i,j)}}\otimes\id & |i-j|=1 \\
\quad \, \quad \id\otimes\id & \mbox{ otherwise }
\end{cases} \quad  \mbox{for } A_{2n}^{(2)}\,, B_{n}^{(1)} \\ \\
\begin{cases}
e^{i \pi H^{(l)}_{\text{max}(i,j)}}\otimes\id & |i-j|=1 \mbox{ and } 1 \le 
\text{min}(i,j) \le n-p-2 \\
\quad \, \quad \id\otimes\id & \mbox{ otherwise }
\end{cases} \quad  \mbox{for }  A_{2n-1}^{(2)}\,, C_{n}^{(1)}\,, D_{n}^{(1)}
\end{cases}
\,.
\label{omegal}
\ee

\subsubsection{``Right'' generators}

The coproducts for the ``right'' generators are given by
\begin{align}
\Delta(H^{(r)}_{j}) &= H^{(r)}_{j} \otimes \id + \id \otimes H^{(r)}_{j} \,, \qquad\qquad\qquad j = 
1, \ldots,  p \,, \non \\
\Delta(E^{\pm\, (r)}_{j}) &= E^{\pm\, (r)}_{j} \otimes e^{(\eta + i \pi)
H^{(r)}_{j} - \eta H^{(r)}_{j+1}} + e^{-(\eta + i \pi)
H^{(r)}_{j} + \eta H^{(r)}_{j+1}} \otimes E^{\pm\, (r)}_{j} \,, \qquad j 
= 1, \ldots, p-1 \,, \non  \\
\Delta(E^{\pm\, (r)}_{p}) &= \left\{
\begin{array}{ll}
E^{\pm\, (r)}_{p} \otimes e^{2 \eta	H^{(r)}_{p}}  + e^{-2 \eta 
H^{(r)}_{p}} \otimes E^{\pm\, (r)}_{p} &  \mbox{ if  } g^{(r)} = 
C_{p}  \mbox{ i.e., for  } A^{(2)}_{2n}\,, C^{(1)}_{n} \\ \\
E^{\pm\, (r)}_{p} \otimes e^{(\eta +i \pi) H^{(r)}_{p-1}+\eta H^{(r)}_{p}} & \\
\quad
+  e^{ -(\eta + i \pi) H^{(r)}_{p-1}-\eta H^{(r)}_{p}} \otimes 
E^{\pm\, (r)}_{p}
& \mbox{ if  } g^{(r)} = D_{p}  \mbox{ i.e., for  }  A^{(2)}_{2n-1}\,,B^{(1)}_{n}\,, D^{(1)}_{n}
\end{array} \right. \,.
\label{coproductright}
\end{align}

These coproducts satisfy
\be
\left[  \Delta(H^{(r)}_{i}) \,,   \Delta(E^{\pm\, (r)}_{j}) \right] = \pm \alpha_{i}^{(j)} 
\Delta(E^{\pm\, (r)}_{j})\,,  
\label{DeltaHDeltaEr}
\ee
and 
\begin{align}
& \Omega_{ij}^{(r)}\Delta(E_i^{+\, (r)})\Delta(E_j^{-\, (r)})-\Delta(E_j^{-\, 
	(r)})\Delta(E_i^{+\, (r)})\Omega_{ij}^{(r)}\\
&= \begin{cases}
\delta_{i,j}\frac{\sinh\left[2\eta \sum_{k=1}^{p}\alpha_{k}^{(j)}\Delta(H^{(r)}_k)\right]}
{\sinh(2\eta)} & \mbox{for } A_{2n-1}^{(2)}, B_{n}^{(1)}\,, 
D_{n}^{(1)}\\ \\
\delta_{i,j}(1+\delta_{i,p})\frac{\sinh\left[2\eta \sum_{k=1}^{p}\alpha_{k}^{(j)}\Delta(H^{(r)}_k)\right]}
{\sinh(2(1+\delta_{i,p})\eta)} &  \mbox{for } A_{2n}^{(2)}\,, C_{n}^{(1)}
\end{cases} \,,
\label{DeltaEDeltar}
\end{align} 
where $\Omega_{ij}^{(r)}$ is given by 
\be 
\Omega_{ij}^{(r)}=
\begin{cases}
\begin{cases}
e^{i \pi H^{(r)}_{\text{max}(i,j)}}\otimes\id & |i-j|=1 \mbox{ and } 1 \le \text{min}(i,j) \le p-2  \mbox{ and } i\neq p \\
\quad \, \quad \id\otimes\id & \mbox{ otherwise }
\end{cases} & \mbox{for }  A_{2n}^{(2)}\,, C_{n}^{(1)} \,, \\ \\
\begin{cases}
e^{i \pi H^{(r)}_{\text{max}(i,j)}}\otimes\id & |i-j|=1  \mbox{ and } 1 \le \text{min}(i,j) \le p-2, \\
e^{i \pi \left(H_{i}^{(r)}+H_{j}^{(r)}\right)}\otimes \id &  |i-j|=2  
\,\,\mbox{ and } (i=p \text{ or } j=p)\\
\quad \, \quad \id\otimes\id & \mbox{ otherwise }
\end{cases}  & \mbox{for }  A_{2n-1}^{(2)}\,, B_{n}^{(1)}\,, D_{n}^{(1)}
\end{cases}
\,.
\label{omegar}
\ee

\subsection{$\tilde{T}^{\pm}(p)$}

The matrix elements of the asymptotic 
gauge-transformed monodromy matrix 
$\tilde{T}^{\pm}(p)$ (\ref{tildeTpm}) can be expressed in terms 
of the coproducts of the ``left'' and ``right'' generators introduced 
above. 
We now exhibit a set of matrix elements $\tilde{T}^{+}_{ij}(p)$ that includes all $
\Delta_{(N)}(E^{+\, (l)}_{1})\,, \ldots\,, \Delta_{(N)}(E^{+\, 
(l)}_{n-p})$ and all $\Delta_{(N)}(E^{+\, (r)}_{1})\,, \ldots\,, 
\Delta_{(N)}(E^{+\, (r)}_{p})$.

For $j\ne n$ and for all the considered affine algebras, we find that 
\begin{align}
\tilde{T}^{+}_{j+1,j}(p)=\begin{cases}
-\psi\, e^{(\eta+ i 
\pi)\Delta_{(N)}(H_{p-j}^{(r)})+\eta\Delta_{(N)}(H_{p-j+1}^{(r)})}\Delta_{(N)}(E_{p-j}^{+\,(r)})  &  j=1,...,p-1\\
\, 0 & j=p\\
\psi\,e^{(-\eta+i \pi)\Delta_{(N)}(H_{j-p}^{(l)})-\eta 
\Delta_{(N)}(H_{j-p+1}^{(l)})}\Delta_{(N)}(E_{j-p}^{+\,(l)}) & j=p+1,...,n-1
\end{cases} \,,
\label{tildetgens}
\end{align}
where 
\be 
\psi=\frac{e^{-(\kappa\, N-1)\eta}}{2^{N-1}}\sinh(2\eta) \,.
\ee
The set of matrix elements $\{ \Tilde{T}^{+}_{2,1}(p)\,, \ldots \,, 
\Tilde{T}^{+}_{n,n-1}(p)\}$ evidently contains all the generators 
except $\Delta_{(N)}(E_{p}^{+\, (r)})$ and
$\Delta_{(N)}(E_{n-p}^{+\, (l)})$.

For the $p$-th ``right'' generator $ \Delta_{(N)}(E_{p}^{+\, (r)})$ we have  
\begin{align}
\Tilde{T}_{1,\sigma(n)}^{+}(p) =\begin{cases}
0 & p=0\\
-\frac{2}{\sqrt{2}}\psi e^{\eta}\cosh(2\eta)\Delta_{(N)}(E_p^{+\,(r)})
& p=1,...,n \mbox{ for } g^{(r)}=C_{p} \\
& \quad \mbox{ i.e., for }  A^{(2)}_{2n}\,, C^{(1)}_{n} \\
\psi e^{(-\eta+i 
\pi)\Delta_{(N)}(H_{p-1}^{(r)})+\eta\Delta_{(N)}(H_{p}^{(r)})}\Delta_{(N)}(E_p^{+\,(r)}) & p=2,...,n \mbox{ for }g^{(r)}=D_{p} \\
& \quad \mbox{ i.e., for  }  A^{(2)}_{2n-1}\,,B^{(1)}_{n}\,, D^{(1)}_{n}
\end{cases}
\end{align}
where 
\be
\sigma(n)=\begin{cases}
	2n-1 & \mbox{ for } A_{2n-1}^{(2)}\,, D_n^{(1)}\\
	2n & \mbox{ for }B_n^{(1)}\,, C_n^{(1)}\\
	2n+1 & \mbox{ for } A_{2n}^{(2)}
\end{cases}.
\ee 

For the $(n-p)$-th ``left'' generator $\Delta_{(N)}(E_{n-p}^{+\, 
(l)})$ we have (for $p=0, 1, \ldots, n-1$)
\begin{align}
\Tilde{T}_{n+1,\bar{\sigma}(n)}^{+}(p) =\begin{cases}
\psi e^{(-\eta+i 
\pi)\Delta_{(N)}(H_{n-p}^{(l)})}\Delta_{(N)}(E_{n-p}^{+\,(l)}) & \mbox{ for } g^{(l)}=B_{n-p}\\
& \quad \mbox{ 
  i.e., for  } A^{(2)}_{2n}\,, B^{(1)}_{n} \\
-\frac{2}{\sqrt{2}}\psi 
e^{\eta}\cosh(2\eta)\Delta_{(N)}(E_{n-p}^{+\,(l)}) & \mbox{ for } g^{(l)}=C_{n-p}\\
& \quad \mbox{ i.e., for  } A^{(2)}_{2n-1}\,, C^{(1)}_{n} \\
-\psi e^{-\eta\Delta_{(N)}(H_{n-p-1}^{(l)})+(\eta+i 
\pi)\Delta_{(N)}(H_{n-p}^{(l)})}\Delta_{(N)}(E_{n-p}^{+\,(l)}) & 
p\neq n,n-1 \mbox{ for } g^{(l)}=D_{n-p} \\
& \quad \mbox{ 
i.e., for  }  D^{(1)}_{n}
\end{cases} \,,
\end{align}
where
\be
\bar{\sigma}(n)=\begin{cases}
	n & \mbox{ for }  A_{2n}^{(2)}\,, \,A_{2n-1}^{(2)}\,, 
	\,B_{n}^{(1)}\,, C_{n}^{(1)}\\
	n-1 & \mbox{ for }   D_{n}^{(1)}
\end{cases} \,.
\ee

Similar expressions can be found for $\tilde{T}^{-}_{ij}(p)$ in terms 
of $\Delta_{(N)}(E^{-\, (l)}_{1})\,, \ldots\,, \Delta_{(N)}(E^{-\, 
(l)}_{n-p})$ and $\Delta_{(N)}(E^{-\, (r)}_{1})\,, \ldots\,, 
\Delta_{(N)}(E^{-\, (r)}_{p})$.

\section{The Hamiltonian}\label{sec:Hamiltonian}

The transfer matrix (\ref{transfer}) contains \cite{Sklyanin:1988yz} the Hamiltonian 
${\cal H}(p) \sim t'(0,p)$. More explicitly, using the regularity properties
\begin{align}
R(0) &= \xi(0) {\cal P} \,, \non \\
K^{R}(0,p)  &=  \id \,, 
\end{align}
one obtains
\be
{\cal H}(p) = \sum_{k=1}^{N-1} h_{k,k+1} +\frac{1}{2} K^{R\, 
'}_{1}(0,p) + 
\frac{1}{\tr K^{L}(0,p)}\tr_{a} K^{L}_{a}(0,p) h_{N a}\,,
\label{Hamiltonian}
\ee
where the two-site Hamiltonian $h_{k,k+1}$ is given by
\be
h_{k,k+1} = \frac{1}{\xi(0)} {\cal P}_{k, k+1} R'_{k,k+1}(0)  \,.
\label{twosite}
\ee
The Hamiltonian is gauge invariant \cite{Mezincescu:1990uf}
\be
{\cal H}(p) = \sum_{k=1}^{N-1} \tilde{h}_{k,k+1}(p) +\frac{1}{2} \tilde{K}^{R\, 
'}_{1}(0,p) + 
\frac{1}{\tr \tilde{K}^{L}(0,p)}\tr_{a} \tilde{K}^{L}_{a}(0,p) \tilde{h}_{N a}\,,
\label{Hamiltoniangauge}
\ee
where the gauge-transformed two-site Hamiltonian is given by
\begin{align}
\tilde{h}_{k,k+1}(p) &= \frac{1}{\xi(0)} {\cal P}_{k, k+1} 
\tilde{R}'_{k,k+1}(0,p)   \non \\
&= h_{k,k+1} + B'_{k+1}(0,p) - B'_{k}(0,p)\,, 
\label{twositegauge}
\end{align}
where we have used the definition (\ref{gaugeR}) of the 
gauge-transformed R-matrix to pass to the second line.

\subsection{Special cases}

For the special case with $p=0$, the K-matrix $K^{R}(u,0)$ is 
proportional to the identity matrix (\ref{KRpzero}).  It follows\footnote{Indeed, the
second term in (\ref{Hamiltonian}) is evidently proportional to the identity matrix; moreover,
using an identity from \cite{Mezincescu:1990ui, Ahmed:2017mqq, Nepomechie:2017hgw}, one
can show that the third term in (\ref{Hamiltonian}) is also
proportional to the identity matrix.} 
that only the first term in (\ref{Hamiltonian}) contributes \cite{Mezincescu:1990ui}
\be
{\cal H}(0) = \sum_{k=1}^{N-1} h_{k,k+1}  \,.
\ee 

Similarly, for the special case with $p=n$ and $d=2n$,  
the gauge-transformed K-matrix
$\tilde{K}^{R}(u,n)$ is proportional to the identity matrix, see (\ref{mostlyones}). Hence, only the
first term in (\ref{Hamiltoniangauge}) contributes 
\be
{\cal H}(n) = \sum_{k=1}^{N-1} \tilde{h}_{k,k+1}(n) \qquad 
(d=2n) \,.
\ee 
This explains the observation in \cite{Nepomechie:2017hgw} that the 
Hamiltonian for this case is given by a sum 
of two-body terms.  A similar result holds for the special case with 
$p=n$ and $d=2n+1$ \cite{Ahmed:2017mqq}.

\section{Proofs of four lemmas}\label{sec:proofs}

We outline here proofs of Lemmas \ref{lemma:1}, \ref{lemma:5}, 
\ref{lemma:9} and \ref{lemma:13} for any value of $n$. For all of 
these proofs, it is useful to rewrite the  R-matrix \eqref{Rmat} as follows
\begin{equation}
R(u)=c(u)\, R^{(1)}+b(u)\, R^{(2)}+e(u)\, R^{(3)}+\bar{e}(u)\, 
R^{(4)}+R^{(5)}(u)\,,
\label{Rterms}
\end{equation}
where 
\begin{align}
R^{(1)} &=\sum_{\alpha\neq\alpha^{\prime}}e_{\alpha\alpha}\otimes e_{\alpha\alpha}
=\sum_{\alpha}e_{\alpha\alpha}\otimes e_{\alpha\alpha}- 
e_{n+1,n+1}\otimes e_{n+1,n+1}\, (1-\delta_{d,2n}),\label{R1}\\
R^{(2)} &=\sum_{\alpha\neq \beta,\beta^{\prime}}e_{\alpha\alpha}\otimes e_{\beta \beta}=\sum_{\alpha,\beta}e_{\alpha\alpha}\otimes e_{\beta \beta}-\sum_{\beta\neq \beta^{\prime}}e_{\beta\beta}\otimes e_{\beta\beta}-\sum_{\beta}e_{\beta^{\prime}\beta^{\prime}}\otimes e_{\beta\beta},\label{R2}\\
R^{(3)} &=\sum_{\alpha<\beta,\alpha\neq\beta'}e_{\alpha\beta}\otimes e_{\beta\alpha}=\sum_{\alpha<\beta}e_{\alpha\beta}\otimes e_{\beta\alpha}-\sum_{\beta>\frac{d+1}{2}}e_{\beta^{\prime}\beta}\otimes e_{\beta\beta^{\prime}},\label{R3}\\
R^{(4)} &=\sum_{\alpha>\beta,\alpha\neq\beta'}e_{\alpha\beta}\otimes e_{\beta\alpha}=\sum_{\alpha>\beta}e_{\alpha\beta}\otimes e_{\beta\alpha}-\sum_{\beta<\frac{d+1}{2}}e_{\beta^{\prime}\beta}\otimes e_{\beta\beta^{\prime}},\label{R4}\\
R^{(5)}(u) &=\sum_{\alpha,\beta}a_{\alpha\beta}(u)\, e_{\alpha\beta}\otimes e_{\alpha^{\prime}\beta^{\prime}}\label{R5}.
\end{align}
We follow a similar basic strategy for all the proofs: express all the matrices in terms 
of the elementary matrices $e_{ij}$ and the identity matrix $\id$, 
perform the matrix products using the identity
\begin{equation}
e_{ij}\, e_{kl}=\delta_{jk}\, e_{il} \,,
\label{eedeltae}
\end{equation}
and then effectuate the resulting Kronecker deltas. Since many terms 
are generated by this procedure, we use the software {\tt 
Mathematica} to perform the necessary algebra. Since the 
proofs are too long to present all the details, we explain 
the main steps, and point out some of the subtleties. 
We start with the simplest proof (Lemma \ref{lemma:13}), and then work our way to 
the most difficult one (Lemma \ref{lemma:1}).

\subsection{Lemma \ref{lemma:13}}\label{sec:proofz2llemma}

We wish to prove the relation
\be
Z^{(l)}_{1}\, R_{12}(u)\, Z^{(l)}_{1} = Z^{(l)}_{2}\, 
R_{12}(u)\, Z^{(l)}_{2}
\label{zlagain}
\ee
for the $D_{n}^{(1)}$ R-matrix. We begin by rewriting $Z^{(l)}$ (\ref{Z2leftmat}) as
\begin{equation}
Z^{(l)}=\mathbb{I}-e_{n,n}-e_{n+1,n+1}+e_{n,n+1}+e_{n+1,n}\,.
\label{Idminus1}
\end{equation}
The relation (\ref{zlagain}) is in fact separately satisfied by each of the terms in the expression 
\eqref{Rterms} for the R-matrix, which we now discuss in 
turn.

\subsubsection{$R^{(1)}$ and $R^{(2)}$}

Since we consider here only the $D_{n}^{(1)}$ R-matrix, here 
$d=2n$; therefore, the second term in \eqref{R1} is absent.
For $R^{(1)}$ and $R^{(2)}$, the sums in $\alpha$ and $\beta$ do not 
have any restriction of the type $\alpha<\beta$ or $\alpha>\beta$; 
hence, it is straightforward to show using \eqref{eedeltae} that
\begin{align}
Z_1^{(l)}\, R^{(1)}\, Z_1^{(l)} &=Z_2^{(l)}\, R^{(1)}\, Z_2^{(l)} \,,\non \\
Z_1^{(l)}\, R^{(2)}\, Z_1^{(l)} &=Z_2^{(l)}\, R^{(2)}\, Z_2^{(l)} \,.
\end{align}

\subsubsection{$R^{(3)}$ and $R^{(4)}$}

These terms require much more effort.  Let us start by considering
the first term in $R^{(3)}$, and calculating 
\begin{equation}
Z_1^{(l)}\left(\sum_{\alpha<\beta}e_{\alpha\beta}\otimes e_{\beta\alpha}\right)Z_1^{(l)}.
\end{equation}
Using the relation \eqref{eedeltae} we obtain an expression depending
on Kronecker deltas.  But we cannot directly effectuate these Kronecker
deltas to evaluate the sums because of the condition $ \alpha<\beta$.
We can put terms such as $ \delta_{n,\beta}\,\delta_{n+1,\alpha}$,
$\delta_{n,\beta}\, \delta_{n,\alpha} $ and $\delta_{n+1,\beta}\,
\delta_{n+1,\alpha} $ to zero, because they do not obey $ \alpha<\beta
$.  After doing this, we remain with expressions such as 
\begin{equation}
\sum_{\alpha<\beta}e_{n,\beta}\otimes e_{\beta,\alpha}\, \delta_{n,\alpha}.
\label{Zlcondition1a}
\end{equation}
Notice that we cannot simply set $\alpha=n$ in this expression.  In order to satisfy the
condition $\alpha<\beta$, if $\alpha=n$, then $\beta \in 
\{n+1,\,...\,,2n\}$.  Hence, we can rewrite
\eqref{Zlcondition1a} as
\begin{equation}
\sum_{\alpha<\beta}e_{n,\beta}\otimes e_{\beta,\alpha}\, \delta_{n,\alpha}=e_{n,n+1}\otimes e_{n+1,n}+\sum_{\beta=n+2}^{2n}e_{n,\beta}\otimes e_{\beta, n},
\label{Zlcondition1b}
\end{equation}
where we separate the term with $\beta=n+1 $ from the sum, since this helps to cancel with other terms.
For the same reason, we can rewrite
\begin{equation}
\sum_{\alpha<\beta}e_{n,\beta}\otimes e_{\beta, \alpha}\, \delta_{n+1,\alpha}=\sum_{\beta=n+2}^{2n}e_{n,\beta}\otimes e_{\beta, n+1}.
\label{Zlcondition2}
\end{equation}
Using similar logic with all of the terms, we obtain
\begin{equation}
Z_1^{(l)}\left(\sum_{\alpha<\beta}e_{\alpha\beta}\otimes e_{\beta\alpha}\right)Z_1^{(l)}-Z_2^{(l)}\left(\sum_{\alpha<\beta}e_{\alpha\beta}\otimes e_{\beta\alpha}\right)Z_2^{(l)}=e_{1+n,n}\otimes e_{1+n,n}-e_{n,1+n}\otimes e_{n,1+n}.
\label{Zlt3part1}
\end{equation}

We still must consider the contribution of the second term in $R^{(3)}$ 
\begin{equation}
Z_1^{(l)}\left(-\sum_{\beta>\frac{d+1}{2}}e_{\beta^{\prime}\beta}\otimes e_{\beta\beta^{\prime}}\right)Z_1^{(l)}.
\label{Zlcondition3a}
\end{equation}
Notice that, since $d=2n$, the condition $\beta>\frac{d+1}{2}$ is
equivalent to $\beta\ge n+1$.  Due to this condition, all terms
with $\delta_{n,\beta}$ and $\delta_{n+1,2n+1-\beta}$ must vanish.  
Taking this into account, we obtain
\begin{equation}
Z_1^{(l)}\left(-\sum_{\beta>\frac{d+1}{2}}e_{\beta^{\prime}\beta}\otimes e_{\beta\beta^{\prime}}\right)Z_1^{(l)}
-Z_2^{(l)}\left(-\sum_{\beta>\frac{d+1}{2}}e_{\beta^{\prime}\beta}\otimes e_{\beta\beta^{\prime}}\right)Z_2^{(l)}=-e_{1+n,n}\otimes e_{1+n,n}+e_{n,1+n}\otimes e_{n,1+n},
\label{Zlt3part2}
\end{equation}
which exactly cancels with \eqref{Zlt3part1}. 
We conclude that $R^{(3)}$ satisfies
\be
Z_1^{(l)}\, R^{(3)}\, Z_1^{(l)} = Z_2^{(l)}\, R^{(3)}\, Z_2^{(l)}\,.
\ee

We prove that $R^{(4)}$ satisfies
\be
Z_1^{(l)}\, R^{(4)}\, Z_1^{(l)} = Z_2^{(l)}\, R^{(4)}\, Z_2^{(l)}
\ee
using the same arguments presented for $R^{(3)}$, but considering $\alpha>\beta$ instead of $\alpha<\beta$.

\subsubsection{$R^{(5)}(u)$}

For $R^{(5)}(u)$, there are no restrictions on the sums over $\alpha$ and $\beta$; 
hence, we can directly effectuate all the Kronecker deltas. However, 
doing this is not enough to show that 
\begin{equation}
Z_1^{(l)}\left(\sum_{\alpha,\beta}a_{\alpha\beta}(u)\, e_{\alpha\beta}\otimes e_{\alpha^{\prime}\beta^{\prime}}\right)Z_1^{(l)}
-Z_2^{(l)}\left(\sum_{\alpha,\beta}a_{\alpha\beta}(u)\, e_{\alpha\beta}\otimes e_{\alpha^{\prime}\beta^{\prime}}\right)Z_2^{(l)} = 0 \,.
\end{equation}
To this end, it is useful to separate all the terms with 
$\alpha,\beta \in \{n,n+1\}$ from the sums. For example, 

\begin{align}
\sum_{\beta}a_{n,\beta}(u)\, e_{n+1,\beta}\otimes 
e_{n,\beta^{\prime}}=& a_{n,n}(u)\, e_{n+1,n}\otimes 
e_{n,n+1}+a_{n,n+1}(u)\, e_{n+1,n+1}\otimes e_{n,n}+\nonumber\\
& +\sum_{\beta=1}^{n-1}a_{n,\beta}(u)\, e_{n+1,\beta}\otimes 
e_{n,\beta^{\prime}}+\sum_{\beta=n+2}^{2n}a_{n,\beta}(u)\, e_{n+1,\beta}\otimes e_{n,\beta^{\prime}} \,.
\end{align}
By doing this, we find that all the terms without sums 
cancel. The remaining 
terms can also be seen to cancel by using the following properties of the functions $a_{\alpha\beta}(u)$ (\ref{Ra})
for $D_n^{(1)}$ 
\begin{align}
a_{n,n} &=a_{n+1,n+1},\nonumber\\
a_{n,n+1} &=a_{n+1,n},\nonumber\\
a_{n,\beta} &=a_{n+1,\beta} \quad \text{for}\quad 1\le \beta\le n-1\quad \text{and for }\quad n+2\le\beta\le 2n,\nonumber \\ 
a_{\beta,n} &=a_{\beta,n+1} \quad \text{for}\quad 1\le \beta\le n-1\quad \text{and for }\quad n+2\le\beta\le 2n.
\end{align}
We conclude that
\be
Z_1^{(l)}\, R^{(5)}(u)\, Z_1^{(l)} = Z_2^{(l)}\, R^{(5)}(u)\, 
Z_2^{(l)}\,,
\ee
which concludes the proof of (\ref{zlagain}).

\subsection{Lemma \ref{lemma:9}}\label{sec:proofz2rlemma}

We now turn to the proof of the relations
\begin{align}
Z^{(r)}_{1}\, R_{12}(u)\, Z^{(r)}_{1} &= Y^{t}_{2}(u)\, 
R_{12}(u)\, Y^{t}_{2}(u)\,,  \non \\
Z^{(r)}_{2}\, R_{12}(u)\, Z^{(r)}_{2} &= Y_{1}(u)\, 
R_{12}(u)\, Y_{1}(u)\,,
\label{Z2rightR2b}
\end{align}
for the $A_{2n-1}^{(2)}$, $B_{n}^{(1)}$ and $D_{n}^{(1)}$ R-matrices.
We begin by rewriting $Z^{(r)}$ and $Y(u)$ (\ref{Z2rightmat}) as follows
\begin{align}
& Z^{(r)}=\mathbb{I}-e_{1,1}-e_{d,d}+e_{1,d}+e_{d,1} \,,\non \\
& Y(u)=\mathbb{I}-e_{1,1}-e_{d,d}+e^{-u}e_{1,d}+e^u e_{d,1} \,.
\label{Idminus2}
\end{align}

The rest of the proof is very similar to the one for Lemma \ref{lemma:13}
(\ref{zlagain}). However, whereas in the previous case
all the terms are written in such a way that $ \alpha,\beta \in 
\{n,n+1\}$ appear explicitly and not inside the sums, here we should write all
the terms in such a way that $\alpha,\beta \in \{1,d\}$ appear explicitly.
Another difference is that now not all the terms in the expression 
(\ref{Rterms}) for the R-matrix separately satisfy the relations 
(\ref{Z2rightR2b}). Indeed, the linear combination $R^{(3)}
+e^{u}\, R^{(4)}$ satisfies these relations, but not $R^{(3)}$ and $R^{(4)}$ separately. 
Otherwise, all the intermediate strategies are analogous.  At the 
end, we must use the following properties of the functions $a_{\alpha\beta}(u)$ (\ref{Ra})
for $A_{2n-1}^{(2)}$, $B_n^{(1)}$ and $D_n^{(1)}$
\begin{align}
a_{1,1} &=a_{d,d},\non \\
a_{d,1} &=e^{2u}\, a_{1,d},\non\\
a_{d,\beta} &=e^{u}\, a_{1,\beta} \quad \text{for} \quad 2\le 
\beta\le d-1\,,\non\\
a_{\beta,d} &=e^{-u}\, a_{\beta,1}\quad \text{for} \quad 2\le 
\beta\le d-1\,.
\end{align}

\subsection{Lemma \ref{lemma:5}}\label{sec:proofdualitylemma}

We now present some details about our proof of the duality relation
\begin{align}
U_2\, R_{12}(u)\, U_2 &= W_1(u)\, R_{12}(u)\, W_1(u)  \label{duality1a}
%  \\ U_1\, R_{12}(u)\, U_1 &= W^t_2(u)\, R_{12}(u)\, W^t_2\,,
%\label{duality1b}
\end{align}
for the $C_n^{(1)}$ and $D_n^{(1)}$ R-matrices, for which $d=2n$. 
In contrast with the previous proofs (\ref{Idminus1}),  
(\ref{Idminus2}), the matrices $U$ and $W(u)$ \eqref{UWmats} cannot 
be expressed in the form $(\mathbb{I}-\text{few terms} )$. We 
rewrite these matrices instead as
\begin{equation}
U=\sum_{i=1}^{n}\left(e_{i,n+i}+e_{n+i,i}\right),
\label{newU}
\end{equation}
\noindent
and 
\begin{equation}
W(u)=\sum_{i=1}^{n}\left(e^{-\frac{u}{2}}\, 
e_{i,n+i}+e^{\frac{u}{2}}\, e_{n+i,i}\right).
\label{newW}
\end{equation}
We now proceed to analyze separately the contributions of the terms in 
the expression (\ref{Rterms}) for the R-matrix to the relation
(\ref{duality1a}).

\subsubsection{$R^{(1)}$ and $R^{(2)}$}
 
After applying the rule \eqref{eedeltae}, we must deal with the
ranges of the sums.  The ranges for the sums in \eqref{newU} and
\eqref{newW}  (from 1 to $n$) are different from the ones in \eqref{R1} 
and \eqref{R2}  (from 1 to $2n$).  We cannot effectuate the Kronecker deltas
to evaluate the sums in $R^{(1)}$ unless we split those sums into two
ranges: $1\le\alpha\le n$ and $n+1\le\alpha\le 2n$. In \eqref{duality1a}
we write the $U_2$ on the left hand side of $R^{(1)}$
with a sum in $i$, and the $U_{2}$ on the right hand side with a sum in $j$.
For the range $1\le\alpha\le n$, all the terms with $\delta_{i+n,\alpha}$ and $\delta_{j+n,\alpha}$ are zero, 
because $\alpha $ is always smaller than $n+i$.  For $n+1\le\alpha\le 2n$, 
all the terms with $\delta_{i,\alpha}$ and $\delta_{j,\alpha}$ are zero,
because max$(i)$ and max$(j)$ are $n$, while $\alpha$
is always greater or equal to $n+1$. After applying such arguments, we obtain
\begin{align}
U_2\, R^{(1)}\, U_2=\sum_{\alpha=1}^{n}e_{\alpha,\alpha}\otimes e_{\alpha+n,\alpha+n}
+\sum_{\alpha=n+1}^{2n}e_{\alpha,\alpha}\otimes e_{\alpha-n,\alpha-n} 
\,.
\label{URUt1part1}
\end{align}
By applying analogous arguments for the terms with $W_1(u)$, we find
\begin{align}
W_1(u)\, R^{(1)}\, 
W_1(u)=\sum_{\alpha=1}^{n}e_{\alpha+n,\alpha+n}\otimes 
e_{\alpha,\alpha} +\sum_{\alpha=n+1}^{2n}e_{\alpha-n,\alpha-n}\otimes e_{\alpha,\alpha}\,.
\label{URUt1part2}
\end{align}
We conclude that
\be
U_2\, R^{(1)}\, U_2 = W_1(u)\, R^{(1)}\, W_1(u)\,,
\ee
since the right-hand-sides of \eqref{URUt1part1} and 
\eqref{URUt1part2} become identical upon redefining the $\alpha$'s in 
the sums. We prove in a similar way that $R^{(2)}$ satisfies
\be
U_2\, R^{(2)}\, U_2 = W_1(u)\, R^{(2)}\, W_1(u)\,.
\ee

\subsubsection{$R^{(3)}$ and $R^{(4)}$}

The duality relation is not satisfied separately by $R^{(3)}$ and 
$R^{(4)}$, but is instead satisfied by the linear combination 
$R^{(3)}+e^u\, R^{(4)}$. That is,
\begin{equation}
U_2\left( R^{(3)}+e^u\, R^{(4)}\right) U_2=W_1(u) \left( 
R^{(3)}+e^u\, R^{(4)}\right) W_1(u) \,.
\label{URUt3t4}
\end{equation}
In order to manage the cases with $\alpha<\beta$ and $\alpha>\beta$,
we split the sums over $\alpha$ and $\beta$ into four ranges:
\begin{align}
& 1\le\alpha\le n \quad\text{and}\quad 1\le\beta\le n \,, \non\\
& 1\le\alpha\le n \quad\text{and}\quad n+1\le\beta\le 2n \,, \non\\ 
& n+1\le\alpha\le 2n \quad\text{and}\quad  1\le\beta\le n\,, \non\\  
& n+1\le\alpha\le 2n \quad\text{and}\quad n+1\le\beta\le 2n \,. 
\label{cases}
\end{align}
For each of these ranges, we put to zero terms that contain Kronecker deltas
where $\alpha$ and $\beta$ are outside of the relevant interval.
Again, at the end, it is necessary to redefine $ \alpha $ and $ \beta
$ on the sums to see that \eqref{URUt3t4} is satisfied.

\subsubsection{$R^{(5)}(u)$}

For this term we also split the sums over $\alpha$ and $\beta$ into the 
four ranges (\ref{cases}). All the other strategies are similar to 
the ones presented above, and we obtain
\be
U_2\, R^{(5)}(u)\, U_2 = W_1(u)\, R^{(5)}(u)\, W_1(u)\,.
\ee

\subsection{Lemma \ref{lemma:1} for $d=2n$}\label{sec:prooflemma1a}

In order to prove 
\be
\left[\tilde{R}_{12}^{+}(p)\,, \tilde{K}^{R}_{2}(u,p) \right] = 0 
\label{lemma1again}
\ee
for any value of $n$, we proceed in
three steps: finding an explicit expression for the 
gauge-transformed R-matrix $\tilde{R}_{12}(u,p)$, 
performing the limit $u\rightarrow\infty$ in $e^{-u}\, 
\tilde{R}_{12}(u,p)$ to obtain $\tilde{R}_{12}^{+}(p)$, and
finally evaluating the commutator. We consider here the case $d=2n$, 
leaving the case $d=2n+1$ for the following subsection.

\subsubsection{Finding $\tilde{R}_{12}(u,p)$}

In order to obtain an explicit expression for the gauge-transformed R-matrix
$\tilde{R}_{12}(u,p)$ \eqref{gaugeR}, it is useful to rewrite $B(u)$ 
\eqref{Bgauge} in terms of elementary matrices 
\begin{equation}
B(u)=e^{\frac{u}{2}}\sum_{i=1}^{p}e_{i,i}+\sum_{i=p+1}^{n}e_{i,i}
+\sum_{i=n+1}^{2n-p}e_{i,i}+e^{-\frac{u}{2}}\sum_{i=2n-p+1}^{2n}e_{i,i} \,,
\end{equation}
\noindent
for $1\le p\le n-1$.\footnote{For $p=0$ and $p=n$, 
$\tilde{K}^{R}(u,p)\propto \mathbb{I}$, so \eqref{lemma1again} is trivially satisfied.}

We now point out some useful simplifications for the contributions 
from each of the terms in the expression (\ref{Rterms}) for the R-matrix.

Since $B(u)$ is a diagonal matrix,
\begin{align}
B_1(u)\, R^{(1)}\, B_1(-u) &= R^{(1)}\,, \\
B_1(u)\, R^{(2)}\, B_1(-u) &= R^{(2)}\,.
\end{align}

Let us now consider the first term in $B_1(u)\, R^{(3)}\, B_1(-u)$, where 
$\alpha < \beta$. After applying the rule 
\eqref{eedeltae}, we obtain terms such as 
\begin{equation}
\sum_{\alpha < 
\beta}\sum_{i=2n-p+1}^{2n}\sum_{j=n+1}^{2n-p}e_{i,j}\otimes 
e_{\beta,\alpha}\, \delta_{i,\alpha}\, \delta_{j,\beta} \,,
\end{equation}
\noindent
for example. Several terms like this appear, but they are all equal 
to zero, because the $\delta $'s force $\alpha=i$ and $\beta=j$; but 
$i>j$ in this sum, which contradicts the condition $\alpha<\beta$. 
For the second term in $B_1(u)\, R^{(3)}\, B_1(-u)$, several terms 
are zero because the Kronecker deltas force $\beta$ to have values 
that are not greater than $\frac{d+1}{2}$. Similar arguments can be 
used for $B_1(u)\, R^{(4)}\, B_1(-u)$.

For $B_1(u)\, R^{(5)}(u)\, B_1(-u)$, after applying the rule
\eqref{eedeltae}, we can directly use the $\delta$'s to evaluate the
sums, because there are no restrictions on the $\alpha$'s and
$\beta$'s.  
%After this, in principle it is finished and we can sum all
% the terms and take the limit where $ u\rightarrow \infty $ in $
% e^{-u}R_{12} $.  But there is something we still can do about this
% term to make it easier to take the limit.  The point is that 
The functions $a_{i,j}(u) $ have different expressions depending on 
whether $i=j$, $i<j$ or $i>j$. For later convenience,
we separately calculate the contributions from each
of these three cases.  For example, consider the term
\begin{equation}
\sum_{i=2n-p+1}^{2n}\sum_{j=1}^{p}a_{i,j}(u)\, e_{i,j}\otimes 
e_{i^{\prime},j^{\prime}} \,.
\end{equation}
This term contributes only to $i>j$, due to the ranges in the sums 
and the fact $2n-p+1>p$.

We refrain from displaying the final result for 
$\tilde{R}_{12}(u,p)$, which is quite lengthy
even after the simplifications noted above.

\subsubsection{Performing the large-$u$ limit}

We now proceed to perform the limit $u\rightarrow\infty$ in $e^{-u}\, 
\tilde{R}_{12}(u,p)$. To this end, we need the following results
{\allowdisplaybreaks 
\begin{align}
&\lim_{u\rightarrow \infty}e^{-u}e(u)=0=\lim_{u\rightarrow \infty}e^{-\frac{u}{2}}e(u)=\lim_{u\rightarrow \infty}e^{-\frac{3u}{2}}\bar{e}(u)=\lim_{u\rightarrow \infty}e^{-2u}\bar{e}(u),\non\\
&\lim_{u\rightarrow \infty}e^{-u}a_{\alpha\beta}^{(3)}(u)=0=\lim_{u\rightarrow \infty}e^{-\frac{u}{2}}a^{(3)}_{\alpha\beta}(u)=\lim_{u\rightarrow \infty}e^{-2u}a_{\alpha\beta}^{(4)}(u)=\lim_{u\rightarrow \infty}e^{-\frac{3u}{2}}a^{(4)}_{\alpha\beta}(u),\non\\
&\mathfrak{b}\equiv\lim_{u\rightarrow \infty}e^{-u}b(u)=\frac{1}{2}e^{-\kappa \eta},\non\\
&\mathfrak{c}\equiv\lim_{u\rightarrow \infty}e^{-u}c(u)=\frac{1}{2}e^{-(\kappa+2)\eta},\non\\
&\mathfrak{e}\equiv\lim_{u\rightarrow \infty}e(u)=-e^{-\kappa \eta}\sinh(2\eta)=\lim_{u\rightarrow \infty}e^{-u}\bar{e}(u),\non\\
&\mathfrak{a}^{(1)}\equiv\lim_{u\rightarrow \infty}e^{-u}a_{\alpha\beta}^{(1)}(u)=\frac{1}{2}e^{-(\kappa-2)\eta},\non\\
&\mathfrak{a}^{(2)}\equiv\lim_{u\rightarrow \infty}e^{-u}a_{\alpha\beta}^{(2)}(u)=\frac{1}{2}e^{-\kappa\eta}\non\\
&\mathfrak{a}^{(3)}_{\alpha,\beta}\equiv\lim_{u\rightarrow \infty}a_{\alpha\beta}^{(3)}(u)=e^{-\kappa\eta}\sinh(2\eta)\left(\delta_1^2e^{2(\kappa+\bar{\alpha}-\bar{\beta})\eta}\epsilon_\alpha\epsilon_\beta-\delta_{\alpha,\beta^{\prime}}\right),\non\\
&\mathfrak{a}^{(4)}_{\alpha,\beta}\equiv\lim_{u\rightarrow 
\infty}e^{-u}a_{\alpha\beta}^{(4)}=e^{-\kappa\eta}\sinh(2\eta)\left(e^{2(\bar{\alpha}-\bar{\beta})\eta}\epsilon_\alpha\epsilon_\beta-\delta_{\alpha,\beta^{\prime}}\right)\,,
\end{align}}
where 
\begin{equation}
a_{\alpha,\beta}(u)=\begin{cases}
 a^{(1)}_{\alpha,\beta}(u)\text{ for } \alpha=\beta, \,\alpha\neq \alpha^{\prime}\\
 a^{(2)}_{\alpha,\beta}(u)\text{ for } \alpha=\beta, \,\alpha= \alpha^{\prime}\\
 a^{(3)}_{\alpha,\beta}(u)\text{ for } \alpha<\beta\\
 a^{(4)}_{\alpha,\beta}(u)\text{ for } \alpha>\beta
\end{cases} \,,
\label{a's}
\end{equation}
\noindent
and the definition of $a_{\alpha,\beta}^{(i)}(u)$ can be read off directly from (\ref{Ra}).

With the help of these results, we find that $\tilde{R}_{12}^{+}(p)$ 
(\ref{Rtildeasym}) is given, for $d=2n$ and $1 \le p \le n-1$, by 
{\allowdisplaybreaks
\begin{align}
\tilde{R}_{12}^{+}(p) = &\, \mathfrak{c}\sum_{\alpha}e_{\alpha,\alpha}\otimes e_{\alpha,\alpha}+\mathfrak{b}\sum_{\alpha \neq\beta,\beta^{\prime}}e_{\alpha,\alpha}\otimes e_{\beta,\beta}-\mathfrak{e}\left(\sum_{\beta=p+1}^{n}+\sum_{\beta=2n-p+1}^{2n}\right)e_{\beta^\prime,\beta}\otimes e_{\beta,\beta^{\prime}}\non\\
& +\mathfrak{e}\left(\sum_{\substack{\alpha,\beta=1\\ \alpha>\beta}}^{p}+\sum_{\substack{\alpha,\beta=p+1\\ \alpha>\beta}}^{n}+\sum_{\substack{\alpha,\beta=n+1\\ \alpha>\beta}}^{2n-p}+\sum_{\substack{\alpha,\beta=2n-p+1\\ \alpha>\beta}}^{2n}+\sum_{\alpha=1}^{p}\sum_{\beta=2n-p+1}^{2n}+\sum_{\alpha=n+1}^{2n-p}\sum_{\beta=p+1}^{n}\right)e_{\alpha,\beta}\otimes e_{\beta,\alpha}\non\\
&+\mathfrak{a}^{(1)}\sum_{\alpha}e_{\alpha,\alpha}\otimes e_{\alpha^{\prime},\alpha^{\prime}}+\sum_{\alpha=1}^{p}\sum_{\beta=2n+1-p}^{2n}\mathfrak{a}^{(3)}_{\alpha,\beta}\,e_{\alpha,\beta}\otimes e_{\alpha^{\prime},\beta^{\prime}}\non\\
& +\left(\sum_{\substack{\alpha,\beta=1\\ \alpha>\beta}}^{p}+\sum_{\substack{\alpha,\beta=p+1\\ \alpha>\beta}}^{n}+\sum_{\substack{\alpha,\beta=n+1\\ \alpha>\beta}}^{2n-p}+\sum_{\substack{\alpha,\beta=2n-p+1\\ \alpha>\beta}}^{2n}+\sum_{\alpha=n+1}^{2n-p}\sum_{\beta=p+1}^{n}\right)\mathfrak{a}^{(4)}_{\alpha,\beta}\,e_{\alpha,\beta}\otimes e_{\alpha^{\prime},\beta^{\prime}}.
\label{Rtildeplus}
\end{align}}

\subsubsection{Evaluating the commutator}

In order to evaluate the commutator (\ref{lemma1again}), we rewrite 
$\tilde{K}^{R}(u,p)$ \eqref{mostlyones} in terms of elementary 
matrices, and obtain 
\begin{equation}
 \tilde{K}^{R}_{2}(u,p)= \id \otimes \left[\sum_{i=1}^{p}+\left(\frac{\gamma e^u+1}{\gamma+e^u}\right)\sum_{i=p+1}^{n}
 +\left(\frac{\gamma 
 e^u+1}{\gamma+e^u}\right)\sum_{i=n+1}^{2n-p}+\sum_{i=2n-p+1}^{2n}\right]e_{i,i}  \,.
 \label{Ktilde}
\end{equation}
It is then just a matter of applying the same ideas presented above, 
and putting to zero all the terms that do not belong to the relevant range.  
In this way, one can see that each of the terms in
\eqref{Rtildeplus} commutes with \eqref{Ktilde}.

\subsection{Lemma \ref{lemma:1} for $d=2n+1$}\label{sec:prooflemma1b}

The cases where $d=2n+1$ can be analogously proved. However, it is 
more suitable to separate the ``middle'' terms in 
$B(u)$ and $\tilde{K}^{R}_{2}(u,p)$, i.e. we set
\begin{equation}
B(u)=e^{\frac{u}{2}}\sum_{i=1}^{p}e_{i,i}+\sum_{i=p+1}^{n}e_{i,i}+e_{n+1,n+1}+\sum_{i=n+2}^{2n-p+1}e_{i,i}+e^{-\frac{u}{2}}\sum_{i=2n-p+2}^{2n+1}e_{i,i}
\end{equation}
\noindent
and
\begin{equation}
\tilde{K}^{R}_2(u,p)= \id \otimes e_{n+1,n+1} + \id \otimes 
\left[\sum_{i=1}^{p} + \left(\frac{\gamma 
e^u+1}{\gamma+e^u}\right)\sum_{i=p+1}^{n}+\left(\frac{\gamma 
e^u+1}{\gamma+e^u}\right)\sum_{i=n+2}^{2n-p+1}+\sum_{i=2n-p+2}^{2n+1}\right]e_{i,i}\,,
\label{Ktilde2}
\end{equation}
\noindent
for $1\le p\le n-1$. For this case, $\tilde{R}_{12}^{+}(p)$ is given by
{\allowdisplaybreaks
\begin{align}
\tilde{R}_{12}^{+}(p) = &\, \mathfrak{c}\sum_{\alpha\neq \alpha^{\prime}}e_{\alpha,\alpha}\otimes e_{\alpha,\alpha}+\mathfrak{b}\sum_{\alpha \neq\beta,\beta^{\prime}}e_{\alpha,\alpha}\otimes e_{\beta,\beta}-\mathfrak{e}\left(\sum_{\beta=2n-p+2}^{2n+1}+\sum_{\beta=p+1}^{n}\right)e_{\beta^\prime,\beta}\otimes e_{\beta,\beta^{\prime}}\non\\
&+\mathfrak{e}\left(\sum_{\alpha=1}^{p}\sum_{\beta=2n-p+2}^{2n+1}+\sum_{\substack{\alpha,\beta=1\\ \alpha>\beta}}^{p}+\sum_{\substack{\alpha,\beta=p+1\\ \alpha>\beta}}^{n}+\sum_{\substack{\alpha,\beta=n+2\\ \alpha>\beta}}^{2n-p+1}+\sum_{\substack{\alpha,\beta=2n-p+2\\ \alpha>\beta}}^{2n+1}+\sum_{\alpha=n+2}^{2n-p+1}\sum_{\beta=p+1}^{n}\right)e_{\alpha,\beta}\otimes e_{\beta,\alpha}\non\\
&+\mathfrak{a}^{(1)}\left(\sum_{\alpha=1}^{n}+\sum_{\alpha=n+2}^{2n+1}\right)e_{\alpha,\alpha}\otimes e_{\alpha^{\prime},\alpha^{\prime}}+\sum_{\alpha=1}^{p}\sum_{\beta=2n+2-p}^{2n+1}\mathfrak{a}^{(3)}_{\alpha,\beta}\,e_{\alpha,\beta}\otimes e_{\alpha^{\prime},\beta^{\prime}}\non\\
&+\left(\sum_{\substack{\alpha,\beta=1\\ \alpha>\beta}}^{p}+\sum_{\substack{\alpha,\beta=p+1\\ \alpha>\beta}}^{n}+\sum_{\substack{\alpha,\beta=n+2\\ \alpha>\beta}}^{2n-p+1}+\sum_{\substack{\alpha,\beta=2n-p+2\\ \alpha>\beta}}^{2n+1}+\sum_{\alpha=n+2}^{2n-p+1}\sum_{\beta=p+1}^{n}\right)\mathfrak{a}^{(4)}_{\alpha,\beta}\,e_{\alpha,\beta}\otimes e_{\alpha^{\prime},\beta^{\prime}}\non\\
& +\mathfrak{a}^{(2)}e_{n+1,n+1}\otimes e_{n+1,n+1}-\mathfrak{e}\,e^{2(n+1)\eta}\sum_{\beta=p+1}^{n}e^{-2\bar{\beta}\eta}e_{n+1,\beta}\otimes e_{n+1,\beta^{\prime}}\non\\
& -\mathfrak{e}\,e^{-2(n+1)\eta}\sum_{\beta=n+2}^{2n-p+1}e^{2\bar{\beta}\eta}e_{\beta,n+1}\otimes e_{\beta^{\prime},n+1}\non\\
& +\mathfrak{e}\left(\sum_{\alpha=p+1}^{n}e_{n+1,\alpha}\otimes e_{\alpha,n+1}+\sum_{\alpha=n+2}^{2n-p+1}e_{\alpha,n+1}\otimes e_{n+1,\alpha}\right).
\label{Rtildeplus2}
\end{align}}

For $p=n$, it is suitable to write $B(u)$ and $\tilde{K}^{R}_2(u,p)$ as 
\begin{equation}
B(u)=e^{\frac{u}{2}}\sum_{i=1}^{n}e_{i,i}+e_{n+1,n+1}+\sum_{i=n+2}^{2n+1}e_{i,i}\,,
\end{equation}
\noindent
and
\begin{equation}
\tilde{K}^{R}_2(u,p)= \id\otimes\id-\id \otimes e_{n+1,n+1} +\left(\frac{\gamma 
	e^u+1}{\gamma+e^u}\right)\id \otimes e_{n+1,n+1} \,.
\label{Ktilde2b}
\end{equation}
For this case, $\tilde{R}_{12}^{+}(p)$ is given by
{\allowdisplaybreaks
\begin{align}
\tilde{R}_{12}^{+}(p) = &\, \mathfrak{c}\sum_{\alpha\neq \alpha^{\prime}}e_{\alpha,\alpha}\otimes e_{\alpha,\alpha}+\mathfrak{b}\sum_{\alpha \neq\beta,\beta^{\prime}}e_{\alpha,\alpha}\otimes e_{\beta,\beta}-\mathfrak{e}\sum_{\beta=n+2}^{2n+1}e_{\beta^{\prime},\beta}\otimes e_{\beta,\beta^{\prime}}\non\\
&+ \mathfrak{e}\left(\sum_{\alpha=1}^{n}\sum_{\beta=n+2}^{2n+1}+\sum_{\substack{\alpha,\beta=1\\ \alpha>\beta}}^{n}+\sum_{\substack{\alpha,\beta=n+2\\ \alpha>\beta}}^{2n+1}\right)e_{\alpha,\beta}\otimes e_{\beta,\alpha}\non\\
& +\mathfrak{a}^{(1)}\left(\sum_{\alpha=1}^{n}+\sum_{\alpha=n+2}^{2n+1}\right)e_{\alpha,\alpha}\otimes e_{\alpha^{\prime},\alpha^{\prime}}+\mathfrak{a}^{(2)}\,e_{n+1,n+1}\otimes e_{n+1,n+1}\non\\
&+\sum_{\alpha=1}^{n}\sum_{\beta=n+2}^{2n+1}\mathfrak{a}^{(3)}_{\alpha,\beta}\,e_{\alpha,\beta}\otimes e_{\alpha^{\prime},\beta^{\prime}}+ \left(\sum_{\substack{\alpha,\beta=1\\ \alpha>\beta}}^{n}+\sum_{\substack{\alpha,\beta=n+2\\ \alpha>\beta}}^{2n+1}\right)\mathfrak{a}^{(4)}_{\alpha,\beta}\,e_{\alpha,\beta}\otimes e_{\alpha^{\prime},\beta^{\prime}}\,.
\label{Rtildeplus3}
\end{align}}

For $p=0$, $\tilde{K}^{R}(u,p) \propto \mathbb{I}$, so \eqref{lemma1again} is trivially satisfied.
 
% \bibliographystyle{utphys}
% \bibliography{refs}

\providecommand{\href}[2]{#2}\begingroup\raggedright\endgroup

\end{document}